\newif\ifPods
\theoremstyle{definition}
\newcommand{\reinhard}[1]{\todo[inline,color=green!20]{{\bf R:} #1}}
\newcommand{\cristian}[1]{\todo[inline, color=red!20]{{\bf Cr:} #1}}
\DeclareMathOperator*{\argmax}{arg\,max}
\newcommand{\bbN}{\mathbb{N}}
\newcommand{\bbR}{\mathbb{R}}
\newcommand{\bbRgeqz}{\bbR_{\geq 0}}
\newcommand{\bbRgeqzInf}{\bbR_{\geq 0} \cup \{\infty\}}
\newcommand{\sem}[1]{{\lsem{}{#1}\rsem}}
\newcommand{\CQEval}{\textsc{CQEval}}
\newcommand{\CQNext}{\textsc{CQNext}}
\newcommand{\fhw}{\operatorname{fhw}}
\newcommand{\NP}{\textsc{NP}}
\newcommand{\PTIME}{\textsc{P}}
\newcommand{\cS}{\mathcal{S}}
\newcommand{\cV}{\mathcal{V}}
\newcommand{\cB}{\mathcal{B}}
\newcommand{\TUPLES}{\mathbb{T}_\Sigma}
\newcommand{\mucount}{\mu_{\operatorname{count}}}
\newcommand{\muw}{\mu_w}
\newcommand{\U}{\mathcal{U}}
\newcommand{\setsU}{\finite(\U)}
\newcommand{\deltasum}{\delta_{\operatorname{sum}}}
\newcommand{\deltamin}{\delta_{\operatorname{min}}}
\newcommand{\deltaW}{\delta_{\operatorname{W}}}
\newcommand{\nop}[1]{}
\newcommand{\finite}{\operatorname{finite}}
\newcommand{\maxCoverage}{\mathtt{Max\,k{-}Coverage}}
\newcommand{\IndependentSet}{\mathtt{IndependentSet}}
\newcommand{\HamiltonianPath}{\mathtt{HamiltonianPath}}
\newcommand{\Velem}{\cV_{\operatorname{elem}}}
\newcommand{\betaelem}{\beta_{\operatorname{elem}}}
\newcommand{\Vpos}{\cV_{\operatorname{pos}}}
\newcommand{\betapos}{\beta_{\operatorname{pos}}}
\newcommand{\Velemw}{\cV_{\operatorname{elem}}^w}
\newcommand{\Vposw}{\cV_{\operatorname{pos}}^w}
\newcommand{\Vr}{\cV_{r}}
\newcommand{\betar}{\beta_{r}}
\newcommand{\Vg}{\cV_{g}}
\newcommand{\betag}{\beta_{g}}
\newcommand{\VQD}{\cV_{Q,D}}
\newcommand{\betaQD}{\beta_{Q,D}}
\newcommand{\by}{\bar{y}}
\newcommand{\bz}{\bar{z}}
\theoremstyle{plain}
\newtheorem{theorem}{Theorem}[section]
\newtheorem*{theorem*}{Theorem}
\newtheorem{proposition}[theorem]{Proposition}
\newtheorem*{proposition*}{Proposition}
\newtheorem{lemma}[theorem]{Lemma}
\newtheorem*{lemma*}{Lemma}
\newtheorem*{corollary*}{Corollary}
\newtheorem*{claim*}{Claim}
\newtheorem*{conjecture*}{Conjecture}
\theoremstyle{definition}
\newtheorem{definition}[theorem]{Definition}
\newtheorem*{definition*}{Definition}
\theoremstyle{remark}
\newtheorem*{observation*}{Observation}
\newtheorem{example}[theorem]{Example}
\newtheorem*{example*}{Example}
\newif\ifArxiv
\begin{document}

\title{Query Answering Under Volume-Based Diversity Functions}

\author[1]{Marcelo Arenas}
\author[2]{Timo Camillo Merkl}
\author[2]{Reinhard Pichler}
\author[1]{Cristian Riveros}
\affil[1]{Pontificia Universidad Católica de Chile, \texttt{marenas@uc.cl}, \texttt{cristian.riveros@uc.cl}}
\affil[2]{TU Wien, Austria, \texttt{timo.merkl@tuwien.ac.at}, \texttt{pichler@dbai.tuwien.ac.at}}

\date{} 

\maketitle

\begin{abstract}
When query evaluation produces too many tuples, a new approach in query answering is to retrieve a diverse subset of them. The standard approach for measuring the diversity of a set of tuples is to use a distance function between tuples, which measures the dissimilarity between them, to then aggregate the pairwise distances of the set into a score (e.g., by using sum or min aggregation). However, as we will point out in this work, the resulting diversity measures 
may display some unintuitive behavior. Moreover, even in very simple settings, 
finding a maximally diverse subset of the answers of fixed size is, in general, intractable and little is known about approximations apart from some hand-picked distance-aggregator pairs. 

In this work, we introduce a novel approach for computing the diversity of tuples based on volume instead of distance. We present a framework for defining volume-based diversity functions and provide several examples of these measures applied to relational data. Although query answering of conjunctive queries (CQ) under this setting is intractable in general, we show that one can always compute a (1-1/e)-approximation for any volume-based diversity function. Furthermore, in terms of combined complexity, we connect the evaluation of CQs under volume-based diversity functions with the ranked enumeration of solutions, finding general conditions under which a (1-1/e)-approximation can be computed in polynomial time. 
\end{abstract}

	\section{Introduction}\label{sec:introduction}

When the set of answers to a query gets too big, a user might be better served by
being  presented a meaningful
subset of the answers rather than being overwhelmed with the entire set. 
Clearly, sampling might provide one way of selecting a ``representative'' subset of the answers.
However, as was pointed out in \cite{DBLP:conf/icde/VeeSSBA08}, such an approach 
typically misses interesting but rarely occurring answers.
An alternative approach, which has recently received increased attention by the database community,
is  to aim at a small, {\em diverse} set of answers~\cite{DBLP:journals/pacmmod/AgarwalEHSY24,%
DBLP:journals/corr/abs-2408-01657,%
DBLP:journals/vldb/IslamAAR23,%
MerklPS23,%
DBLP:journals/vldb/NikookarEBSAR23}.
For instance (following an example given in \cite{DBLP:conf/aaai/HebrardHOW05}), in a car dealership setting, 
the number of models satisfying the constraints expressed by the customer may be huge. 
Therefore, rather than presenting all solutions to this constraint satisfaction problem (which is a well-known
equivalent problem to conjunctive query answering \cite{DBLP:journals/jcss/KolaitisV00}),
it would be more useful to come up with a small, {\em diverse} set of solutions and let the 
customer decide on which type of models to focus further  discussions.

The most common approach of assigning a diversity score $\delta$ to a subset of the universe $\U$ 
(e.g., the answers to a query) is to first  
define a distance measure $d$ between any two distinct elements of $\U$ and then 
define the diversity $\delta(S)$ of any subset $S$ of $\U$ by applying some 
aggregation to the pairwise distances of the elements in $S$~\cite{DBLP:conf/aaai/IngmarBST20}. 
Typical 
distance measures in the database context 
are the Hamming distance \cite{DBLP:journals/pacmmod/AgarwalEHSY24, DBLP:journals/ai/BasteFJMOPR22,MerklPS23} (i.e., counting
the positions in which two tuples differ), an ultrametric \cite{DBLP:conf/icde/VeeSSBA08,DBLP:journals/debu/VeeSA09}
(i.e., imposing an order on the attributes and considering tuples farther apart if they differ
on an attribute further up in this order), or the Euclidean distance for numeric 
attributes~\cite{DBLP:journals/pacmmod/AgarwalEHSY24}. Typical aggregate functions
are the sum and min operators~\cite{DBLP:journals/pacmmod/AgarwalEHSY24, DBLP:journals/ai/BasteFJMOPR22,MerklPS23}.

While sum and min are natural and familiar aggregate functions, they may lead to some anomalies of the 
resulting diversity measure: In case of the sum operator, consider a setting where a set $S$ contains two elements $t_1, t_2$ with high distance
$d(t_1, t_2)$. Then adding to $S$ 
another element $t_1'$ very close to $t_1$ but again with high distance from $t_2$ 
seemingly leads to a significant increase of diversity, even though $t_1'$ is almost a ``copy'' of $t_1$.
In \cite{weitzman1992diversity} several desiderata on diversity measures are presented -- 
including the {\em twin property}, i.e., adding an (almost) identical copy should not increase the diversity, and {\em monotonicity},
i.e., adding a new element to a set $S$ never decreases the diversity. Clearly, the sum operator violates the first fundamental property while the min operator violates this second property.
Consequently, Weitzman~\cite{weitzman1992diversity} introduced a diversity measure (henceforth referred to as $\deltaW$)
based on a more sophisticated aggregation of pairwise distances. However, as was 
shown in \cite{DBLP:journals/corr/abs-2408-01657}, even the basic task of determining the diversity $\deltaW(S)$ of a given set
$S$ of elements is NP-complete. 

The goal of our work is to introduce a novel framework for defining diversity measures, such that this framework
is generally applicable but, at the same time, 
particularly well suited 
for defining natural diversity measures in the (relational) database world. We will thus introduce a two-staged approach which, in the first place,
assigns to each element of the universe (e.g., a tuple in a relation or in an entire database) a {\em volume} in the form of some measurable set.
As will be illustrated in Section \ref{sec:volume}, for a set $S$ of tuples, there are many ways of choosing such a volume. In the simplest
case, we could just collect the set of values occurring in $S$. Various other options,  such as considering $k$-ary balls of a pre-specified radius $r$ 
around a $k$-tuple of numerical attributes are presented in Section \ref{sec:volume}. The second stage then consists in assigning values to the unions of
these measurable sets. For the basic case of collecting the set $V$ of values occurring in $S$, we could simply take the cardinality of $V$. 
For the case of $k$-ary balls associated with each tuple, we would take the volume of the union of the balls associated with the 
tuples in $S$.  A formal definition of our {\em volume-based} approach to diversity will be given in Section \ref{sec:volume}.

We will then  study interesting properties of this approach. In particular, we will analyze its relationship with 
previous approaches -- in particular, Weitzman's approach~\cite{weitzman1992diversity} 
and the multi-attribute approach of Nehring and Puppe~\cite{nehring2002diversitytheory} 
(in Section~\ref{sec:MultiAttribute}, we will formally define that approach and also point at its
major shortcoming, namely the conceptual and computational complexity caused 
by having to deal with the powerset of the powerset of the universe).
Somewhat surprisingly, we will show that diversity functions defined via the multi-attribute approach can also be defined in our framework and, for a finite universe, also the converse holds.
In other words, while avoiding the negative computational properties
of the multi-attribute approach, our volume-based diversity measures share the favorable properties shown in \cite{nehring2002diversitytheory}. 
One of them is {\em submodularity}, which formalizes the 
intuition that  adding a new element to a smaller set potentially leads to a bigger increase of diversity than adding the same element to a bigger set. 

When analyzing computational properties of volume-based diversity measures, submodularity will prove beneficial. Concretely, we study the problem of 
searching for a subset of the answers to a conjunctive query which, 
for a given size $k$, maximizes the diversity. This problem has been studied before for various
diversity functions and, even in very simple settings (e.g., considering the sum or the minimum of the pairwise Hamming distances) this problem
was shown to be intractable~\cite{MerklPS23} -- even for data complexity. 
We will show that also for the natural volume-based diversity measures of sets of tuples presented in 
Section~\ref{sec:volume}, intractability holds. We therefore study the search for a maximally diverse set of answers to a conjunctive query 
from an approximation point of view. For data complexity, we prove a tractable $(1-\nicefrac{1}{e})$-approximation (where $e$ is the Euler number) 
of the maximum diversity score 
for arbitrary volume-based diversity measures by making use of 
{a classical approximation result for submodular set functions
by Nemhauser et al.~\cite{nemhauser1978analysis}}. 
We also show that, in general,
a better tractable approximation can be excluded unless P = NP. 
Clearly, combined complexity requires further restrictions since even query evaluation of Boolean conjunctive queries (without paying any attention to diversity)
is NP-complete~\cite{DBLP:conf/stoc/ChandraM77}. However, by restricting our attention to CQs of 
bounded fractional hypertreewidth~\cite{GroheM14} and establishing a 
relationship with ranked enumeration~\cite{deep2025ranked}, we manage to achieve tractable $(1-\nicefrac{1}{e})$-approximation of the maximum diversity score
also for combined complexity. 

\paragraph{Structure of the paper and summary of results}
After recalling some basic notions in Section~\ref{sec:preliminaries}, we will formally 
introduce our volume-based framework of defining diversity functions in Section~\ref{sec:volume}. 
By presenting some examples of natural diversity functions for 
sets of tuples, we illustrate the suitability of this framework in the database context. 
We then study the relationship of our volume-based approach of defining diversity measures with previous approaches, namely 
with the multi-attribute approach of \cite{nehring2002diversitytheory} in Section~\ref{sec:MultiAttribute} and 
with distance-based approaches (above all Weitzman's diversity measure~\cite{weitzman1992diversity})
in Section~\ref{sec:DiversityDistance}.
The search for a maximally diverse set of $k$ answers to a conjunctive query $Q$ over a given database $D$ is studied in 
Sections~\ref{sec:Exact} and \ref{sec:ACQs}. As mentioned above, a tractable exact solution to this maximization problem is out of reach. We therefore 
settle for an approximation. In Section~\ref{sec:Exact}, we study data complexity and establish a tractable $(1-\nicefrac{1}{e})$-approximation of the maximum diversity score of $k$-element subsets of the answers to first-order queries by virtue of the submodularity of volume-based diversity measures. 
In Section~\ref{sec:ACQs}, we study combined complexity and identify a sufficient condition on the queries to achieve the same quality of approximation.
We conclude with Section~\ref{sec:conclusions}.
\ifArxiv 
Most proofs are deferred to the appendix.

\else
Proof details can be found in the full version of this paper \cite{ARXIV}.

\fi 	
	\section{Preliminaries}\label{sec:preliminaries}

\newcommand{\D}{\mathbb{D}}
\newcommand{\arity}{\operatorname{arity}}
\newcommand{\relnames}{\mathcal{R}}
\newcommand{\Var}{\mathcal{X}}
\newcommand{\ba}{\bar{a}}
\newcommand{\bb}{\bar{b}}
\newcommand{\bc}{\bar{c}}
\newcommand{\bx}{\bar{x}}

\paragraph{Sets and sequences} 
We denote by $\bbN$, $\bbR$, and $\bbRgeqz$  the set of natural,  real and non-negative real numbers, respectively.
Given a set $A$, we denote by $\finite(A)$ the set of all non-empty finite subsets of $A$. For $k \in \bbN$, we say that $B \in \finite(A)$ is a \emph{$k$-subset} if $|B| = k$.
We usually use $a$, $b$, or $c$ to denote elements, and $\ba$, $\bb$, or $\bc$ to denote sequences of such elements. For $\ba = a_1, \ldots, a_{k}$, we write $\ba[i] := a_i$ to denote the $i$-th element of~$\ba$ and $|\ba| := k$ to denote the length of $\ba$. Further, given a function $f$ we write $f(\bar{a}) := f(a_1), \ldots, f(a_{k})$ to denote the function applied to each element of $\ba$.

\paragraph{Conjunctive queries} Fix a set $\D$ of data values. A relational schema $\Sigma$ (or just schema) is a pair $(\relnames, \arity)$,
where $\relnames$ is a set of relation names and $\arity: \relnames \rightarrow \bbN$ assigns 
each name to a number. An $R$-tuple of $\Sigma$ (or just a tuple) is a syntactic object
$R(a_1, \ldots, a_{k})$ such that $R \in \relnames$, $a_i \in \D$ for every~$i$,  
and $k = \arity(R)$. We will write $R(\ba)$ to denote a tuple with values $\ba$. 
Given a schema $\Sigma$, we denote by $\TUPLES$ the set of all tuples over $\Sigma$ with values in  $\D$.
A \emph{relational database} $D$ over $\Sigma$ is a finite set of tuples over $\Sigma$.
For a schema $\Sigma = (\relnames, \arity)$ and a set of variables $\Var$ disjoint from $\D$, a \emph{Conjunctive Query} (CQ) over $\Sigma$ is a syntactic structure of the~form:
$$
Q(\bx) \ \leftarrow \ R_1(\bx_1), \ldots, R_{m}(\bx_{m}) 
$$
such that $Q$ denotes the answer relation, each $R_i$ is a relation name in $\relnames$, $\bx_i$ is a sequence of variables in $\Var$, 
$|\bx| = \arity(Q)$, and $|\bx_i| = \arity(R_i)$ for every $i \leq m$. Further, $\bx$ is a sequence of variables appearing in $\bx_1, \ldots, \bx_{m}$. 
We refer to such a CQ simply as $Q$, where $Q(\bx)$ and $R_1(\bx_1), \ldots, R_{m}(\bx_{m})$ are called the \emph{head} and the \emph{body} of $Q$, respectively. 
Furthermore, we call each $R_i(\bx_i)$ an \emph{atom} of $Q$, and we say that $Q$ is a \emph{full} CQ if each variable occurring in the body of $Q$ also appears in the head of $Q$.

Let $Q$ be a CQ of the above form, and $D$ be a database over the same schema~$\Sigma$. A \emph{homomorphism} from $Q$ to $D$ is a function $h: \Var \rightarrow \D$ such that $R_i(h(\bx_i)) \in D$ for every $i \leq m$. We define the \textit{answers} of $Q$ over $D$ as the set of $Q$-tuples
$
\sem{Q}(D) \ := \ \{Q(h(\bx)) \mid \text{$h$ is a homomorphism from $Q$ to $D$}\}.
$

\paragraph{Distance-based diversity} Let $\U$ be an infinite set.
We see $\U$ as a \emph{universe} of possible solutions and $S \in \setsU$ as a candidate finite set of solutions.
In its most general form, a diversity function over $\U$ is a function $\delta \colon \setsU \to \bbRgeqzInf$. 
The standard approach, that we call here \emph{distance-based diversity} functions, 
is to first define a {\em distance function}  $d \colon \U \times \U \rightarrow \bbRgeqzInf$ 
(typically $d$ is a metric on $\U$)
and to define the diversity $\delta$ as an extension of  $d$ from pairs to arbitrary subsets of  $\U$ 
setting $\delta(S) = 0$ if $|S| \leq 1$.
As proposed in \cite{DBLP:conf/aaai/IngmarBST20}, 
one way of defining $\delta$ for a given 
distance function $d$ is to define an aggregator $f$ that combines the pairwise distances. That is, 
we set 
$\delta(S): = f(d(a,b)_{a,b \in S})$.
The most common aggregators are sum and min, which give rise to the following diversity functions: 
\[
\deltasum(S) := \sum_{a,b \in S} d(a,b) \ \ \ \ \text{ and }  \ \ \ \  \deltamin(S) := \min_{a,b \in S \,:\, a \neq b} d(a,b).
\]

	\section{Volume-based Diversity Framework}\label{sec:volume}

In this section, we introduce a general volume-based framework for measuring the diversity of sets of tuples. We begin by recalling the definitions of $\sigma$-algebra and measures. Then, we introduce the main definitions of the framework, present several examples to motivate the use of volume-based diversity measures over relational data, and prove that volume-based diversity functions satisfy two fundamental properties expected of diversity measures.

\paragraph{Measures} We recall here the standard definitions of $\sigma$-algebra and measures (see e.g.~\cite{axler2020measure} for further details).
Let $\Omega$ be a set (possibly infinite).
A \emph{$\sigma$-algebra} over $\Omega$ is a family $\cS$ of subsets of $\Omega$ (i.e., $\cS \subseteq 2^{\Omega}$) such that (1) $\emptyset \in \cS$, 
(2) if $X \in \cS$, then $\Omega \setminus X \in \cS$, and (3) if $X_i \in \cS$ for every $i \in \mathbb{N}$, then $\bigcup_{i \in \mathbb{N}} X_i \in \cS$.
Given a $\sigma$-algebra $\cS$, a \emph{measure} for $\cS$ is a function $\mu : \cS \to \bbRgeqzInf$ such that (1)~$\mu(\emptyset) = 0$ and (2) if $X_i \in \cS$ for every $i \in \mathbb{N}$ and $X_i \cap X_j = \emptyset$ for every $i, j \in \mathbb{N}$ with $i \neq j$, then: 
\[
		\mu(\bigcup_{i \in \mathbb{N}} X_i) \ = \ \sum_{i \in \mathbb{N}} \mu(X_i).
\]
For example, assuming that $\Omega$ is a countable set, one can check that $2^{\Omega}$ is a $\sigma$-algebra and $\mucount: 2^{\Omega} \rightarrow \bbRgeqzInf$ that maps $\mucount(X) = |X|$ if $X$ is finite and $\mu_0(X) = \infty$, otherwise, is a measure, called the \emph{counting measure}. Another example is the {\em weighted measure}, where we 
consider a weight function $w: \Omega \rightarrow \bbRgeqz$ over $\Omega$ and define $\mu_w(X) = \sum_{a \in X} w(a)$ where the sum is defined as the supremum of $\sum_{a \in Y} w(a)$ over all finite subsets $Y \subseteq X$.  A particular case here is a probability distribution over a $\sigma$-algebra $\cS$ where $\mu$ assigns a probability in $[0,1]$ to each subset $X$ of $\Omega$.

\paragraph{The volume-based framework} Assume that $\U$ is the \emph{universe of possible solutions} over which we want to measure diversity. A \emph{volume assignment} $\cV$ over $\U$ is a tuple $\cV =(\cS, \mu, \beta)$ such that $\cS$ is a $\sigma$-algebra over a set $\Omega$ (that may be different from $\U$), $\mu$ is a measure for $\cS$ and $\beta: \U \to \cS$. 
Intuitively, the function $\beta$, called the \emph{ball function}, is a function that assigns a \textit{ball} in $\cS$ to each element $a$ of the universe $\U$, namely, it assigns a volume to $a$.

We now introduce our framework for defining diversity functions over volume assignments as follows.
Given a volume assignment $\cV = (\cS, \mu, \beta)$ over $\U$, a function $\delta_{\cV}: \setsU \to \bbRgeqzInf$ is a \textit{volume-based diversity function} over $\U$ if for every $S \in \setsU$:
\[
\delta_{\cV}(S) \ = \ \mu(\bigcup_{a \in S} \beta(a)).
\]
Intuitively, each element of $S$ contributes with different characteristics to the diversity of the group (i.e., its volume), and when we add all these characteristics together, the intersection only adds once. In particular, when two elements $a_1$ and $a_2$ of $\U$ are totally different with respect to the diversity (i.e. $\beta(a_1) \cap \beta(a_2) = \emptyset$), we have that $\delta_\cV(\{a_1,a_2\}) = \delta_\cV(\{a_1\}) + \delta_\cV(\{a_2\})$. Also, note that, contrary to distance-based diversity functions (defined in Section \ref{sec:preliminaries}), it is not necessary that $\delta_\cV(\{a\}) = 0$ (indeed, $\delta_\cV(\{a\}) \neq 0$ almost surely). Depending on the application context, positive diversity of singletons might actually be the desired behavior. For instance, suppose that the universe $\U$ denotes the set of employees, $\Omega$ a set of skills, and $\beta$ assigns to each employee her skills. Then 
the diversity 
$\delta(S)$ assigns a measure (via $\mu$) to the skill set present in a team $S$ of employees. In this case, we
clearly want $\delta$ applied to a singleton to reflect the value of the skills possessed by each individual. 

In the following, we provide several examples of volume-based diversity functions applied to relational data (i.e., tuples). For this purpose, recall that we use $\D$ to denote a set of data values, $\Sigma$ to denote an arbitrary relational schema, and $R(a_1, \ldots, a_{k})$ to denote an $R$-tuple of $\Sigma$ where $a_i \in \D$ for every $i$. In the following examples, we use $\TUPLES$ to denote our universe (i.e., $\U$) of all possible~tuples. 

\begin{example} \label{ex:elem-volume}
	Let $\Velem = (2^{\D}, \mucount, \betaelem)$ be the volume assignment such that 
	$\cS = 2^{\D}$ is the $\sigma$-algebra (over $\D$), $\mucount$ is the counting measure 
and $\betaelem$ is the ball function defined as:
	\[
	\betaelem(R(a_1, \ldots, a_{k})) \ := \ \{a_1, \ldots, a_{k}\}
	\]
	for every tuple $R(a_1, \ldots, a_{k})$. For every finite set of tuples $S \subseteq \TUPLES$, we have that $\delta_{\Velem}(S)$ measures the number of different data values contained in the tuples in $S$. That is, the more different data values the tuples have, the more diverse they are.
	
	\begin{figure}[t]
		\centering
		\begin{tabular}{ccccccccc}
			$D_1:$ & \begin{tabular}{|c c|}\hline
				\multicolumn{2}{|c|}{$R$}\\\hline\hline
				$a$ & $a$\\
				$a$ & $b$\\
				$b$ & $a$\\\hline
				\multicolumn{2}{c}{}\\
			\end{tabular}
			&\ \ \ \ \ \ \ \ \ \ \ & $D_2:$ & \begin{tabular}{|c c|}\hline
				\multicolumn{2}{|c|}{$R$}\\\hline\hline
				$a$ & $a$\\
				$a$ & $b$\\
				$b$ & $a$\\
				$b$ & $b$\\\hline
			\end{tabular}
			&\ \ \ \ \ \ \ \ \ \ \ & $D_3:$ & \begin{tabular}{|cc|}\hline
				\multicolumn{2}{|c|}{$R$}\\\hline\hline
				$a$ & $b$\\
				$a$ & $c$\\\hline
				\multicolumn{2}{c}{}\\
				\multicolumn{2}{c}{}\\
			\end{tabular}
		\end{tabular}
		\caption{Databases $D_1$, $D_2$ and $D_3$ consisting of a binary relation $R$ with data values $a$, $b$, and $c$.}
		\label{fig:running-example}
	\end{figure}
    For instance, consider the databases $D_1$, $D_2$ and $D_3$ in Figure~\ref{fig:running-example} consisting of a binary relation $R$.
    We have that $\betaelem(R(a,a)) = \{a\}$, $\betaelem(R(a, b)) = \{a,b\}$,  $\betaelem(R(b, a)) = \{a,b\}$ and
    \begin{align*}
        \delta_{\Velem}(D_1) = \mucount(\betaelem(R(a,a)) \cup \betaelem(R(a,b)) \cup \betaelem(R(b,a))) = \mucount(\{a,b\}) = |\{a,b\}| = 2.
    \end{align*}
    In the same way, we conclude that $\delta_{\Velem}(D_2) = 2$ and $\delta_{\Velem}(D_3) = 3$. Hence, $D_1$ and $D_2$ are equally diverse under the measure $\delta_{\Velem}$, while $D_3$ is considered more diverse than these two databases, as it contains an extra value.
    \qed
\end{example}

\begin{example} \label{ex:pos-volume}
    In addition to measuring the diversity in data values, we now also want to consider the position where these data values occur, i.e., it is different whether $a$ appears in the first or second component of a tuple. We thus capture the intuition that different attributes, even if they
    have the same data type, have a different semantics (e.g., in a car-relation, the number 6 occurring both in the ``gears'' and  in the ``cylinders'' attribute does not reduce the diversity). For this purpose, consider the volume assignment $\Vpos = (2^{\D \times \bbN}, \mucount, \betapos)$ where we use the $\sigma$-algebra $2^{\D \times \bbN}$ (over $\D \times \bbN$), the counting measure $\mucount$ and the ball function $\betapos$ such that:
	\[
	\betapos(R(a_1, \ldots, a_{k})) \ := \ \{(a_1, 1), \ldots, (a_{k}, k)\}
	\]
	for every tuple  $R(a_1, \ldots, a_{k}) \in \TUPLES$. Then, the diversity $\delta_{\Vpos}(S)$ measures the number of different values that appear in different positions of the tuples in $S \subseteq \TUPLES$.
    For instance, 
    consider again the databases $D_1$, $D_2$ and $D_3$ given in Figure~\ref{fig:running-example}. Then we have that $\betapos(R(a,a)) = \{(a,1),(a,2)\}$,  $\betapos(R(a,b)) = \{(a,1),(b,2)\}$, $\betapos(R(b,a)) = \{(b,1),(a,2)\}$ and
    \begin{multline*}
        \delta_{\Vpos}(D_1) = \mucount(\betapos(R(a,a)) \cup \betapos(R(a,b)) \cup \betapos(R(b,a))) =\\ \mucount(\{(a,1), (a,2), (b,2), (b,1)\}) = 4.
    \end{multline*}
In the same way, we conclude that $\delta_{\Vpos}(D_2) = 4$ and $\delta_{\Vpos}(D_3) = 3$. Hence, as opposed to the diversity measurements given in Example \ref{ex:elem-volume}, $D_1$ and $D_2$ are equally diverse under the measure $\delta_{\Vpos}$, while $D_3$ is considered less diverse than these two databases, as it contains a smaller number of values in different positions. 
\qed
\end{example}

\begin{example} \label{ex:weight-volume}
	Another practical example of volume-based diversity functions is considering a weight function $w: \D \rightarrow \bbRgeqz$.
For instance, if the relational data considers animals in $\D$, then a user could use a weight function where  $w(\text{`dog'})$ will weigh less than $w(\text{`dodo'})$ given that dodo is a less common animal than a dog.
	Then one can consider the volume assignment $\Velemw = (2^{\D}, \muw, \betaelem)$ where the $\sigma$-algebra and ball functions are the same as in $\Velem$ (see Example~\ref{ex:elem-volume}) and the measure $\muw$ is the weighted measure defined above.
	Then $\delta_{\Velemw}(S)$ measures the weight of the data values appearing in tuples, assigning more diversity to tuples where a dodo appears versus a dog. 
	
	One can naturally extend this example to also consider the positions of the data values (denoted by $\Vposw$) as in Example~\ref{ex:pos-volume} and, instead of a weight function $w$, one can use a probability function that assigns a probability to each data value. 
    To showcase $\Vposw$, consider again the databases $D_1$, $D_2$ and $D_3$ given in Example \ref{ex:elem-volume}. Moreover, assume that $c$ is an uncommon value for the second attribute of $R$, which is represented by the following weight function: $w((a,1)) = w((a,2)) = w((b,1)) = w((b,2)) = w((c,1)) = 1$, and $w((c,2)) = 3$.
    Then we have that:
    \begin{multline*}
        \delta_{\Vposw}(D_3) = \muw(\betapos(R(a,b)) \cup \betapos(R(a,c))) =\\ \muw(\{(a,1), (b,2), (c,2)\}) = 
        w((a,1)) + w((b,2)) + w((c,2)) = 5.
    \end{multline*}
In the same way, we conclude that $\delta_{\Vposw}(D_1) = 4$ and $\delta_{\Vpos}(D_2) = 4$. Hence, in this case $D_3$ is considered as the most diverse database given the occurrence of $c$ in the second column of $R$. 
	\qed
\end{example}

\begin{example} \label{ex:VQD}
	Let $D$ be a relational database over a schema $\Sigma$ and  consider a CQ $Q(\bx) \ \leftarrow \ R_1(\bx_1), \ldots, R_{m}(\bx_{m})$.
A user may want to measure the diversity of a subset $S \subseteq \sem{Q}(D)$ concerning the provenance of each tuple, namely, which are the tuples in $D$ that contribute to the outputs in $S$ (cf.\ the ``which provenance'' studied in \cite{DBLP:journals/vldb/CuiW03}). One way to formalize this is as follows. Let $\sem{Q}(D)$ be the universe of possible solutions. Consider the volume assignment $\VQD = (2^{D}, \mucount, \betaQD)$ where $2^{D}$ is the $\sigma$-algebra (i.e., all subsets of tuples in $D$), $\mucount$ is the counting measure, and $\betaQD:\sem{Q}(D) \rightarrow 2^{D}$ is the ball function such that for every answer~$Q(\ba)\in\sem{Q}(D)$:
	\[
	\betaQD(Q(\ba)) \ := \ \{R_i(h(\bx_i)) \mid 1 \leq i\leq m \text{ and $h$ is a homomorphism from $Q$ to $D$ with $h(\bx)=\ba$}\}
	\]
	In other words, $\betaQD$ maps $Q(\ba)$ to all the tuples that contribute to it, that is, its provenance. For $S \subseteq \sem{Q}(D)$, the value $\delta_{\VQD}(S)$ counts the number of different tuples in $D$ that support the outputs in $S$. Then, the more different tuples support $S$, the more diverse they are. 
        For instance, 
    consider again the database $D_1$ given in Example \ref{ex:elem-volume}, and let $Q_1(x,y)$ be the conjunctive query $\exists z \, R(x, z) \wedge R(z, y)$. Then the tuples $(a, a)$ and $(b,b)$ are both answers to $Q_1(x, y)$ over $D_1$. However, we have that $\beta_{Q_1,D_1}(Q_1(a,a)) = \{R(a,a), R(a,b), R(b,a)\}$,  $\beta_{Q_1,D_1}(\{Q_1(b,b)\}) = \{R(b,a),R(a,b)\}$~and
    \begin{align*}
        \delta_{\cV_{Q_1,D_1}}(\{Q_1(a,a)\}) =& \ \mucount(\beta_{Q_1,D_1}(Q_1(a,a))) = \mucount(\{R(a,a), R(a,b), R(b,a)\}) = 3,\\
        \delta_{\cV_{Q_1,D_1}}(\{Q_1(b,b)\}) =& \ \mucount(\beta_{Q_1,D_1}(Q_1(b,b))) = \mucount(\{R(b,a), R(a,b)\}) = 2.
    \end{align*}
Hence, in this case, $Q_1(a,a)$ is considered a more diverse answer than $Q_1(b,b)$, as there is a larger number of ways in which $(a,a)$ can be obtained as an answer to $Q_1(x,y)$ over $D_1$. \qed
\end{example}

\begin{example}
 We now consider a more geometrical scenario where $\D = \bbR$ and a tuple $R(a_1, \ldots, a_k)$ represents points in the $\bbR^k$-space. Then, given a radius $r > 0$ we can define the volume assignment $\Vr = (\cB^k, \mu, \betar)$ where $\cB^k$ are the Borel sets of $\bbR^k$ (i.e., measurable sets), $\mu$ is the Lebesgue measure (i.e., measures the volume of a measurable set in $\cB^k$), and $\beta_r$ is the ball function such that:
	\[
	\betar(R(a_1, \ldots, a_{k})) \ := \ \{(b_1, \ldots, b_k) \in \bbR^k \mid \sqrt{(a_1 - b_1)^2 + \ldots +(a_k - b_k)^2} \leq r\}
	\]
	namely, $\beta_r$ assigns a ball of radius $r$ under euclidean distance around $(a_1, \ldots, a_{k})$. Then the volume-based diversity function $\delta_{\Vr}(S)$ measures the volume of $r$-balls around points in $S$. 
	In particular, the farther apart (up to radius $r$) the points in $S$, the more diverse they are. \qed
\end{example}

\begin{example}
As our last example, 
	we adapt the previous example to have points closer to the tuples $R(a_1, \ldots, a_{k})$ contribute more to the diversity than points further away by adding Gaussian functions.
    To that end, again let $\D = \bbR$ and a tuple $R(a_1, \ldots, a_k)$ represents points in the $\bbR^k$-space. 
    Then, we can define the volume assignment $\Vg = (\cB^{k+1}, \mu, \betag)$ where $\cB^{k+1}$ are the Borel sets of $\bbR^{k+1}$ (note that we added a dimension), $\mu$ is the Lebesgue measure, and $\betag$ is the ball function with 
	\[
	\betag(R(a_1, \ldots, a_{k})) \ := \ \{(b_1, \ldots, b_k, d) \in \bbR^{k+1} \mid 0\leq d\leq e^{-(a_1 - b_1)^2 - \ldots -(a_k - b_k)^2}\}
	\]
	namely, $\betag$ assigns the area under the Gaussian function centered around $(a_1, \ldots, a_{k})$. Then the volume-based diversity function $\delta_{\Vg}(S)$ measures the collective volume under the Gaussian functions, i.e., the integral $\int_{\mathbb{R}^k}\max_{s\in S}e^{||x-s||_2^2}\,dx$.
        A benefit of using Gaussian over simple boxes is that adding a new element will always increase the diversity at least a bit.
\end{example}

\paragraph{Monotonicity and submodularity} We conclude this section by 
introducing two fundamental properties of
diversity functions,  
{advocated for in 
\cite{nehring2002diversitytheory,weitzman1992diversity}.}

Fix a universe $\U$ of possible solutions and a volume assignment $\cV$ over $\U$. A
first desirable property of diversity functions is that
of \emph{monotonicity}: adding an element to a set cannot decrease the
diversity of the set. Formally, a diversity function $\delta$ is
monotone if $\delta(S \cup \{a\}) \geq \delta(S)$ for every
$S \in \setsU$ and $a \in \U$.

A second desirable property of diversity functions 
is \emph{submodularity}%
\footnote{{We note that, in contrast to Nehring and Puppe~\cite{nehring2002diversitytheory}, 
Weitzman~\cite{weitzman1992diversity} does not explicitly propose 
{\em submodularity} as a desideratum. However, he mentions that, ideally, 
the increase of diversity when adding a new element $a$ to $\U$, should correspond to the minimum distance of $a$ from the already existing elements in $\U$. 
Clearly, this property implies submodularity.}},
which  means that,  
for every $a \in \U$ and $S_1, S_2 \in \setsU$ with 
$S_1 \subseteq S_2$, the property
$
\delta(S_1 \cup \{a\}) - \delta(S_1) \geq \delta(S_2 \cup \{a\}) - \delta(S_2)
$
holds.
As mentioned in Section~\ref{sec:introduction}, 
submodularity captures
the intuition that adding an element $a$ to the smaller set $S_1$ should
result in a greater
increase in diversity than adding it to $S_2$.%

In the next proposition, we show that both
properties are satisfied by volume-based diversity functions,
thereby
providing evidence of the naturalness of our approach.

\begin{proposition}
\label{prop:mon-subm}
Let $\cV$ be any volume assignment over a universe $\U$ of possible
solutions. Then $\delta_\cV$ is always monotone and submodular.
\end{proposition}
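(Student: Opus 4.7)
Let $\cV = (\cS,\mu,\beta)$ be a volume assignment over $\U$. For any finite $S\in\setsU$, set $A(S) := \bigcup_{b\in S}\beta(b)$. Since $S$ is finite and $\cS$ is closed under finite unions, $A(S)\in\cS$, so $\mu(A(S))$ is well-defined and equals $\delta_\cV(S)$ by definition. The whole argument reduces to two elementary facts about measures, namely monotonicity ($X\subseteq Y$ in $\cS$ implies $\mu(X)\leq\mu(Y)$) and modularity ($\mu(X\cup Y) + \mu(X\cap Y) = \mu(X) + \mu(Y)$ for $X,Y\in\cS$). Both follow in a few lines from the definition of a measure via finite additivity (writing $Y = X \sqcup (Y\setminus X)$ for monotonicity, and $X\cup Y = X \sqcup (Y\setminus X)$ together with $Y = (X\cap Y)\sqcup (Y\setminus X)$ for modularity).

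\textbf{Monotonicity.} Given $S\in\setsU$ and $a\in\U$, clearly $A(S)\subseteq A(S\cup\{a\}) = A(S)\cup\beta(a)$, so monotonicity of $\mu$ yields $\delta_\cV(S) = \mu(A(S))\leq \mu(A(S)\cup\beta(a)) = \delta_\cV(S\cup\{a\})$.

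\textbf{Submodularity.} Let $S_1\subseteq S_2$ and $a\in\U$, and abbreviate $A_i := A(S_i)$ and $B := \beta(a)$. Then $A_1\subseteq A_2$, and $A(S_i\cup\{a\}) = A_i\cup B$. To avoid any subtraction issue when values are infinite, I will rewrite the required inequality $\delta_\cV(S_1\cup\{a\}) - \delta_\cV(S_1) \geq \delta_\cV(S_2\cup\{a\}) - \delta_\cV(S_2)$ in the equivalent additive form
\[
\mu(A_1\cup B) + \mu(A_2) \ \geq\ \mu(A_1) + \mu(A_2\cup B).
\]
Applying modularity to both sides transforms this into $\mu(A_2\cap B)\geq \mu(A_1\cap B)$, which is immediate from monotonicity of $\mu$ since $A_1\cap B\subseteq A_2\cap B$.

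\textbf{Main obstacle.} The only subtlety is the bookkeeping for infinite values in $\bbRgeqzInf$: the difference form of submodularity is ill-defined if $\delta_\cV$ attains $\infty$ on $S_1$ or $S_2$. The additive reformulation above sidesteps this cleanly, since $\mu$ takes values in $\bbRgeqzInf$ with the usual convention $x+\infty=\infty$, and the modularity identity is still valid (if either side is infinite, both are, because $\mu$ is monotone on $\cS$). With that observation in place, the proof is a direct verification and should take only a short paragraph in the paper.
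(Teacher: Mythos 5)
Your proof is correct, and it is a close cousin of the paper's argument rather than an identical one. The paper establishes submodularity by writing $\bigcup_{x \in S_i \cup \{a\}} \beta(x)$ as the disjoint union of $\bigcup_{x \in S_i} \beta(x)$ and $\beta(a) \smallsetminus \bigcup_{x \in S_i} \beta(x)$, so that additivity of $\mu$ identifies the marginal gain as $\mu\big(\beta(a) \smallsetminus \bigcup_{x \in S_i} \beta(x)\big)$; the inequality then follows immediately from monotonicity of $\mu$ applied to the nested set differences. You instead pass to the additive form of submodularity and reduce it, via the modular identity $\mu(X \cup Y) + \mu(X \cap Y) = \mu(X) + \mu(Y)$, to $\mu(A_2 \cap B) \geq \mu(A_1 \cap B)$ --- which is the complementary comparison inside $B = \beta(a)$. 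Both routes bottom out in monotonicity of $\mu$ on nested subsets of $\beta(a)$, so the difference is one of bookkeeping. What your route buys is an explicit treatment of the case where $\delta_\cV$ takes the value $\infty$, which the paper's difference-form statement glosses over; what it costs is that your ``transformation'' from the additive inequality to $\mu(A_2 \cap B) \geq \mu(A_1 \cap B)$ is a cancellation in $\bbRgeqzInf$, which is only valid when the cancelled terms are finite. Your parenthetical remark addresses the modularity identity itself but not this cancellation; to be fully rigorous you should add the (one-line, trivial) observation that if any of $\mu(A_1)$, $\mu(A_2)$, $\mu(B)$ is infinite then the left-hand side $\mu(A_1 \cup B) + \mu(A_2)$ is already $\infty$ and the inequality holds outright, so the cancellation is only ever invoked in the all-finite case. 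The paper's disjoint-union decomposition avoids this caveat entirely, which is why it is marginally the cleaner of the two.
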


\reinhard{I have commented out the final sentence of this and the previous section. In principle. deleted }

        \section{Characterizing Volume-based Diversity Functions through the Multi-Attribute Model}
        \label{sec:MultiAttribute}

In~\cite{nehring2002diversitytheory}, Nehring and Puppe proposed a different and novel approach by introducing 
``multi-attribute'' diversity functions. The idea here is to consider {\em attributes} of the elements in a universe $\U$ 
as subsets of $\U$, i.e., each attribute is characterized by the elements that share this attribute. Like Weitzman~\cite{weitzman1992diversity},
the authors drew the motivation for their approach above all from the diversity of species in biodiversity 
and the study of acts that 
ensure high (expected) diversity among them. 
Basic attributes could then be, for instance, ``being a mammal'' or ``living in the ocean'', etc., and each of 
these attributes is then represented by the set of the corresponding animals.

For a finite set $X$, Nehring and
Puppe~\cite{nehring2002diversitytheory} formally define diversity functions as
follows. Let $\lambda$ be a non-negative measure (an additive set
function) on $2^{2^X}$.  For $A \subseteq X$, we write $\lambda_A$
rather than $\lambda(\{A\})$.  Then, for $S \subseteq X$, the
diversity $v_\lambda$ is defined as
\[
v_\lambda(S) \ = \ \lambda(\{A \subseteq X \mid A \cap S \neq \emptyset\}) \ = \ 
\sum_{A \subseteq X \,:\, A \cap S \neq \emptyset} \lambda_A.
\]
Intuitively, this definition considers each subset $A \subseteq X$ as
an attribute (or a ``feature class'') that may contribute to the
diversity of $S$. The weight $\lambda_A$ quantifies the relevance or
distinctiveness of that attribute. A subset $S$ is then considered
diverse if it collectively touches many of these informative subsets
$A$, each with non-negative weight.

We now establish the relationship between this notion and our volume-based approach.
\begin{restatable}{theorem}{thmvolumevsmultiattribute}    
\label{thm:volume:vs:multiattribute}
Let $X$ be a finite set.  If $v_\lambda$ is a multi-attribute
diversity function, then there exists a volume assignment $\cV
=(\cS, \mu, \beta)$ over the universe $\U = X$, such that $v_\lambda
= \delta_\cV$.
Likewise, if  $\cV =(\cS, \mu, \beta)$ is a volume assignment over some {\em finite} universe
$\U$, then there exists a non-negative measure  $\lambda$ on $2^{2^X}$  with  $X = \U$, 
such that $\delta_\cV = v_\lambda$. 
\end{restatable}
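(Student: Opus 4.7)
The plan is to prove both implications constructively, exploiting the duality between ``points in the $\sigma$-algebra'' and ``attributes of elements''.

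For the forward direction (multi-attribute $\Rightarrow$ volume-based), I would take $\Omega := 2^X$ (viewing each attribute $A \subseteq X$ as a point of the ground set), $\cS := 2^\Omega$, and define the ball function by $\beta(a) := \{A \subseteq X \mid a \in A\}$. Then $\bigcup_{a \in S} \beta(a)$ is precisely $\{A \subseteq X \mid A \cap S \neq \emptyset\}$. I would finally let $\mu$ be the weighted counting measure on $\Omega$ given by $\mu(\mathcal{F}) := \sum_{A \in \mathcal{F}} \lambda_A$. A one-line calculation then yields $\delta_\cV(S) = v_\lambda(S)$.

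For the reverse direction, the key idea is an atomic decomposition of $\Omega$ induced by the finitely many balls $\beta(a)$, $a \in X$. For each $A \subseteq X$ set
\[
\Omega_A \ := \ \bigcap_{a \in A} \beta(a) \; \setminus \; \bigcup_{a \in X \setminus A} \beta(a),
\]
which lies in $\cS$ since $X$ is finite and $\cS$ is closed under complement and finite intersection. The family $\{\Omega_A\}_{A \subseteq X}$ is a finite partition of $\Omega$. I then set $\lambda_A := \mu(\Omega_A)$ for every non-empty $A \subseteq X$ and $\lambda_\emptyset := 0$, extending $\lambda$ additively to all of $2^{2^X}$. For every non-empty $S \subseteq X$ I would verify the identity
\[
\bigcup_{a \in S} \beta(a) \ = \ \bigsqcup_{A \subseteq X \,:\, A \cap S \neq \emptyset} \Omega_A,
\]
which is a pointwise statement: each $\omega \in \Omega$ lies in $\beta(a)$ exactly for the elements of $A_\omega := \{a \in X \mid \omega \in \beta(a)\}$, and hence $\omega$ lies on the left iff $A_\omega \cap S \neq \emptyset$. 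Applying $\mu$ and using the additivity of $\mu$ over this finite disjoint union gives $\delta_\cV(S) = \sum_{A : A \cap S \neq \emptyset} \lambda_A = v_\lambda(S)$, as required.

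The main obstacle is less the calculation than ensuring the construction does not silently rely on finiteness of $\Omega$ or on any topological structure of $\cS$. Here the finiteness of $\U = X$ is essential: it is what bounds the number of atoms $\Omega_A$ by $2^{|X|}$, lets us take their measures individually (including the value $\infty$, if it occurs), and allows the atomic identity to be proved by a plain pointwise argument without invoking $\sigma$-additivity. Once that decomposition is in place, both directions reduce to bookkeeping, and the two constructions are essentially inverse to each other on the finite universe.
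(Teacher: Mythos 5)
Your proposal is correct and follows essentially the same route as the paper: the forward direction uses the identical construction ($\cS = 2^{2^X}$, $\beta(x) = \{A \mid x \in A\}$, $\mu$ the $\lambda$-weighted sum), and the reverse direction defines $\lambda_A$ as the measure of exactly the same atom $\bigcap_{a \in A}\beta(a) \setminus \bigcup_{x \in X \setminus A}\beta(x)$ that the paper uses. Your pointwise argument via $A_\omega$ is just a more compact phrasing of the paper's disjoint-decomposition chain of equalities, so nothing further is needed.
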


\begin{proof}[Proof sketch]
For given 
multi-attribute
diversity function  $v_\lambda$,
defining an equivalent 
volume-based diversity function $v_\cV$ is straightforward. 
More precisely, we set 
$\cV =(\cS, \mu, \beta)$ with 
$\cS =  2^{2^X}$, 
$\beta (x) = \{A \subseteq X\mid x \in A\} $, and 
$\mu(\mathcal{B})  = \sum_{A \in \mathcal{B}} \lambda_A
$ for $\mathcal{B}\in \cS$. 
The other direction is more involved and only works for finite universe $\U$.
In particular, we set 
$\lambda_A = \mu\big(\bigcap_{a\in A} \beta(a) \setminus 
\bigcup_{x \in X \setminus A} \beta(x) \big)$.
\end{proof}

\noindent
This characterization is important for several reasons. First, it confirms
that volume-based diversity functions are at least as expressive as
multi-attribute ones, thereby unifying two frameworks under a common
perspective. Second, the volume-based framework avoids the
computational burden of working directly over the power set of the
power set in the multi-attribute formulation, and
instead operates over a more intuitive geometric or set-based
representation of diversity.
Moreover, by the correspondence with the multi-attribute diversity model, 
our
volume-based diversity functions inherit all favorable properties proved for the
former in \cite{nehring2002diversitytheory}. In particular, the fact
that volume-based diversity functions are monotone and submodular, as
shown in Proposition~\ref{prop:mon-subm}, follows directly from this
equivalence.  
\ifArxiv
Nevertheless, we preferred to give a
self-contained proof of these properties (only using the axioms of the
definition of volume-based diversity functions)
in Appendix~\ref{app:volume}.
\fi

Finally, a fundamental advantage of the volume-based framework is its
suitability for relational data. In this setting, tuples from a
relation can be mapped to measurable regions in a space defined by the
attributes occurring in a tuple or its provenance, allowing the use of volume as a principled
measure of diversity. For example, the balls $\beta(t)$ assigned to
tuples $t$ can reflect their attribute values or provenance sets,
while the measure $\mu$ can reflect weighted or count-based semantics
over these regions. This enables a natural and scalable representation
of diversity across query answers without requiring explicit
enumeration of exponentially many subsets, as is needed in the
multi-attribute approach. In contrast, the latter becomes infeasible
in large relational domains due to its dependence on attribute power
sets. Volume-based diversity is thus more aligned with the semantics
and structure of relational databases.

        \section{Distance-Based versus Volume-Based Diversity Functions}
        \label{sec:DiversityDistance}

As has already been mentioned in Section~\ref{sec:introduction}, a common way of defining diversity of outputs in the database area is by using the distance-based approach. 
In contrast, we have defined the diversity $\delta_\cV$ %
via a volume assignment $\cV = (\cS, \mu, \beta)$ over $\U$. This
raises the question of what are the differences or similarities between the two approaches, and how can we compare them. 

\paragraph{Comparison by properties} A direct way to compare the two approaches is in terms of properties. As we already noticed, volume-based diversity functions are always monotone and submodular. In contrast, almost all distance-based diversity functions are not submodular, and some are not even monotone. 
\ifArxiv
We state this fact here and provide the examples in the appendix.
\fi
\begin{restatable}{proposition}{propdistancenoproperties} \label{prop:distance-no-properties}
	There exists a metric such that its corresponding diversity functions $\deltasum$ and $\deltamin$ are not submodular. Further, $\deltamin$ is 
    not even monotone.
\end{restatable}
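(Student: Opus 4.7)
The plan is to prove each of the three failures by exhibiting an explicit, finite metric space together with a concrete triple $(S_1, S_2, a)$ (or pair $(S,a)$) that witnesses the failure of the property in question. Since the statement only asks for existence, small hand-crafted metrics suffice, and no general construction is needed.

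For non-monotonicity of $\deltamin$, I would take the three-point metric on $\{a,b,c\}$ with $d(a,b) = 2$ and $d(a,c) = d(b,c) = 1$. The triangle inequality holds with equality in the only nontrivial case $2 = 1 + 1$. Then $\deltamin(\{a,b\}) = 2$ while $\deltamin(\{a,b,c\}) = 1$, so adding $c$ strictly decreases the diversity, refuting monotonicity in a single line.

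For non-submodularity of $\deltasum$, the observation I would rely on is that the marginal contribution of adding a new element $e$ to a set $S$ is (up to the constant factor implicit in the ordered-sum convention of Section~\ref{sec:preliminaries}) $\sum_{s \in S} d(s,e)$, which is non-decreasing in $S$. Hence for any metric with at least three points $x, y, z$ with strictly positive pairwise distances, taking $S_1 = \{x\} \subsetneq S_2 = \{x,y\}$ and $a = z$ gives marginal gains proportional to $d(x,z)$ and $d(x,z) + d(y,z)$ respectively, so submodularity fails as soon as $d(y,z) > 0$.

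The main obstacle is the non-submodularity of $\deltamin$, because adding an element can never \emph{increase} the minimum pairwise distance: every marginal contribution is $\leq 0$, so one must engineer the drop caused by a single new element to be strictly \emph{deeper} on the smaller set than on the larger one. My plan is to use four points $x, y, z, a$ with $d(x,y) = 10$, $d(x,z) = d(y,z) = 9$, $d(x,a) = 1$, and $d(y,a) = d(z,a) = 9$. I would first verify the triangle inequality on all four triples (the only tight case is $10 \leq 1 + 9$, the rest are comfortable). Then on the chain $S_1 = \{x,y\} \subseteq S_2 = \{x,y,z\}$ with the new element $a$, we get $\deltamin(S_1) = 10$ and $\deltamin(S_1 \cup \{a\}) = 1$, giving marginal $-9$, while $\deltamin(S_2) = 9$ and $\deltamin(S_2 \cup \{a\}) = 1$, giving marginal $-8$. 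Since $-9 \not\geq -8$, submodularity fails. Note that this same four-point metric simultaneously witnesses the $\deltasum$ failure via the argument of the previous paragraph, so a single metric actually serves both submodularity claims, matching the singular phrasing of the proposition.
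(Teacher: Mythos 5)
Your proposal is correct and follows essentially the same strategy as the paper: exhibiting explicit finite metrics together with chains $S_1 \subseteq S_2$ on which the marginal gains violate the required inequalities (the paper does this with a single four-tuple Hamming-distance configuration that witnesses all three failures at once). The only noteworthy difference is that your $\deltasum$ argument is more general than the paper's single example, since observing that the marginal $2\sum_{s\in S}d(s,e)$ is monotone in $S$ shows that $\deltasum$ fails submodularity for \emph{every} metric with three points at positive pairwise distances.
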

Although this fact is direct, it provides evidence that the two approaches differ considerably for $\deltasum$ and $\deltamin$. In the following, we provide further evidence of their differences and similarities.

\paragraph{Volume-based as distance-based} Another way to compare the two approaches is to try to encode volume-based diversity functions by using a distance-based approach. As we will see, in general, this is not possible.
To that end, we first discuss two natural approaches to define a distance function given a volume assignment $\cV=(\cS, \mu, \beta)$.

Specifically, for one, we can define $d^{\triangle}_\cV(a,b)$ as the
measure $\mu$ of the symmetric difference of $\beta(a)$ and
$\beta(b)$, i.e., $d^{\triangle}_\cV(a,b) = \mu \left( (\beta(a) \setminus \beta(b)) \cup
(\beta(b) \setminus \beta(a)) \right)$.  
We observe that any distance measure
$d: \U \times \U \to \mathbb{R}_{\geq 0}$ defined from a volume-based
diversity $\delta_\cV$ as $d:=d^{\triangle}_\cV$ is a {\em pseudo-metric}, i.e., it satisfies \emph{non-negativity} (i.e., $d(a,b) \geq 0$ for all $a,b \in \U$), \emph{symmetry} (i.e., $d(a,b) = d(b,a)$), \emph{identity} ($d(a,a) = 0$ for all $a \in \U$), and the \emph{triangle inequality} (i.e., $d(a,c) \leq d(a,b) + d(b,c)$ for all $a, b, c \in \U$).
If in addition, $d(a,b) = 0$ implies $a = b$ for all $a, b \in \U$,
then $d$ is actually a metric.

A second option (essentially considered in~\cite{nehring2002diversitytheory} in the context of the multi-attribute approach)
is to define the distance function $d^M_{\cV}$ as the marginal $d^M_{\cV}(a,b) := \delta_{\cV}(\{a,b\})-\delta_{\cV}(\{b\})$.
However, in that case, we give up symmetry.
Note that this can be recovered when the diversity of all singletons are the same.
In that case, $d^M_{\cV}$ again becomes a pseudo-metric.

Now, the hope could be that $d^{\triangle}_\cV$ or $d^M_{\cV}$ (or any other pseudo-metric) combined with an appropriate aggregator can recover the expressiveness of $\delta_{\cV}$.
To that end, we denote by $\delta_{\mathit{agg},d}$ a distance-based diversity function defined through an aggregator function $\mathit{agg}$ and a pseudo-metric $d$, namely,  $\delta_{\mathit{agg},d}(S):=\mathit{agg}(d(a,b)_{a,b\in S})$. We say that $\mathit{agg}$ is \emph{monotone} if $\mathit{agg}((d_i)_i)\leq \mathit{agg}((d'_i)_i)$ when $d_i\leq d'_i$ for all $i$.
Further, we say that a volume assignment $\cV = (\cS, \mu, \beta)$ is \emph{oblivious to data values} if for any bijection $f\colon\D\rightarrow \D$ and for any set of tuples $S\subseteq \TUPLES$ we have:
\[
\mu(\bigcup_{R(\ba)\in S}\beta(R(\ba))) = \mu(\bigcup_{R(\ba)\in S}\beta(R(f(\ba)))).%
\]
We also say that pseudo-metric $d$ is \emph{oblivious to data values} if $d(R(\ba),R(\ba'))=d(R(f(\ba)),R(f(\ba')))$.
Essentially, this means that the diversity functions should not depend on the concrete data values that appear as constants in the tuples but instead only on whether constants are equal or not.
Clearly, from the examples presented in Section~\ref{sec:volume}, the volume assignments $\Velem $ and $\Vpos$ are oblivious to data values while $\Velemw, \Vposw, \VQD, \Vr, $ and $\Vg$ are in general not oblivious to data values.
When it comes to metrics, naturally, the Hamming-distance is an example of a metric oblivious to data values while the Euclidean-distance is not.

\begin{restatable}{theorem}{thmnodistance} \label{theo:no-distance}
    There exists a volume assignment $\cV=(\cS,\mu,\beta)$ (e.g., $\Velem$) oblivious to data values over tuples $\TUPLES$ 
     such that there does not exist a monotone aggregator $\mathit{agg}$ and pseudo-metric $d$ over $\TUPLES$ that is oblivious to data values and can distinguish the same sets as $\cV$. In other words, no matter the $\mathit{agg},d$, there are two $k$-subsets $S,S'\subseteq \TUPLES$ such that $\delta_{\cV}(S)\neq \delta_{\cV}(S')$ while $\delta_{\mathit{agg},d}(S) = \delta_{\mathit{agg},d}(S')$.
\end{restatable}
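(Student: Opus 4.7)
The plan is to take $\cV = \Velem$, which the ``e.g.''\ in the statement already singles out and which the discussion preceding the theorem has shown to be oblivious to data values, and to produce the witness pair $(S, S')$ uniformly in $(\mathit{agg}, d)$ by exploiting the fact that distance-based diversity is trivial on singletons. Concretely, the convention recalled in Section~\ref{sec:preliminaries} fixes $\delta(S) = 0$ whenever $|S| \leq 1$, so $\delta_{\mathit{agg}, d}(\{T\}) = 0$ for every tuple $T$, every monotone aggregator $\mathit{agg}$, and every pseudo-metric $d$; obliviousness plays no role at all on sets of size one.

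I would then take $k = 1$, pick a binary relation symbol $R$ and two distinct values $a, b \in \D$, and set $S := \{R(a,a)\}$ and $S' := \{R(a,b)\}$. By the definition of $\Velem$ (Example~\ref{ex:elem-volume}), $\delta_{\Velem}(S) = |\{a\}| = 1$ and $\delta_{\Velem}(S') = |\{a,b\}| = 2$, so $\delta_{\Velem}(S) \neq \delta_{\Velem}(S')$, while the previous paragraph yields $\delta_{\mathit{agg},d}(S) = 0 = \delta_{\mathit{agg},d}(S')$. This is the required pair, and crucially the same pair works for every choice of $(\mathit{agg}, d)$, so no monotone aggregator and oblivious pseudo-metric can match the distinguishability of $\delta_{\Velem}$.

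The conceptual asymmetry that powers the argument is that $\Velem$ already extracts non-trivial information from the internal equality pattern of an \emph{individual} tuple via $\betaelem$, whereas any pair-distance-based framework is blind to such internal structure. I expect the \emph{genuine} combinatorial difficulty to appear only if one tries to strengthen the statement to a fixed size $k \geq 2$: then a generic oblivious pseudo-metric can be chosen injective on the finitely many pair-patterns, and at $k = 2$ the partition induced by $\delta_{\mathit{agg},d}$ on $2$-subsets is actually strictly \emph{finer} than the one induced by $\delta_{\Velem}$. Proving a separation at such a fixed $k$ would therefore require a genuine combinatorial construction of two $k$-subsets with the \emph{same} multiset of pair-patterns but different $\delta_{\Velem}$-value, an obstacle that the singleton witness above bypasses entirely.
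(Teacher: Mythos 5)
Your singleton argument is internally sound: with the convention from Section~\ref{sec:preliminaries} (or directly from $d(t,t)=0$ for any pseudo-metric), every singleton receives the same $\delta_{\mathit{agg},d}$-value, while $\delta_{\Velem}(\{R(a,a)\})=1\neq 2=\delta_{\Velem}(\{R(a,b)\})$. Since the statement leaves $k$ existentially quantified, this does satisfy the letter of the theorem. However, it is not the theorem the paper is proving: the asymmetry on singletons is already pointed out explicitly in Section~\ref{sec:volume} (``it is not necessary that $\delta_\cV(\{a\}) = 0$'') as a basic definitional difference between the two frameworks, so reducing the theorem to that observation empties it of content. The paper's proof establishes the separation where it is non-trivial, at $k=3$: it exhibits four $3$-sets of triples with $\delta_{\Velem}(S_1)=6$ and $\delta_{\Velem}(S_2)=\delta_{\Velem}(S_3)=\delta_{\Velem}(S_4)=7$, arranged so that, by obliviousness of $d$, each $S_i$ for $i\in\{2,3,4\}$ is an ``equidistant'' set whose common pairwise distance equals one of the three distances occurring inside $S_1$. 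Monotonicity of the aggregator then sandwiches $\delta_{\mathit{agg},d}(S_1)$ between the smallest and largest of $\delta_{\mathit{agg},d}(S_2),\delta_{\mathit{agg},d}(S_3),\delta_{\mathit{agg},d}(S_4)$, forcing a collision with a set of different $\Velem$-diversity whenever those three values coincide (and a collision in the reverse sense when they do not).

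This also corrects the claim in your last paragraph: a separation at $k\geq 2$ does \emph{not} require two $k$-subsets with the same multiset of pair-patterns. The paper's sandwich argument extracts a collision from monotonicity alone, using sets whose pair-pattern multisets are all different. I would recommend either adopting the paper's $k=3$ construction or, at minimum, stating explicitly that your witness lives at $k=1$ and acknowledging that it rests entirely on the convention $\delta_{\mathit{agg},d}(\{t\})=\text{const}$, so that a reader can judge whether the intended (non-degenerate) claim has been established.
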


\paragraph{Distance-based as volume-based} We now consider the other direction and see if one can understand the distance-based approach in terms of volumes.
Of course, this is not possible in general as volume-based diversity functions are always monotone and submodular (Proposition~\ref{prop:mon-subm}) while natural distance-based diversity functions are neither (Proposition~\ref{prop:distance-no-properties}). But this leaves open the question if 
more sophisticated distance-based diversity functions like Weitzman's $\deltaW$ can be 
captured by volumes. Below we give a partially positive answer to this question: In general, this is not possible, as we can show that  Weitzman's diversity function $\deltaW$ is, 
in general, not submodular. However, in the most important special case considered in \cite{weitzman1992diversity}, namely if the distance function $d$ 
underlying $\deltaW$ is an ultrametric, then $\deltaW$ is essentially a volume-based diversity function.

For a distance function $d$, The diversity function $\deltaW$ is defined recursively as follows:
\[
\delta_W(S) := \max_{a \in S} \left( \delta_W(S \setminus \{a\}) + d(a, S \setminus \{a\}) \right),
\]
with base case $\delta_W(\{a\}) := 0$. The distance $d(a,S)$ is
defined as $\min_{x \in S} d(a,x)$.

Weitzman’s diversity function is motivated by applications to species
hierarchies. However, one shortcoming is that $\delta_W$ is not
generally submodular:
\begin{restatable}{proposition}{thmweitzmansubmodular}
\label{thm:weitzman:submodular}
Weitzman's diversity measure $\delta_W$ is, in general, not submodular.
\end{restatable}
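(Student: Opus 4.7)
The plan is to exhibit an explicit finite counterexample: a four-point metric space in which submodularity of $\delta_W$ provably fails. I will choose distances coming from a weighted star, so the metric axioms are automatic and no separate case analysis of the triangle inequality is needed.

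Concretely, I take $\U = \{a,b,c,d\}$ and the metric induced by a star rooted at $d$ with edges of weight $1$ to $a$ and weight $5$ to each of $b$ and $c$. This yields $d(a,d) = 1$, $d(b,d) = d(c,d) = 5$, $d(a,b) = d(a,c) = 6$, and $d(b,c) = 10$. The first step is to observe that this metric is not an ultrametric, since $d(b,c) = 10 > 6 = \max(d(b,a), d(a,c))$, so we are outside the regime (discussed right after the proposition) in which $\delta_W$ is known to be well behaved.

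The main work is to unroll Weitzman's recursion on the subsets that enter the submodularity inequality. Setting $S := \{d\}$, $T := \{b,c,d\}$, and introducing $a$ as the element to be added, a direct computation gives $\delta_W(\{d\}) = 0$, $\delta_W(\{a,d\}) = 1$, $\delta_W(\{b,c,d\}) = 15$ (attained by inserting $d$ last, which contributes $d(d,\{b,c\}) = 5$ on top of $d(b,c) = 10$), and finally $\delta_W(\{a,b,c,d\}) = 17$ (attained by inserting $d$ last after the base $\{a,b,c\}$ of value $16$). Hence the marginal gain of $a$ on $S$ is $1 - 0 = 1$, whereas the marginal gain on the superset $T$ is $17 - 15 = 2$; this strictly violates submodularity.

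The main obstacle is only arithmetic, not conceptual: since $\delta_W$ is defined as a max over every choice of ``last inserted'' element, one must systematically enumerate elimination orders on every subset touched by the recursion to be sure no alternative order yields a larger value, and in particular compute the auxiliary quantities $\delta_W(\{a,b,c\}) = 16$ and $\delta_W(\{a,c,d\}) = \delta_W(\{a,b,d\}) = 7$. The guiding intuition behind the choice of weights is that the maximizer in Weitzman's recursion can shift non-monotonically when a new element is introduced: inside $T \cup \{a\}$ the cheap point $a$ can serve as a ``leaf'' that replaces $d$'s previous role, while $d$ itself still hangs cheaply off $\{a,b,c\}$; on the singleton $S = \{d\}$ no such structural substitution is available and $a$ can contribute only its short distance $1$.
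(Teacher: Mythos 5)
Your proof is correct — all the values of $\delta_W$ you report check out under Weitzman's recursion for your star metric, and the pair $S=\{d\}\subseteq T=\{b,c,d\}$ with added element $a$ gives marginal gains $1<2$, violating submodularity — and it takes essentially the same approach as the paper, which also exhibits an explicit four-point metric counterexample (with distances in $\{1,2\}$) and verifies the violation by direct computation. The only cosmetic difference is that you check the marginal (diminishing-returns) form of submodularity, matching the paper's Section~3 definition, while the paper's appendix checks the equivalent union/intersection inequality.
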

Another shortcoming of $\delta_W$ is its computational complexity:
even computing $\delta_W(S)$ for a given $S$ is, in general,
intractable \cite{DBLP:journals/corr/abs-2408-01657}. However, if $d$
is an ultrametric (i.e., it satisfies the strong triangle inequality
$d(a,c) \leq \max(\{d(a,b),d(b,c)\})$)
then the computation becomes tractable~\cite{weitzman1992diversity}.
Moreover, in this case, $\delta_W$
becomes essentially volume-based:

\begin{restatable}{theorem}{thmvolumevsweitzman}    
\label{thm:volume:vs:weitzman}
Let Weitzman's diversity measure be defined over a distance function
$d$ that is an ultrametric over some finite set $X$.  Then there
exists a volume assignment $\cV =(\cS, \mu, \beta)$ such that
$\delta_\cV = \delta_W +r$, where $r$ denotes the radius of the
ultrametric (i.e., the max. distance between any two elements in $X$).
\end{restatable}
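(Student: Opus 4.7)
The plan is to exploit the well-known correspondence between ultrametrics on a finite set and rooted trees with edge lengths (sometimes called \emph{ultrametric trees} or dendrograms). Given the ultrametric $d$ on $X$, I would first construct the associated rooted tree $T$ with leaf set $X$, assign to each node $v$ a height $h(v)$ with $h(v)=0$ for leaves and $h(v) = \max\{d(a,b) : a,b \text{ leaves below } v\}$ for internal nodes, and assign to each edge $(u,\tparent(u))$ the length $h(\tparent(u)) - h(u)$. By construction, $h(\troot)=r$ and, for any two leaves, $d(a,b) = h(\mathrm{LCA}(a,b))$.

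Next I would define the volume assignment $\cV = (\cS,\mu,\beta)$ by taking $\Omega$ to be the set of edges of $T$, $\cS = 2^\Omega$, $\mu$ the weighted counting measure with weights equal to the edge lengths, and $\beta(x)$ the set of edges on the unique path from leaf $x$ to $\troot$. Then $\delta_\cV(S) = \mu\bigl(\bigcup_{x\in S}\beta(x)\bigr)$ is exactly the total length of the minimal subtree of $T$ that connects all leaves in $S$ to the root, i.e., Faith's phylogenetic diversity $\mathrm{PD}(S)$.

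The heart of the argument is then to show that $\mathrm{PD}(S) = \delta_W(S) + r$ for every non-empty $S\subseteq X$, which I would prove by induction on $|S|$. For $|S|=1$, both sides equal $r$ since the path from any leaf to the root has total length $r$. For the inductive step, the key identity is that, for \emph{every} $a \in S$,
\[
\mathrm{PD}(S) - \mathrm{PD}(S\setminus\{a\}) \;=\; d(a, S\setminus\{a\}).
\]
This holds because, in $T$, the edges added when one includes $a$ in the spanning subtree of $S\setminus\{a\}\cup\{\troot\}$ are precisely those on the path from $a$ up to $v^*:=\mathrm{LCA}(a, S\setminus\{a\})$; the length of this path equals $h(v^*)$, and the ultrametric property gives $h(v^*) = \min_{b\in S\setminus\{a\}} h(\mathrm{LCA}(a,b)) = \min_{b\in S\setminus\{a\}} d(a,b) = d(a, S\setminus\{a\})$. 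Substituting into Weitzman's recursion and using the induction hypothesis yields
\[
\delta_W(S) = \max_{a\in S}\bigl(\mathrm{PD}(S\setminus\{a\}) - r + d(a,S\setminus\{a\})\bigr) = \mathrm{PD}(S) - r,
\]
since by the identity the expression inside the max is independent of the choice of $a$ and equals $\mathrm{PD}(S)-r$.

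The main obstacle, if any, lies in verifying the identity $\mathrm{PD}(S) - \mathrm{PD}(S\setminus\{a\}) = d(a,S\setminus\{a\})$ cleanly, since it requires one to relate the geometric decomposition of the Steiner tree increment to the closest-point distance in the ultrametric. Everything else is a direct translation between tree objects and the volume-based definitions, and the construction of $\cV$ is essentially forced by the correspondence between ultrametrics and rooted trees.
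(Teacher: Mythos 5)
Your proposal is correct and takes essentially the same route as the paper: both build the hierarchical tree associated with the ultrametric, take $\Omega$ to be its edge set, let $\beta(x)$ be the root-to-leaf path and $\mu$ the sum of edge lengths, and observe that $\delta_\cV(S)$ is the length of the spanning subtree of $S$ together with the root. The only difference is that the paper simply cites the known fact that Weitzman's measure under an ultrametric equals this subtree length minus $r$, whereas you supply the inductive verification via the identity $\mathrm{PD}(S)-\mathrm{PD}(S\setminus\{a\})=d(a,S\setminus\{a\})$ for every $a\in S$ -- a correct and welcome addition of detail.
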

The above result illustrates a key advantage of our volume-based
framework: it subsumes and generalizes the best-performing cases of
the distance-based approach. In particular, ultrametrics have been
identified as a desirable form of distance for diversity due
to their favorable computational properties \cite{DBLP:journals/corr/abs-2408-01657} and 
their suitability for modeling hierarchical systems~\cite{weitzman1992diversity}.
Note that a hierarchical notion of distance naturally fits relationally structured data
as is illustrated in \cite{%
DBLP:conf/icde/VeeSSBA08,%
DBLP:journals/debu/VeeSA09}, where the distance between two tuples is based on 
the first position at which they differ: tuples with longer common prefixes are 
considered closer. A typical example is a car relation with attributes such as 
`make', `model', `color', and `year'. Under this ultrametric, diversification is done according to the attribute order: first one tries to diversify `make' , then `model', then `color', and finally `year'.

By Theorem~\ref{thm:volume:vs:weitzman}, 
our
framework naturally captures ultrametric diversity functions as a
special case -- up to an additive constant -- through an appropriate volume
assignment. This demonstrates that volume-based diversity not only
provides a broader modeling language for diversity but also inherits
and extends the desirable theoretical guarantees associated with
ultrametric distances. As such, it offers a principled and unified
framework for defining well-behaved diversity measures.

        \section{Query Evaluation Under Volume-Based Diversity Functions}\label{sec:Exact}

In this section, we start our study of CQ evaluation under volume-based diversity functions in data complexity (i.e., the query is fixed). We start by showing that this problem is hard in general for most of the volume assignments $\cV$ presented in Section~\ref{sec:volume}. Despite this negative result, we show that under some reasonable assumptions on $\cV$, we can always find a ($1-1/e$)-approximation of a maximally diverse $k$-subset of the solutions in polynomial time under data complexity.

\paragraph{Hardness of exact computation} Let $\Sigma$ be a schema and $\cV =(\cS, \mu, \beta)$ be a volume assignment over $\TUPLES$. Further, let $Q$ be a CQ over $\Sigma$. We are interested in the following computational~problem:
\begin{center}
		\begin{tabular}{rl}
			\hline\\[-2ex]
			\textbf{Problem:} & $\CQEval[\Sigma, \cV, Q]$\\
			\textbf{Input:} & A database $D$ over $\Sigma$ and $k \geq 1$ \\
			\textbf{Output:} & $\argmax_{S \subseteq \sem{Q}(D)\,:\, |S| = k} \delta_\cV(S)$\\\\[-2.2ex]
			\hline
		\end{tabular}
\end{center}
In other words, given a database $D$ and a number $k \geq 1$, 
we want to compute a $k$-subset $S$ of~$\sem{Q}(D)$ that maximizes the volume diversity $\delta_{\cV}(S)$ over all $k$-subsets.
Note that $\Sigma$ and $Q$ are fixed; namely, we measure the computational resources of the problem in data complexity. 
Furthermore, the volume assignment $\cV$ and, thus, the diversity function $\delta_{\cV}$ are also fixed. We implicitly assume that if $k > |\sem{Q}(D)|$, then we output all the tuples in $\sem{Q}(D)$. In particular, if $\sem{Q}(D) = \emptyset$, then an algorithm for $\CQEval[\Sigma, \cV, Q]$ outputs $\emptyset$. 
By slight abuse of notation, we will formulate intractability results of $\CQEval[\Sigma, \cV, Q]$  in the form of ``$\mathsf{NP}$-hardness''. Strictly speaking, the $\mathsf{NP}$-hardness applies to the decision variant of the problem $\CQEval[\Sigma, \cV, Q]$, i.e., deciding if $\delta_{\cV}(S)$ is above a given threshold $th$ for some $S \subseteq \sem{Q}(D)$ subject to $|S| = k$.

We will always assume that $\cV$ and $\delta_{\cV}$ are fixed in all query evaluation problems studied in this paper (see also Section~\ref{sec:ACQs}). Moreover, for the sake of simplification, in this section we will assume that for any volume assignment $\cV$ and any set $S$ of tuples, computing $\delta_{\cV}(S)$ takes constant time\footnote{We are only making statements on tractability in this section. Section~\ref{sec:ACQs} then focuses on finer analysis {and does not make this assumption}.}. 
Intuitively, one can consider $\delta_{\cV}$ as a black box in the system that can be evaluated efficiently for a set of tuples whose complexity does not considerably affect the query evaluation process. 
Clearly, if we show that $\CQEval[\Sigma, \cV, Q]$ is hard, then it is even harder if the cost of computing $\delta_{\cV}$ is included. The other way around, if we show that $\CQEval[\Sigma, \cV, Q]$ can be evaluated in polynomial time, 
this result will be subjected that $\delta_\cV$ can also be efficiently evaluated (which is typically the case for natural volume assignments $\cV$).

Unfortunately, similar to previous work on query evaluation under diversity functions, we can show that $\CQEval$ is $\NP$-hard for most of the volume assignments $\cV$ presented in Section~\ref{sec:volume}.

\begin{restatable}{theorem}{theoCQhardness}    
\label{theo:CQ-hardness}
	The problem $\CQEval[\Sigma, \cV, Q]$ is $\NP$-hard if $\cV \in \{\Velem, \Vpos,\Velemw, \Vposw, \VQD\}$.
\end{restatable}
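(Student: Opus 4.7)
The plan is to establish NP-hardness for each $\cV$ by polynomial reductions from classical NP-hard combinatorial problems, fixing a schema and a query in each case. Throughout, we use that the decision variant of $\CQEval[\Sigma,\cV,Q]$ asks whether there exists a $k$-subset $S\subseteq\sem{Q}(D)$ with $\delta_\cV(S)\geq th$ for a given threshold~$th$.

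For $\cV \in \{\Velem, \Vpos, \Velemw, \Vposw\}$, I will reduce from \emph{Maximum 3-Dimensional Matching} (3DM): given pairwise disjoint $U_1, U_2, U_3$ of size $q$ and $\mathcal{F} \subseteq U_1 \times U_2 \times U_3$, decide whether $q$ triples of $\mathcal{F}$ suffice to cover all $3q$ elements. Fixing a schema with a single ternary relation $R$ and the query $Q(x_1, x_2, x_3) \leftarrow R(x_1, x_2, x_3)$, the reduction builds $D := \{R(a,b,c) \mid (a,b,c) \in \mathcal{F}\}$ and sets $k := q$. Because $U_1, U_2, U_3$ are pairwise disjoint, every data value in $D$ occurs at only one position, and hence for every $S \subseteq \sem{Q}(D)$ the quantities $\delta_{\Velem}(S)$ and $\delta_{\Vpos}(S)$ both equal the number of elements of $U_1 \cup U_2 \cup U_3$ covered by the tuples in $S$. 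Maximising diversity over $k$-subsets thus answers the 3DM question (the answer is ``yes'' iff the maximum equals $3q$). For the weighted variants $\Velemw$ and $\Vposw$ I take the constant weight function $w \equiv 1$; then $\muw$ coincides with $\mucount$ and the same reduction transfers.

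For $\cV = \VQD$ the query itself must carry the combinatorial structure, since each answer's ball depends on its witnessing homomorphisms. I will use a schema with a binary relation $R$ and a unary relation $P$, and the query $Q(x) \leftarrow R(x, y), P(y)$. Starting from an instance of Max $k$-Coverage with universe $U$ and a family $\mathcal{F} = \{F_1, \ldots, F_n\}$ of $3$-element subsets of $U$ (still NP-hard), I build the database containing $R(i, a)$ for every $i \in [n]$ and $a \in F_i$, together with $P(a)$ for every $a \in U$. Then $\sem{Q}(D) = \{Q(i) \mid F_i \neq \emptyset\}$, and a short computation gives $\betaQD(Q(i)) = \{R(i,a) : a \in F_i\} \cup \{P(a) : a \in F_i\}$. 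For a $k$-subset selecting indices $I \subseteq [n]$, the $R$-tuples do not overlap across different $i \in I$ while the $P$-tuples are shared, so
\[
\delta_{\VQD}(\{Q(i) : i \in I\}) \;=\; \sum_{i \in I}|F_i| \;+\; \Bigl|\bigcup_{i \in I} F_i\Bigr| \;=\; 3k + \Bigl|\bigcup_{i \in I} F_i\Bigr|,
\]
and hence maximising $\delta_{\VQD}$ over $k$-subsets is equivalent to solving Max $k$-Coverage on $\mathcal{F}$.

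The main subtlety will be the $\Vpos$ case: a direct reduction from Max $k$-Coverage with arbitrary $3$-element sets would be incorrect because a value appearing at different positions in different tuples is counted once in coverage but twice under $\Vpos$. Using the $3$-partite setting of 3DM sidesteps this cleanly by forcing each value to occur at a single position, which has the added benefit that the $\Velem$ and $\Vpos$ arguments collapse into a single proof. For $\VQD$, the care needed is to ensure that the $i$-specific part of the provenance contributes a constant additive term (achieved by taking every $F_i$ of equal size), so that the variable part of the diversity score faithfully reflects the coverage objective.
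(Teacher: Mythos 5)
Your proof is correct, but it takes a genuinely different route from the paper's. The paper reduces all five cases from \textsc{Independent Set} on $3$-regular graphs, encoding the incidence relation of the graph as a single (full, single-atom) query over an $Inc$ relation; a $k$-subset of answers attains diversity $4k$ exactly when the chosen vertices are pairwise non-adjacent. The price of that choice is the $\Vpos$ case, where the paper must blow the arity up to $6$ and run a pigeonhole argument to force each edge to occupy the \emph{same} coordinate in the tuples of both its endpoints. Your $3$-partite reduction from 3DM sidesteps this entirely: because $U_1,U_2,U_3$ are disjoint, every value lives at a unique position, so $\delta_{\Velem}$ and $\delta_{\Vpos}$ coincide on all subsets and four of the five cases collapse into one reduction (the constant-weight trick for the weighted variants is the same as the paper's). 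For $\VQD$, the paper again piggybacks on the \textsc{Independent Set} instance by adding unary $E$-atoms, whereas you reduce from \textsc{Exact Cover by 3-Sets} (phrased as Max $k$-Coverage with $3$-element sets) via the non-full query $Q(x)\leftarrow R(x,y),P(y)$; your computation $\delta_{\VQD}=3k+\lvert\bigcup_{i\in I}F_i\rvert$ is right, and keeping all $\lvert F_i\rvert$ equal correctly isolates the coverage objective in the non-constant term. Two minor points you should make explicit in a full write-up: (i) the degenerate case $\lvert\mathcal{F}\rvert<q$ in the 3DM reduction, where the paper's convention outputs all of $\sem{Q}(D)$ (trivially a no-instance, but worth a sentence); and (ii) a citation or one-line argument that coverage with $3$-element sets remains NP-hard (it contains X3C as the special case $k=q$, threshold $3q$).
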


Given that for simple volume assignments like $\Velem$ and $\Vpos$, the query evaluation problem is hard, we move in the rest of this section to provide good approximations to $\CQEval[\Sigma, \cV, Q]$.

\paragraph{Approximation of optimal solutions} Recall that $\Sigma$ is a schema, $Q$ is a CQ over $\Sigma$, and $\cV$ is a volume assignment over $\TUPLES$. We say that $S^* \subseteq \sem{Q}(D)$ with $|S^*| = k$ is an \emph{$(1-\epsilon)$-approximation} of $\CQEval[\Sigma, \cV, Q]$ on a database $D$ and a number $k \geq 1$ if, and only if:
\[
\delta_\cV(S^*) \ \geq \ (1-\epsilon) \cdot \max_{S \subseteq \sem{Q}(D)\,:\, |S| = k} \delta_\cV(S)
\]
In other words, the diversity of $S^*$ with respect to $\delta_\cV$ is not worse than $(1-\epsilon)$ times 
the diversity of the best solution, where the smaller $\epsilon \geq 0$, the better the approximation. 

Since $\CQEval[\Sigma, \cV, Q]$ is $\NP$-hard, we strive to find an $(1-\epsilon)$-approximation for some $\epsilon \geq 0$. 
Given that $\delta_\cV$ is monotone and submodular by Proposition \ref{prop:mon-subm}, we can take advantage of the algorithmic theory of submodular set functions to find the following approximation~\cite{nemhauser1978analysis}. 

\begin{algorithm}[t]
	\DontPrintSemicolon
	\KwIn{A database $D$ and a value $k \geq 1$.}
	\KwOut{A $k$-diversity set $S \subseteq \sem{Q}(D)$ with respect to $\delta_{\cV}$.}
	$S \gets \emptyset$ \\
	\For{$i=1$ \KwTo $k$}{
		$t^* \gets \argmax_{t \in \sem{Q}(D)} \delta_{\cV}(S \cup \{t\})$ \label{line:greedy}\\
		$S \gets S \cup \{t^*\}$ \\
	}
	\Return{$S$}
	\caption{Greedy algorithm for finding a $(1-\nicefrac{1}{e})$-approximation of the problem $\CQEval[\Sigma, \cV, Q]$ for a schema $\Sigma$, a volume assignment $\cV$, and a CQ $Q$ over~$\Sigma$.}
	\label{alg:greedy}
\end{algorithm}

\begin{restatable}{theorem}{theoCQapproximation}    
\label{theo:CQ-approximation}
	One can compute an $(1-\nicefrac{1}{e})$-approximation of $\CQEval[\Sigma, \cV, Q]$ for every database $D$ and $k \geq 1$ in polynomial time in $|D|$, where $e$ is the Euler number. %
\end{restatable}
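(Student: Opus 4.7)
The plan is to apply the classical Nemhauser–Wolsey–Fisher result: for any monotone submodular function on a ground set, the greedy algorithm that repeatedly picks an element of maximum marginal gain produces, after $k$ iterations, a $k$-subset whose value is within a factor of $(1-\nicefrac{1}{e})$ of the optimal $k$-subset. Algorithm~\ref{alg:greedy} is exactly this greedy procedure, with ground set $\sem{Q}(D)$ and objective $\delta_\cV$, so the approximation guarantee is immediate once the hypotheses are verified.

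First I would verify the hypotheses. By Proposition~\ref{prop:mon-subm}, the volume-based diversity function $\delta_\cV$ is monotone and submodular on $\finite(\U)$, and in particular on $\finite(\sem{Q}(D))$. The optimization problem defining $\CQEval[\Sigma,\cV,Q]$ is precisely the cardinality-constrained maximization $\max \{\delta_\cV(S) : S \subseteq \sem{Q}(D), |S|=k\}$. Hence the greedy guarantee applies directly: if $S_{\mathrm{greedy}}$ denotes the output of Algorithm~\ref{alg:greedy} and $S^\ast$ denotes an optimum, then $\delta_\cV(S_{\mathrm{greedy}}) \geq (1-\nicefrac{1}{e}) \cdot \delta_\cV(S^\ast)$. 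For completeness one can reproduce the standard argument: for the greedy sets $S_0 \subsetneq S_1 \subsetneq \cdots \subsetneq S_k$, submodularity and the greedy choice yield $\delta_\cV(S^\ast) - \delta_\cV(S_i) \leq k \cdot (\delta_\cV(S_{i+1}) - \delta_\cV(S_i))$, from which the $(1-\nicefrac{1}{e})$-bound follows by unrolling the recurrence.

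Next I would bound the runtime in data complexity. Since $\Sigma$ and $Q$ are fixed, the answer set $\sem{Q}(D)$ can be computed in polynomial time in $|D|$, and $|\sem{Q}(D)| = O(|D|^{\arity(Q)})$, which is polynomial in $|D|$. The outer loop of Algorithm~\ref{alg:greedy} runs at most $\min(k,|\sem{Q}(D)|)$ times; without loss of generality $k \leq |\sem{Q}(D)|$ (otherwise the algorithm returns all of $\sem{Q}(D)$), so the number of iterations is polynomial in $|D|$. Each iteration scans every $t \in \sem{Q}(D)$ and evaluates $\delta_\cV(S \cup \{t\})$, which by the standing assumption of this section takes constant time. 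The total runtime is therefore $O(k \cdot |\sem{Q}(D)|)$, polynomial in $|D|$.

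There is essentially no obstacle in this proof beyond correctly invoking the Nemhauser–Wolsey–Fisher theorem; the submodularity and monotonicity of $\delta_\cV$ are the substantive ingredients and have already been established, while the polynomial-time bound follows from fixed-query CQ evaluation and the constant-time oracle assumption on $\delta_\cV$. The only caveat worth flagging in the write-up is that the polynomial-time claim is contingent on this oracle assumption, a restriction that the authors explicitly note is revisited in Section~\ref{sec:ACQs}.
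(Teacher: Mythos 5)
Your proposal is correct and follows essentially the same route as the paper's proof: both verify monotonicity and submodularity via Proposition~\ref{prop:mon-subm}, materialize $\sem{Q}(D)$ in polynomial time (possible since $Q$ is fixed), and then invoke the Nemhauser--Wolsey--Fisher greedy guarantee as realized by Algorithm~\ref{alg:greedy}. Your explicit unrolling of the greedy recurrence and your caveat about the constant-time oracle assumption on $\delta_\cV$ are accurate but add nothing beyond what the paper already relies on.
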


\begin{proof}%
	In~\cite{nemhauser1978analysis}, Nemhauser, Wolsey, and Fisher showed that for every monotone submodular set function $f:\finite(\U) \rightarrow \bbR$ and $k\geq 1$ one can compute in polynomial time a $k$-subset~$A$ of $\U$ such that $f(A) \geq (1-\nicefrac{1}{e}) \cdot \max_{B \subseteq \U: |B| = k} f(B)$.
	Since  $\delta_\cV$ is submodular and monotone and $Q$ is fixed, one can compute the set $\sem{Q}(D)$ in polynomial time over $D$ and then apply the result in~\cite{nemhauser1978analysis} to retrieve a  $(1-\nicefrac{1}{e})$-approximation of $\delta_\cV$ over  $\sem{Q}(D)$ restricted to subsets of~size~$k$. In Algorithm~\ref{alg:greedy}, we depict this procedure for $\delta_{\cV}$ which follows a greedy strategy: starting from $S = \emptyset$; in every iteration it finds a tuple $t \in \sem{Q}(D)$ that maximizes the \emph{marginal diversity} of $\delta_\cV$, namely, $\delta_\cV(S \cup \{t\}) \setminus \delta_\cV(S)$. After the $k$-th iteration,  it outputs $S$. By~\cite{nemhauser1978analysis}, this procedure achieves a $(1-\nicefrac{1}{e})$-approximation of $\CQEval[\Sigma, \cV, Q]$ for every database $D$ and $k \geq 1$, and runs in polynomial time.
\end{proof}

The previous result is indeed a direct consequence of Nemhauser et al. techniques on the maximization of submodular set functions. Nevertheless, one must compare the approximation ratio obtained for volume-based diversity functions with that of the best approximation found for distance-based analogs. Recently, approximation algorithms were proposed in~\cite{DBLP:journals/pacmmod/AgarwalEHSY24} for CQ evaluation under distance-based diversity functions. For $\deltamin$ 
under the Hamming or Euclidean metrics, the best approximation ratio is $(1 - (\nicefrac{1}{2} + \epsilon))$, and the running time of the algorithms depends on $\epsilon$. For $\deltasum$, the best approximation ratio is $(1 - \nicefrac{2}{k})$ for Hamming distance (for Euclidean distance it is $(1-\nicefrac{1}{2})$) but the running time depends on $k^3$. 
Instead, the approximation ratio for volume-based diversity functions is $(1-\nicefrac{1}{e})$ 
and works \emph{for every volume assignment} that can be computed in polynomial time (in particular, for most of the examples presented in Section~\ref{sec:volume}). Furthermore, Algorithm~\ref{alg:greedy} can be easily incorporated into the current query evaluation strategy of any database management system by finding all tuples in $\sem{Q}(D)$ and then applying Algorithm~\ref{alg:greedy}. 

We want to end this section by showing that, in general,  $(1-\nicefrac{1}{e})$-approximation is the best one 
can get  for volume-based diversity functions.

\begin{restatable}{theorem}{theoCQapproximationhardness}    
\label{theo:CQ-approximation-hardness}
	There exists a schema $\Sigma$, a volume assignment $\cV$, and a CQ $Q$ such that a $(1-\nicefrac{1}{e})$-approximation of $\CQEval[\Sigma, \cV, Q]$ is the best that one can get in polynomial time data complexity, unless~$\PTIME = \NP$. 
\end{restatable}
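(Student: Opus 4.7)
The plan is to reduce from Maximum $k$-Coverage ($\maxCoverage$), which by Feige's theorem is $\NP$-hard to approximate within ratio $(1-\nicefrac{1}{e}+\eta)$ for any $\eta>0$. Since data complexity requires a fixed $\Sigma$, $\cV$ and $Q$, I would commit up front to a single very simple triple and rely on the \emph{volume assignment} itself to decode the combinatorial content of a Max-$k$-Coverage instance embedded in the database.

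Concretely, I would take $\Sigma$ to contain a single unary relation $R$, the query $Q(x)\leftarrow R(x)$, and the volume assignment $\cV=(\cS,\mu,\beta)$ with $\Omega=\bbN$, $\cS=2^{\bbN}$, $\mu=\mucount$, and the ball function
\[
\beta(R(i)) \; := \; \{\, j \in \bbN \mid \text{bit $j$ of the binary expansion of $i$ is $1$}\,\}.
\]
This $\beta$ is computable in time polynomial in the bit-length of $i$, so the implicit assumption that $\delta_\cV$ is tractable is respected. Given a Max-$k$-Coverage instance with universe $U=\{0,\dots,m-1\}$, sets $S_1,\dots,S_n\subseteq U$, and target size $k$, I would build the database $D:=\{\,R(c_j) \mid 1\leq j\leq n\,\}$ where $c_j:=\sum_{u\in S_j} 2^u$. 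Each $c_j$ has at most $m$ bits, so $|D|$ is polynomial in the instance. (If two sets coincide, they yield the same $c_j$; I would simply deduplicate, which can only help the coverage optimum.) Then $\sem{Q}(D)=\{Q(c_j)\mid 1\leq j\leq n\}$, and for any $I\subseteq\{1,\dots,n\}$ with $T:=\{Q(c_j):j\in I\}$,
\[
\delta_\cV(T) \; = \; \Bigl|\bigcup_{j\in I}\beta(R(c_j))\Bigr| \; = \; \Bigl|\bigcup_{j\in I} S_j\Bigr|,
\]
so the $\CQEval$ optimum at size $k$ coincides exactly with the Max-$k$-Coverage optimum.

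Because this is an \emph{exact} equality of optima (not merely an approximation-preserving reduction with loss), any polynomial-time $(1-\epsilon)$-approximation for $\CQEval[\Sigma,\cV,Q]$ immediately yields a $(1-\epsilon)$-approximation for $\maxCoverage$. Invoking Feige's theorem, $(1-\epsilon)$-approximation with $\epsilon<\nicefrac{1}{e}$ is $\NP$-hard, matching the upper bound from Theorem~\ref{theo:CQ-approximation}.

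The main obstacle I foresee is a conceptual one: justifying that the somewhat artificial bit-decoding $\beta$ is still a legitimate instance of the volume-based framework. The definition in Section~\ref{sec:volume}, however, places no restriction on $\beta$ beyond being a function into $\cS$, and our $\beta$ is efficiently computable and uses only elementary measure-theoretic ingredients (the counting measure on subsets of $\bbN$). A secondary concern is whether one might want a more ``natural'' hardness witness, e.g.\ using $\Velem$ or $\VQD$; with those, a direct reduction leaves an additive slack term like $ks$ (coming from the provenance tuples indexed by the set identifier), so the threshold $(1-\nicefrac{1}{e})$ is not reached on the nose without extra amplification tricks that the fixed-schema constraint makes awkward. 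The bit-encoding sidesteps this and delivers the matching lower bound cleanly.
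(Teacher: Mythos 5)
Your proposal is correct and follows essentially the same route as the paper's proof: both reduce from $\maxCoverage$ via a fixed single-atom query $Q(x)\leftarrow R(x)$ with the counting measure, letting the ball function recover each set of the coverage instance from the data value, so that $\delta_\cV$ of a $k$-subset of answers equals the coverage of the corresponding $k$ sets, and then both invoke Feige's inapproximability result. The only (immaterial) difference is the encoding of sets as data values --- the paper stores finite sets directly as constants with $\beta$ the identity, whereas you encode them as integers and let $\beta$ decode the binary expansion.
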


\begin{proof}[Proof sketch]
	The proof is by encoding the maximum coverage problem into a volume assignment $\cV$. 
	It is well-known that the maximum coverage problem is hard to approximate beyond $(1-\nicefrac{1}{e})$-approximation ratio, unless $\PTIME = \NP$~\cite{DBLP:journals/jacm/Feige98}.
\end{proof}

	\section{Approximating Volume-based Diverse Answers Under Combined Complexity}\label{sec:ACQs}

In the following, we aim to lift the results of Section~\ref{sec:Exact} to the combined complexity case and provide a finer analysis.
{Note, in this section, we include the time required to compute $\delta_{\cV}(S)$ in our analysis.} We start by stating the main problem and recalling some standard notation for efficient CQ evaluation. Then, we present the main approach for efficient CQ  evaluation under volume-based diversity functions and apply it to some specific volume assignments. We conclude by demonstrating how to generalize the technique by connecting it to the ranked enumeration problem of CQ evaluation. 

\paragraph{Problem statement and main definitions}
In this section, we aim to solve the following problem:
\begin{center}
		\begin{tabular}{rl}
			\hline\\[-2ex]
			\textbf{Problem:} & $\CQEval[\Sigma, \cV]$\\
			\textbf{Input:} & A database $D$ and a CQ $Q$ over $\Sigma$, and $k \geq 1$ \\
			\textbf{Output:} & $\argmax_{S \subseteq \sem{Q}(D)\,:\, |S| = k} \delta_\cV(S)$\\\\[-2.2ex]
			\hline
		\end{tabular}
\end{center}
where $\Sigma$ and $\cV$ are a fixed schema and a fixed volume assignment.
Contrary to Section~\ref{sec:Exact}, we cannot afford 
to find a $(1-\nicefrac{1}{e})$-approximation by first computing $\sem{Q}(D)$ (whose size is $O(|D|^{|Q|})$)
and then applying Algorithm~\ref{alg:greedy}. 
In other words, the set $\sem{Q}(D)$ is compactly represented by $(Q, D)$, and the challenge is to find the most diverse $k$-subset or an approximation without computing $\sem{Q}(D)$.

Recall that even determining the existence of answers to CQs is $\NP$-hard in 
combined complexity~\cite{DBLP:conf/stoc/ChandraM77}.
Thus, we will restrict ourselves to CQs with bounded \textit{fractional hypertree width} ($\fhw$) \cite{GroheM14}. 
To that end, we briefly recall the notions of tree decompositions and $\fhw$.

    Let $Q(\bx) \leftarrow R_1(\bx_1), \ldots, R_{m}(\bx_{m})$ be a CQ using variables in $\Var$. 
    For the sake of simplification, in the sequel, we assume that every sequence $\bx_i$ does not repeat variables and, thus, 
    by slight abuse of notation, we may treat $\bx_i$ as a set (otherwise, one can remove duplicate variables by rewriting $Q$ and preprocessing $D$ in linear time w.r.t.\ $|D|$).
    A \emph{tree decomposition} of $Q$ is a tuple $(T,\chi)$ where $T=(V(T),E(T))$ is a rooted tree and $\chi\colon V(T) \mapsto 2^{\Var}$ assigns to each $v\in V(T)$ a subset $\chi(v)\subseteq \Var$ called a \emph{bag}.
    Additionally, the following properties have to be satisfied:
    \begin{enumerate}
        \item for every variable $x\in \Var$, the set $\{v \in V(T) \mid x\in \chi(v)\}$ induces a connected subtree of $T$; and
        \item for every relation $R_i(\bx_i)$, there exists $v\in V(T)$ that contains all of $\bx_i$ in its bag $\chi(v)$.
    \end{enumerate}
    The fractional hypertree width of a tree decomposition $(T,\chi)$  is $\max_{v \in V(T)} \rho^*(\chi(v))$ where $\rho^*(\chi(v))$ is the minimum fractional edge cover of the hypergraph induced by $\chi(v)$ over $Q(\bar{x})$. 
    The \emph{fractional hypertree width $\fhw(Q)$ of $Q$} is the minimum fractional hypertree width among all tree decompositions of $Q$. 
   	Finally, a conjunctive query is called an \emph{acyclic CQ} (ACQ) iff $\fhw(Q)=1$.

\paragraph{Approximation through maximizing the marginal diversity} 
Motivated by Theorem~\ref{theo:CQ-approximation} and Algorithm~\ref{alg:greedy}, 
a reasonable strategy to find an approximation for $\CQEval[\Sigma, \cV]$ is to compute the next tuple $t$ that maximizes the marginal diversity of $\delta_\cV(S)$. 
In other words, we have to consider the problem of computing greedily the next best solution (see line 3 in Algorithm~\ref{alg:greedy}):
\begin{center}
		\begin{tabular}{rl}
			\hline\\[-2ex]
			\textbf{Problem:} & $\CQNext[\Sigma, \cV]$\\
			\textbf{Input:} & A database $D$ and a CQ $Q$ over $\Sigma$, and a subset $S\subseteq \sem{Q}(D)$ \\
			\textbf{Output:} & $\argmax_{t\in \sem{Q}(D)} \delta_{\cV}(S\cup \{t\})$\\\\[-2.2ex]
			\hline
		\end{tabular}
\end{center}
Similar to $\CQEval[\Sigma, \cV]$, the main challenge is to compute $t$ from $D$, $Q$, and $S$, without necessarily computing $\sem{Q}(D)$. Naturally, if we can solve $\CQNext[\Sigma, \cV]$ efficiently, then we can apply Algorithm~\ref{alg:greedy} by calling $\CQNext[\Sigma, \cV]$ in line~\ref{line:greedy} and solve $\CQEval[\Sigma, \cV]$. In other words, we get the following result.

\begin{restatable}{theorem}{thmimpapprox}    
	\label{thm:impapprox}
	If $\CQNext[\Sigma, \cV]$ can be solved in time $O(f)$ for some function $f$, then the problem $\CQEval[\Sigma, \cV]$ can be $(1-\nicefrac{1}{e})$-approximated in time $O(k\cdot f)$.
\end{restatable}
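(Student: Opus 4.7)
The plan is to adapt the greedy procedure of Algorithm~\ref{alg:greedy} that was used in the proof of Theorem~\ref{theo:CQ-approximation}, but replace the line that materializes $\sem{Q}(D)$ and then picks a maximizer of the marginal diversity by a single call to the assumed $\CQNext[\Sigma, \cV]$ oracle. Concretely, I would start from $S \gets \emptyset$ and, for $i = 1, \dots, k$, invoke the oracle on input $(D, Q, S)$ to obtain the tuple $t^* \in \argmax_{t \in \sem{Q}(D)} \delta_{\cV}(S \cup \{t\})$ and then set $S \gets S \cup \{t^*\}$. Since $\CQNext$ already returns exactly the argmax needed by the greedy strategy, the resulting set $S$ is identical (up to ties) to the one produced by Algorithm~\ref{alg:greedy}.

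For correctness, I would invoke the same ingredient as in Theorem~\ref{theo:CQ-approximation}: by Proposition~\ref{prop:mon-subm}, $\delta_{\cV}$ is monotone and submodular on $\finite(\TUPLES)$, hence also on the sub-lattice $\finite(\sem{Q}(D))$. Therefore, the classical result of Nemhauser, Wolsey, and Fisher~\cite{nemhauser1978analysis} on the greedy maximization of monotone submodular set functions under a cardinality constraint guarantees that the output $S$ satisfies
\[
\delta_{\cV}(S) \;\geq\; \Bigl(1-\tfrac{1}{e}\Bigr) \cdot \max_{S' \subseteq \sem{Q}(D) \,:\, |S'| = k} \delta_{\cV}(S'),
\]
which is precisely the required $(1-\nicefrac{1}{e})$-approximation of $\CQEval[\Sigma, \cV]$.

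For the running time, I would simply observe that the outer loop runs $k$ iterations and that the only non-constant work inside an iteration is the oracle call, which costs $O(f)$ by assumption; the update $S \gets S \cup \{t^*\}$ is negligible. Summing over $k$ iterations gives $O(k \cdot f)$.

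There is essentially no hard step here: the statement is a reduction-style result, and the only subtlety is to make sure the approximation guarantee from~\cite{nemhauser1978analysis} genuinely transfers to our setting. This is immediate once we observe that (i) volume-based diversity functions are monotone and submodular by Proposition~\ref{prop:mon-subm}, and (ii) the domain $\sem{Q}(D)$ from which the greedy algorithm is allowed to pick is a fixed ground set, so the cardinality-constrained greedy analysis of Nemhauser et al.\ applies verbatim. Hence the entire argument fits in a few lines and I do not expect any technical obstacle.
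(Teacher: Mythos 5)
Your proposal is correct and follows essentially the same route as the paper: the paper's proof likewise observes that Theorem~\ref{thm:impapprox} follows by running Algorithm~\ref{alg:greedy} with line~\ref{line:greedy} implemented via the assumed $\CQNext[\Sigma, \cV]$ procedure, invoking the submodularity and monotonicity of $\delta_{\cV}$ (Proposition~\ref{prop:mon-subm}) together with the Nemhauser--Wolsey--Fisher guarantee, and charging $O(f)$ per iteration over $k$ iterations. No gaps.
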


The converse of Theorem \ref{thm:impapprox} does not necessarily hold. {In particular, we do not know whether hardness of $\CQNext[\Sigma, \cV]$ implies that $\CQEval[\Sigma, \cV]$ cannot be approximated (see Section~\ref{sec:conclusions} for further ideas).}
However, at least, if $\CQNext[\Sigma, \cV]$ is $\NP$-hard for the singleton case (fixing $S=\emptyset$), also the problem (exact version) $\CQEval[\Sigma, \cV]$ must be $\NP$-hard (for $k=1$).

We now revisit the volume assignments from Section~\ref{sec:volume}
and separate the hard and easy cases for solving $\CQNext[\Sigma, \cV]$.
We start with the hard cases which, thus, do not translate to approximability results of $\CQEval[\Sigma, \cV]$:

\begin{restatable}{theorem}{theoHardnessCompbinedComplexity}    
\label{theo:hardnessCombinedComplexity}
Unless $\mathsf{P}=\mathsf{NP}$,
	the problem $\CQNext[\Sigma, \cV]$ cannot be solved in polynomial time for  $\cV \in \{\Velem, \Velemw, \VQD\}$,
    even if we only allow ACQs and subsets $S=\emptyset$.
\end{restatable}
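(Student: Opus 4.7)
The plan is to reduce from the directed Hamiltonian Path problem in all three cases. Since $S = \emptyset$, the objective of $\CQNext[\Sigma, \cV]$ collapses to finding a single answer $t \in \sem{Q}(D)$ that maximizes $\delta_{\cV}(\{t\})$, and I would phrase hardness via the standard threshold decision variant (``is there $t$ with $\delta_{\cV}(\{t\}) \geq \theta$?''), from which the search variant is at least as hard.

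Given a directed graph $G = (V_G, E_G)$ with $|V_G| = k$, I would use the path-shaped ACQ
\[
Q(x_1, \dots, x_k) \leftarrow E(x_1, x_2), E(x_2, x_3), \dots, E(x_{k-1}, x_k),
\]
whose canonical tree decomposition is the chain of bags $\{x_i, x_{i+1}\}$, witnessing $\fhw(Q) = 1$. For $\Velem$ I would store the edges of $G$ in the binary relation $E$ of $D$; then for every answer,
\[
\delta_{\Velem}(\{Q(a_1, \dots, a_k)\}) \ = \ |\{a_1, \dots, a_k\}| \ \leq \ k,
\]
with equality iff $a_1, \dots, a_k$ are pairwise distinct, i.e.\ iff $a_1 \to \cdots \to a_k$ is a Hamiltonian path in $G$. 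Hence deciding whether the per-tuple maximum reaches $k$ decides \textsc{HamPath}. The case $\Velemw$ follows at once by instantiating the weight function as $w \equiv 1$ on $\D$, so that $\muw$ coincides with $\mucount$ on finite sets and the reduction carries over verbatim.

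For $\VQD$ the distinctness signal has to travel through the provenance rather than through the answer itself. I would extend the schema with a unary relation $U$ and use
\[
Q(x_1, \dots, x_k) \leftarrow U(x_1), \dots, U(x_k), E(x_1, x_2), \dots, E(x_{k-1}, x_k),
\]
which is still acyclic with the same chain of bags. The database stores the vertices of $G$ in $U$ and its edges in $E$. Because $Q$ is full there is a unique homomorphism $h$ with $h(\bx) = \ba$, so
\[
\betaQD(Q(\ba)) \ = \ \{U(a_i) : 1 \leq i \leq k\} \ \cup \ \{E(a_i, a_{i+1}) : 1 \leq i < k\},
\]
and its cardinality equals the number of distinct vertices among the $a_i$ plus the number of distinct edges traversed. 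This sum is bounded by $2k - 1$ and is attained exactly when all $a_i$ are pairwise distinct, i.e.\ when $a_1 \to \cdots \to a_k$ is a Hamiltonian path in $G$.

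The main subtlety I expect is precisely the $\VQD$ step: a walk on a small vertex set can traverse up to $k-1$ distinct edges (think of Eulerian-type walks on a dense small subgraph), so the unary atoms $U(x_i)$ are genuinely needed to force saturation of the provenance bound to coincide with distinctness of all $a_i$; without them the reduction would be incorrect. The remaining steps — verifying that each constructed $Q$ is acyclic, that $\CQNext$ is polynomially related to its threshold decision version, and that the construction runs in polynomial time in $|V_G|$ — are routine.
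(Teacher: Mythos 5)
Your proposal is correct and follows essentially the same route as the paper: a reduction from directed Hamiltonian path using the path-shaped ACQ, with $\delta_{\Velem}$ of a singleton counting distinct vertices of the walk, $\Velemw$ handled by setting $w \equiv 1$, and $\VQD$ handled by adding unary vertex atoms so that the which-provenance of an answer has size $2k-1$ exactly for Hamiltonian paths. The subtlety you flag about needing the unary atoms for $\VQD$ is exactly the role the relation $V$ plays in the paper's construction.
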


\begin{proof}[Proof sketch]
We  illustrate the basic idea by proving $\NP$-hardness of the apparently simplest case $\cV = \Velem$.
The proof is by reduction from (the directed version of) Hamiltonian path:
Given an instance $G = (V(G), E(G))$ of Hamiltonian path, we define an instance $(D,Q,S)$ of $\CQNext[\Sigma, \cV]$
as follows: database $D$ consists of a single binary relation $E$ storing the edges of $G$, we set $ S= \emptyset$,
and, for $n = |V(G)|$, we define the ACQ $Q$ as follows:
        \[
        Q(x_1,\dots,x_{n}) \ \leftarrow \  E(x_1,x_2),\dots,E(x_{n-1},x_{n}).
        \]
        Notice that a solution $Q(h(\overline{x}))\in \sem{Q}(D)$ corresponds to a walk in $G$ and $\delta_{\Velem}$ applied to singletons (i.e., $\delta_{\Velem}(\{Q(h(\overline{x}))\})$) counts the number of distinct vertices used in the corresponding walk. Hence,  $G$ is a positive instance of Hamiltonian path if, and only if, 
        the solution to this instance of $\CQNext[\Sigma, \Velem]$ yields an answer $Q(h(\overline{x}))$ with 
        $\delta_{\Velem}(\{Q(h(\overline{x}))\}) = n$.
\end{proof}

Next, we show that even seemingly simple changes in the diversity function can affect the tractability of $\CQNext[\Sigma, \cV]$ and, thus, 
naturally lead to the approximability of $\CQEval[\Sigma, \cV]$ due to Theorem~\ref{thm:impapprox}.

\begin{restatable}{theorem}{theoTractableCompbinedComplexity}    
\label{theo:tractableCombinedComplexity}
Restricted to ACQs, the problem $\CQNext[\Sigma, \cV]$ can be solved in time $O(|Q|\cdot |D|)$ for  
$\cV \in \{\Vpos, \Vposw\}$ when only allowing ACQs.
Hence, in this case, 
$\CQEval[\Sigma, \Velem]$ can be $(1-\nicefrac{1}{e})$-approximated in time $O(k\cdot|Q|\cdot |D|)$.
\end{restatable}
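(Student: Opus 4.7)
The plan is to reduce $\CQNext[\Sigma, \cV]$ for $\cV \in \{\Vpos, \Vposw\}$ to a max-weight ACQ evaluation problem whose objective decomposes additively across head positions, and then to solve that problem by a single Yannakakis-style bottom-up sweep of a join tree in time $O(|Q|\cdot|D|)$. The statement about $\CQEval$ will then follow immediately by invoking Theorem~\ref{thm:impapprox} with $f = |Q|\cdot|D|$.

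Let $Q(x_1, \ldots, x_n)$ be an ACQ and $S \subseteq \sem{Q}(D)$. First I would precompute the ``already covered'' set $U := \bigcup_{t \in S}\betapos(t) \subseteq \D \times \{1,\ldots,n\}$, stored in a hash table so that membership queries take $O(1)$; inside the greedy outer loop of Algorithm~\ref{alg:greedy}, $U$ can be maintained incrementally at cost $O(n)$ per update. The key observation is that for any candidate answer $Q(b_1,\ldots,b_n)\in\sem{Q}(D)$,
\[
\delta_{\Vpos}(S \cup \{Q(\bar b)\}) - \delta_{\Vpos}(S) \;=\; |\betapos(Q(\bar b)) \setminus U| \;=\; \sum_{i=1}^n w_i(b_i),
\]
where $w_i(b) := 1$ if $(b,i) \notin U$ and $0$ otherwise. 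An identical decomposition holds for $\cV = \Vposw$ after setting $w_i(b) := w((b,i))$ when $(b,i)\notin U$ and $0$ otherwise. Hence maximizing $\delta_\cV(S\cup\{t\})$ over $t \in \sem{Q}(D)$ is equivalent to maximizing the separable weighted sum $\sum_{i=1}^n w_i(b_i)$ over answers of $Q$.

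Next I would solve this max-weight ACQ answer problem. Fix a join tree $(T,\chi)$ of $Q$ and root it arbitrarily. Using the running-intersection property, assign each head variable $x_i$ to the unique topmost bag containing $x_i$, so that per-position weights are counted exactly once. Then run the standard bottom-up DP: for each node $v$ with atom $R_v$ and each tuple $t \in R_v^D$, compute
\[
M(v,t) \;=\; \sum_{x_i \text{ owned by } v} w_i(t[x_i]) \;+\; \sum_{c \text{ child of } v}\; \max_{\substack{t_c \in R_c^D \\ t_c \sim t}} M(c, t_c),
\]
where $t_c \sim t$ means that $t_c$ and $t$ agree on their shared variables. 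By grouping child tuples by their projection onto the shared variables and precomputing the per-group maximum, the project-and-max step on each join-tree edge takes $O(|D|)$; with $O(|Q|)$ nodes and at most $|D|$ tuples per atom, the full DP (together with a backtracking pass to recover the maximizer) runs in $O(|Q|\cdot|D|)$. The answer attaining the maximum at the root is precisely the solution to $\CQNext[\Sigma, \cV]$.

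I expect the main obstacle to be the bookkeeping that guarantees each head variable contributes to the objective exactly once and that the project-and-max step is carried out in linear time per edge; both are routine once the join tree and variable ownership are fixed, and neither step depends on $\cV$ being $\Vpos$ or $\Vposw$ beyond the separability of the induced weights across head positions. Plugging this $O(|Q|\cdot|D|)$ bound into Theorem~\ref{thm:impapprox} yields the announced $(1-\nicefrac{1}{e})$-approximation of $\CQEval$ in time $O(k\cdot|Q|\cdot|D|)$.
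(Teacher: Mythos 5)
Your proposal is correct and follows essentially the same route as the paper: both decompose the marginal diversity of $\Vpos$/$\Vposw$ additively over head positions, charge each head variable to a single covering atom so weights are counted once, and maximize the resulting separable sum over the ACQ's answers in $O(|Q|\cdot|D|)$ before invoking Theorem~\ref{thm:impapprox}. The only cosmetic difference is that the paper phrases the maximization as a sum-product query over the tropical semiring and cites known results, whereas you spell out the equivalent Yannakakis-style bottom-up dynamic program explicitly.
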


\begin{proof}[Proof sketch]
We explain why  $\CQNext[\Sigma, \Vpos]$ is tractable for ACQs $Q(\bx)$.
To that end, let $D$ be a database, %
and $h_1,\dots h_k$ homomorphisms from $Q$ to $D$, i.e., $S=\{Q(h_1(\bx)),\dots,$ $ Q(h_k(\bx))\}\subseteq \sem{Q}(D)$.%
Consider the \textit{marginal} diversity for a new solution $Q(h(\bx))\in \sem{Q}(D)$:
\[
\delta_{\Vpos}(S\cup \{Q(h(\bx))\})-\delta_{\Vpos}(S) \ = \  \sum_{x\in \bx}\alpha_{x}, \quad \text{with }\alpha_{x}=\begin{cases}
1 & \text{if } \forall i \colon h(x)\neq h_i(x),\\
0 & \text{if } \exists i \colon h(x) = h_i(x).\\
\end{cases}
\]
That is, we count the number of \textit{new} values.
We can cast this then as a sum-product query over the tropical semi-ring $\mathbb{R}_{\max}:=(\mathbb{R}\cup\{\infty\}, +,\max)$.
Doing so shows that we can find the element that maximizes the marginal diversity in linear time.
To do so, for every $x\in \bx$ let us choose a \textit{covering} relation $R^x:=R_i(\bx_i)$ used in $Q$ where $x\in \bx_i$.
Then, we can define the $\mathbb{R}_{\max}$-relations $R^*_1,\dots,R^*_m$.
That is, for tuple $R_j(\ba)$ in the database, we add tuples $R^*_j(\ba)$ to the database and annotate it with the number of \textit{new} values $\ba$ adds at positions $x$ such that $R_j$ covers $x$.
Then, as every variable $x\in \bx$ is covered by exactly one relation, we have: 
\begin{align}
    \delta_{\Vpos}(S\cup \{Q(h(\bx))\})-\delta_{\Vpos}(S) = \sum_{i} R_i^*(h(\bx_i)) \label{eq:spq}
\end{align}
for homomorphism $h$ such that $Q(h(\bx))\in \sem{Q}(D)$.
Then, due to results on sum-product queries~\cite{DBLP:conf/pods/KhamisNR16,DBLP:journals/jcss/PichlerS13}
we can find a $Q(h(\bx))\in \sem{Q}(D)$ maximizing Equation \eqref{eq:spq} in time $O(|Q|\cdot |D|)$ as this is then a scalar sum-product query.
This solves $\CQNext[\Sigma, \Velem]$ for ACQs.
Then, due to Theorem \ref{thm:impapprox}, we can compute a $(1-\nicefrac{1}{e})$-approximation of $\CQEval[\Sigma, \Velem]$ in time $O(k\cdot|Q|\cdot |D|)$.
\end{proof}

\paragraph{Diverse answers to CQs via ranked enumeration}
Towards a more general criterion to ensure tractability of $\CQNext[\Sigma, \cV]$,
we consider this problem as a top-$k$ ranked enumeration problem, where the marginal diversity is the value by which we order the output and where we ask for the top-1 answer (we can ignore the additive constant $\delta_{\cV}(S)$).
Actually, top-$k$ ranked enumeration has received considerable attention from the database community in the last years
(see e.g., \cite{deep2025ranked,DBLP:conf/sigmod/IlyasSAVE04,%
	DBLP:conf/sigmod/LiCIS05,%
	DBLP:conf/vldb/LiSCI05}), where we consider \cite{deep2025ranked} as the most general and most naturally extendable to our setting.
	
\cristian{I moved the above paragraph here (after the subsection). I think that it better fits here as an introduction to it.}

We briefly recall the setting and main result of \cite{deep2025ranked} and then build on them.
There, rank functions $\texttt{rank}$ assign values $\texttt{rank}(Q(h(\bx)))\in \mathbb{R}$ to 
solutions of CQs $Q(h(\bx))\in \sem{Q}(D)$ and the goal is to enumerate $Q(h(\bx))\in \sem{Q}(D)$ 
in the order induced by $\texttt{rank}$, i.e., $Q(h(\bx))$ should be output before $Q(h'(\bx))$ if $\texttt{rank}(Q(h(\bx))) > \texttt{rank}(Q(h(\bx)))$.
Informally speaking, the main result of \cite{deep2025ranked} is that,
with the help of a tree decomposition $(T,\chi)$ of the full CQ $Q$, enumeration is efficiently possible 
if $\texttt{rank}$ is \textit{compatible} with $(T,\chi)$.

Given a volume assignment $\cV = (\cS, \mu, \beta)$, we would like to apply the results of \cite{deep2025ranked} 
to the functions $\texttt{rank}_{\cV, S} :=\delta_{\cV}(S \cup \{\cdot\})$.
Thus, naively, we would have to verify that $\texttt{rank}_{\cV, S}$ is compatible with $(T,\chi)$ for every $S\subseteq \sem{Q}(D)$.
Inspired by their use of compatibility, in the remainder of this section, we develop a notion of compatibility (with a tree decomposition $(T,\chi)$) of the ball function $\beta$.
This will be a sufficient condition, such that $\texttt{rank}_{\cV, S}$ is compatible with $(T,\chi)$ for every $S\subseteq \sem{Q}(D)$.
To that end, we start as in \cite{deep2025ranked} by defining what it means (in our case for $\beta$) to be $\overline{y}$-decomposable.

\begin{definition}
\label{def:decomp}
Let $\cV = (\cS, \mu, \beta)$ be a volume assignment, $R(\bx)$ be an atom over $\Sigma$ with variables $\bx$, and $\overline{y}\subseteq \overline{x}$.
We say that $\beta$ is {\em $\overline{y}$-decomposable} (w.r.t.\ $R$) if for every pair of homomorphisms $h, h'$ over $\overline{y}$ and homomorphisms $g,g'$ over $\overline{x} \setminus \overline{y}$ we have:
\begin{align}
\beta(R((h\cup g)(\overline{x}))) \setminus \beta(R((h'\cup g)(\overline{x}))) = \beta(R((h\cup g')(\overline{x}))) \setminus \beta(R((h'\cup g')(\overline{x}))). \label{eq:decom}
\end{align}
\end{definition}

The intuition of $\overline{y}$-decompositions is the following: Whatever a partial homomorphism $h$ on $\overline{y}$ 
contributes to the volume compared with another partial homomorphism $h'$ should not depend on how $h$ and $h'$ 
are completed (i.e., either by $g$ or $g'$).
Let us denote the set in Equation \eqref{eq:decom} as $\beta(h,h')$.

For a set $S$ of $R$-tuples, let us now consider the function $\texttt{rank}_{\cV,S}$ defined for $R$-tuples.
Then, to compare the function value of $\texttt{rank}_{\cV,S}$ on two homomorphisms $\hat{h}$ and $\hat{h}$ that agree outside of $\by$, 
it suffices to compare $\mu(\beta(h,h')\setminus \bigcup_{s\in S}\beta(s))$
with $\mu(\beta(h',h)\setminus \bigcup_{s\in S}\beta(s))$.
Consequently, the function $\texttt{rank}_{\cV,S}$ is $\overline{y}$-decomposable in the sense of \cite{deep2025ranked} for every set $S$ 
\ifArxiv
(we explain this in more detail in Appendix~\ref{app:acq}).
\fi

Thus, to extend the main result of \cite{deep2025ranked} to our setting, we can extend our notion of decomposability to compatibility w.r.t.\ a tree decomposition analogously to how it is done there.
We note that while Definition \ref{def:decomp} significantly differs from the counterpart in \cite{deep2025ranked},
extending it to compatibility is rather immediate.
Thus, we only give the following definitions for the sake of completeness.

Let $\cV = (\cS, \mu, \beta)$ be a volume assignment, let $R(\bx)$ be an atom over $\Sigma$ with variables $\bx$, and let $\overline{y},\overline{z}\subseteq \overline{x}$ be such that $\overline{y}\cap \overline{z} = \emptyset$.
Further, let $R_{\overline{x}\setminus\overline{z}}\not \in \Sigma$ be a new relation symbol of arity $|\overline{x}\setminus\overline{z}|$.
We say that $\beta$ is {\em $\overline{y}$-decomposable conditioned on $\overline{z}$} (w.r.t.\ $R$)
if for every homomorphism $f$ over $\overline{z}$, the ball function extended to $R_{\overline{x}\setminus\overline{z}}$-tuples via
$\beta(R_{\overline{x}\setminus \overline{z}}(\hat{h}(\overline{x}\setminus \overline{z}))) := \beta(R((\hat{h}\cup f)(\overline{x})))$ for homomorphism $\hat{h}$ over $\overline{x}\setminus\overline{z}$ is $\overline{y}$-decomposable w.r.t\ $R_{\overline{x}\setminus\overline{z}}$.

Let $(T,\chi)$ be a rooted tree decomposition of a full CQ $Q(\bx)$. 
For $t\in V(T)$ we denote with $\chi(T_t)$ the union of the bags in the subtree rooted in $t$.
Further, with $\texttt{key}(t)$ we denote the variables $\chi(t)\cap \chi(p)$ where $p$ is the parent of $t$ and $\texttt{key}(r)=\emptyset$ for the root $r$ of $T$.
We say that $\beta$ is {\em compatible with $(T,\chi)$} if for every node $t$ it is $(\chi(T_{t}) \setminus \texttt{key}(t))$-decomposable conditioned on $\texttt{key}(t)$ w.r.t.\ $Q$.

As explained before, since $\overline{y}$-decomposability in our sense can be reduced to $\overline{y}$-decomposability for every set $S$ in the sense of \cite{{deep2025ranked}}, we get the following by combining it with Theorem \ref{thm:impapprox}.

\begin{restatable}{theorem}{thmappeval}    
\label{thm:appeval}
    Let $\cV=(\cS, \mu, \beta)$ be a volume assignment over $\TUPLES$ such that $\beta$ is compatible with a rooted tree decomposition $(T,\chi)$ of the full CQ $Q(\bx)$.
    Then, $\CQEval[\Sigma, \cV]$ can be $(1-\nicefrac{1}{e})$-approximated in time $O(|Q|\cdot|D|^{fhw(T,\chi)}\cdot k \cdot T_{\cV})$ where $T_{\cV}$ is the time
    to compute marginals of $\delta_{\cV}$ for fixed sets.
\end{restatable}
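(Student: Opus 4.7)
The plan is to reduce the problem to iteratively solving $\CQNext[\Sigma, \cV]$ via Theorem~\ref{thm:impapprox}, and then to solve each instance of $\CQNext[\Sigma, \cV]$ by invoking the ranked-enumeration machinery of \cite{deep2025ranked} in order to retrieve only the top-1 answer with respect to the marginal diversity rank function $\texttt{rank}_{\cV,S}(Q(h(\bx))) := \delta_{\cV}(S \cup \{Q(h(\bx))\})$. Since we are running the greedy algorithm at most $k$ times, a factor of $k$ appears in the final running time, and each call costs at most the time to extract the top element from the ranked enumeration data structure.

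First I would unpack what compatibility of $\beta$ with $(T,\chi)$ gives us at the level of $\texttt{rank}_{\cV,S}$. Fix $S\subseteq \sem{Q}(D)$ and a node $t\in V(T)$ with $\overline{y} = \chi(T_t)\setminus \texttt{key}(t)$ and $\overline{z} = \texttt{key}(t)$. By the definition of $\overline{y}$-decomposability conditioned on $\overline{z}$ and the discussion preceding the theorem (together with the identity $\mu(A)-\mu(B)=\mu(A\setminus C)-\mu(B\setminus C)$ applied with $C=\bigcup_{s\in S}\beta(s)$ and using inclusion--exclusion inside the fixed set $S$), the difference
\[
\texttt{rank}_{\cV,S}(Q(\hat h(\bx))) - \texttt{rank}_{\cV,S}(Q(\hat h'(\bx)))
\]
for two homomorphisms $\hat h,\hat h'$ that agree on $\bx\setminus \overline{y}$ depends only on the restriction of $\hat h,\hat h'$ to $\overline{y}\cup \overline{z}$ and not on their values on $\bx \setminus (\overline{y}\cup\overline{z})$. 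This is exactly the compatibility condition on rank functions used in \cite{deep2025ranked}. I would state this as a lemma and prove it by reducing directly to Definition~\ref{def:decomp}, noting that $S$ contributes only via the fixed set $\bigcup_{s\in S}\beta(s)$, which can be ``subtracted'' uniformly from both $\beta(h,h')$ sides of Equation~\eqref{eq:decom}.

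Next I would plug this lemma into the main algorithmic result of \cite{deep2025ranked}: once $\texttt{rank}_{\cV,S}$ is compatible with $(T,\chi)$, one can build, in time $O(|Q|\cdot |D|^{\fhw(T,\chi)}\cdot T_{\cV})$, a data structure that returns the top-ranked answer of $\sem{Q}(D)$ (here the $T_{\cV}$ factor absorbs the cost of evaluating marginal diversities used as weights during the computation inside the bags of $(T,\chi)$). This exactly solves $\CQNext[\Sigma, \cV]$ for the current $S$. Running this inside Algorithm~\ref{alg:greedy} for $k$ iterations, and invoking Theorem~\ref{thm:impapprox}, yields the claimed $(1-\nicefrac{1}{e})$-approximation within the stated time bound $O(|Q|\cdot|D|^{\fhw(T,\chi)}\cdot k\cdot T_{\cV})$.

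The main obstacle I anticipate is the bridging lemma above: carefully checking that our $\overline{y}$-decomposability of $\beta$ really translates into the rank-compatibility notion in \cite{deep2025ranked}, including the conditioning on $\texttt{key}(t)$ and the handling of the ``fixed'' subtraction by $\bigcup_{s\in S}\beta(s)$ uniformly for all $S$. Once this translation is established, the rest of the argument is a direct composition of Theorem~\ref{thm:impapprox} with the ranked-enumeration theorem of \cite{deep2025ranked}. A minor but important bookkeeping point will be to confirm that the $T_{\cV}$ factor suffices: inside the dynamic program over the tree decomposition, weights are always differences of diversities of the fixed set $S$ with one additional tuple, so each weight can indeed be computed in time $O(T_{\cV})$.
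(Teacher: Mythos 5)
Your proposal is correct and follows essentially the same route as the paper: the paper likewise proves a bridging chain of lemmas showing that subtracting the fixed set $\bigcup_{s\in S}\beta(s)$ preserves $\by$-decomposability, that applying $\mu$ and adding the constant $\delta_\cV(S)$ yields a rank function $\texttt{rank}_{\cV,S}$ compatible with $(T,\chi)$ in the sense of \cite{deep2025ranked}, and then composes the top-$1$ ranked-enumeration result with Theorem~\ref{thm:impapprox}. The only blemish is your stated identity $\mu(A)-\mu(B)=\mu(A\setminus C)-\mu(B\setminus C)$, which is false as written; what you need (and what the paper uses) is $\delta_\cV(S\cup\{t\})=\delta_\cV(S)+\mu(\beta(t)\setminus C)$, so that rank differences reduce to $\mu\bigl((\beta(t)\setminus\beta(t'))\setminus C\bigr)-\mu\bigl((\beta(t')\setminus\beta(t))\setminus C\bigr)$, which decomposability makes independent of the completion.
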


To showcase Theorem \ref{thm:appeval}, we revisit the volume assignment $\VQD$ from
Example~\ref{ex:VQD}. 

\begin{restatable}{theorem}{thmprovenance}    
\label{thm:provenance}
    Let $Q(\bx)$ be a CQ such that every atom $R_i(\bx_i)$ of $Q$ uses a unique relation name and let $(T,\chi)$ be a tree decomposition of $Q$ such that there is a subtree $T_{\bx}$ of $T$ containing the root of $T$ and where $\bx=\bigcup_{v\in V(T_{\bx})}\chi(v)$. That is, the CQ is self-join-free and the tree decomposition is free-connex \cite{DBLP:conf/csl/BaganDG07}. Then
    $\CQEval[\Sigma, \VQD]$ can be $(1-\nicefrac{1}{e})$-approximated in time $O(|Q|\cdot|D|^{fhw(T,\chi)+1}\cdot k)$.
\end{restatable}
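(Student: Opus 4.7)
The plan is to invoke Theorem~\ref{thm:appeval}, after reformulating the provenance ball function into something that can be checked along a tree decomposition derived from $(T,\chi)$. The starting observation is that, since $Q$ is self-join-free, a body tuple $R_j(\bb)\in D$ is uniquely identified by its atom index $j$ together with the values $\bb$. Hence
\[
\betaQD(Q(\ba)) \ = \ \bigcup_{j=1}^{m} \big\{\, R_j(\bb) \ \big| \ (\ba,\bb) \in \sem{Q^*_j}(D) \,\big\},
\]
where $Q^*_j(\bx,\bx_j) \leftarrow R_1(\bx_1),\ldots,R_m(\bx_m)$ is obtained from $Q$ by additionally exposing $\bx_j$ in the head. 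Crucially, the body of every $Q^*_j$ is the same as that of $Q$, so $(T,\chi)$ is still a tree decomposition for each $Q^*_j$ and the fractional hypertree width does not change.

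Next, I would instantiate the greedy scheme of Algorithm~\ref{alg:greedy} through the next-answer problem $\CQNext[\Sigma,\VQD]$. Given the current set $S \subseteq \sem{Q}(D)$, I would maintain $\betaQD(S)\subseteq D$ as a hash set (updated in a single preprocessing step per greedy iteration using known techniques for CQ evaluation under bounded fhw, at cost $O(|Q|\cdot|D|^{\fhw(T,\chi)})$). The marginal diversity of a candidate $Q(\ba)$ then equals
\[
\Delta_S(\ba) \ = \ \big|\{ (j,\bb) \mid (\ba,\bb)\in\sem{Q^*_j}(D),\ R_j(\bb)\notin \betaQD(S) \}\big|,
\]
which, once $\betaQD(S)$ is fixed, becomes a purely local counting quantity: membership in $\betaQD(S)$ is a predicate on body tuples that can be encoded into the weight assigned to each tuple of the relations in $D$.

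Using the free-connex property, one can augment $(T,\chi)$ as follows: since $T_{\bx}$ covers precisely $\bx$, the non-output variables live in the dangling subtrees below $T_{\bx}$, and projecting them away can be done bottom-up without re-joining into $\bx$. For each atom $R_j(\bx_j)$, since $\bx_j$ is contained in some bag of $(T,\chi)$, I would attach that bag (or a copy of it) to an appropriate node of $T_{\bx}$ so as to expose $\bx_j$ together with $\bx$ on the ``output side'' of the decomposition. This enlarges the width by at most one, giving a decomposition of $Q^*_j$ with respect to which the indicator ``$(\ba,\bb)\in\sem{Q^*_j}(D)$ and $R_j(\bb)\notin\betaQD(S)$'' decomposes compositionally. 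Verifying compatibility in the sense of Definition~\ref{def:decomp} is then routine: by self-join-freeness, contributions from different atoms cannot coincide, so the marginal set $\betaQD(Q(\ba))\setminus\betaQD(S)$ decomposes additively across atoms, and for each atom the contribution depends only on the $\texttt{key}$ variables of the corresponding subtree.

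The hardest part will be pinning down this compatibility argument cleanly, i.e., showing that the ``new body tuples'' contributed by a partial homomorphism on a subtree genuinely depend only on the $\texttt{key}$-variables of that subtree, so that Equation~\eqref{eq:decom} applies uniformly over all $S$. Once this is established, Theorem~\ref{thm:appeval} yields a single call to $\CQNext[\Sigma,\VQD]$ in time $O(|Q|\cdot|D|^{\fhw(T,\chi)+1}\cdot T_\cV)$, where the $+1$ in the exponent reflects the one extra variable ($\bx_j$) exposed alongside $\bx$, and $T_\cV = O(1)$ amortized thanks to the hash-set representation of $\betaQD(S)$. Iterating $k$ greedy steps and appealing to Theorem~\ref{thm:impapprox} delivers the claimed $(1-\nicefrac{1}{e})$-approximation within $O(|Q|\cdot|D|^{\fhw(T,\chi)+1}\cdot k)$ time.
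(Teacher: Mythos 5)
Your overall strategy matches the paper's: reduce to Theorem~\ref{thm:appeval} via the greedy scheme, use the free-connex subtree $T_{\bx}$ to separate output from non-output variables, and use self-join-freeness so that provenance contributions of distinct atoms are disjoint and the marginal decomposes additively --- that last point is exactly the paper's key observation (``every $R_i$ can be uniquely assigned to one atom''). However, there is a genuine gap in the step that handles atoms $R_j(\bx_j)$ containing non-output variables. Attaching a copy of the bag containing $\bx_j$ to a node of $T_{\bx}$ does not in general yield a valid tree decomposition: a variable $z\in\bx_j\setminus\bx$ that also occurs in other bags of the dangling subtree would then appear in two components of the tree that are not connected by nodes containing $z$, violating the connectedness condition. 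Repairing this by adding $z$ to all bags on the connecting path can increase the width by more than one when several atoms from the same dangling subtree must be lifted. Moreover, it is unclear what single decomposition would simultaneously ``expose'' $\bx_j$ for every $j$, which is what your formulation requires. Since you explicitly defer the compatibility verification (``the hardest part'') to this construction, the core of the argument is missing. The paper avoids this entirely: it never lifts $\bx_j$ to the output side, but instead precomputes, for each $v\in V(T_{\bx})$, the which-provenance of the dangling subquery $Q^v(\chi(v))$ as a semiring annotation on $\chi(v)$-tuples (this is where the $+1$ in the exponent comes from --- each semiring operation costs $O(|D|)$ because annotations are sets of up to $|D|$ tuples), then defines a \emph{full} CQ $Q^{\bx}$ over $(T_{\bx},\chi|_{V(T_{\bx})})$ alone and checks compatibility of $\betaQD$ there.

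A secondary issue is your accounting of $T_\cV=O(1)$ amortized: the marginal of a single candidate answer involves a provenance set of size up to $\Theta(|D|)$ (e.g.\ all extensions of a non-output variable), so per-candidate marginal evaluation is not constant-time under a hash-set representation; the paper charges $T_{\VQD}=O(|D|)$ for this and places the extra $|D|$ factor there rather than in the width. Your final bound happens to coincide with the paper's, but only because the two misattributions cancel; as written, neither the width increase of exactly one nor the $O(1)$ marginal cost is justified.
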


We juxtapose it with Theorem \ref{theo:hardnessCombinedComplexity}: %
In Theorem \ref{theo:hardnessCombinedComplexity} we say that $\CQNext[\Sigma, \VQD]$
is intractable even for ACQs while we now state that computing a $(1-\nicefrac{1}{e})$-approximation of $\CQEval[\Sigma, \VQD]$
is tractable for CQs when $fhw(T,\chi)$ is small. The crucial restriction in Theorem~\ref{thm:provenance} is 
{\em self-join-freeness}, which is in effect similar to keeping positions apart as $\Vpos$ does compared to $\Velem$.

\begin{proof}[Proof Sketch of Theorem~\ref{thm:provenance}]
    Theorem \ref{thm:appeval} cannot directly be applied since $Q$ is not necessarily a full CQ.
    To that end, let us consider the full CQ $Q^{\bx}(\bx)$ defined as the subquery of $Q$ where all body relations are projected onto $\bx$.
    Then, $(T_{\bx},\chi|_{V(T_{\bx})})$ is a tree decomposition of $Q^{\bx}$ and $fhw(T_{\bx},\chi|_{V(T_{\bx})})\leq fhw(T,\chi)$.
    Now, we extend $\betaQD$ to $Q^{\bx}$-tuples via $\betaQD(Q^{\bx}(h(\bx))):=\betaQD(Q(h(\bx)))$.
    Defined as such, $\betaQD$ is compatible with $(T_{\bx},\chi|_{V(T_{\bx})})$ w.r.t.\ $Q^{\bx}$ as $Q$ and, hence, also $Q^{\bx}$ are self-join-free.

    Then, to compute $\texttt{rank}_{\VQD, S}$, we have to keep track of the which-provenance \cite{DBLP:journals/vldb/CuiW03} for each of the tuples in the bags of $v\in V(T_{\bx})$ for what happens ``outside'' of $V(T_{\bx})$.
    Thus, essentially, for each $v\in V(T_{\bx})$, we have to look at its children in $T\setminus T_{\bx}$, i.e., $C:=child(v)\setminus V(T_{\bx})$ and consider the sub-query $Q^v(\chi(v))$ that uses the variables $\chi(v)$, and the ones that appear in $C$ and their descendants.
    Computing the provenance of these queries requires time $|D|^{fhw(T,\chi)+1}$ (where the $+1$ is to account for the semi-ring 
    operations)~\cite{DBLP:conf/pods/KhamisNR16, DBLP:journals/jcss/PichlerS13}.
    However, then, to compute marginals of $\delta_{\cV}$ (essentially $\delta_{\cV}(S\cup \{s\})$), it suffices to add together the provenance of every tuple $t\in S \cup \{s\}$.
    The provenance of a tuple $t$ can be computed by looking-up and adding together the provenance of $t$ projected to $\chi(v)$ in $Q^v$.
    Thus, as $S$ can be considered fixed, this takes $O(|D|)$ time.
\end{proof}

In particular, this means that for self-join-free, free-connex, acyclic conjunctive queries, the problem $\CQEval[\Sigma, \VQD]$ can be $(1-\nicefrac{1}{e})$-approximated in quadratic time (for constant $Q,k$).

        \section{Conclusions}\label{sec:conclusions}

In this work, we have introduced the volume-based framework for diversity measures $\delta_\cV$, providing several examples of them in relational databases, and we have studied their properties. Above all, given the intractability of query answering under diversity, 
we have shown an approximation algorithm that runs in polynomial time data complexity, and we have identified criteria for extending the tractability of the approximation to combined complexity.
Arguably, all these results provide substantial evidence that volume-based diversity forms an alternative approach to distance-based diversity, which requires further consideration in both the theory and practice of database management systems.

For future work, we propose to take a closer look into the relationship 
between our framework of volume-based diversity measures and the distance-based approach. 
In Section~\ref{sec:DiversityDistance}, we have shown that Weitzman's (distance-based) diversity function $\deltaW$  
essentially becomes a volume-based diversity function if the underlying distance function is
an ultrametric. In \cite{DBLP:journals/corr/abs-2408-01657}, general criteria were presented that make the problem of computing the 
exact solution of $\CQEval$ tractable if the distance underlying a diversity measure is an 
ultrametric. 
It would be interesting to explore restrictions under which this can be lifted to volume-based diversity functions. 

Another interesting open problem is to find other strategies for approximating $\CQEval$ in combined complexity. According to Theorem~\ref{theo:hardnessCombinedComplexity}, $\CQNext$ cannot be solved in polynomial time (under complexity assumptions) for many natural volume-based diversity functions. Nevertheless, even in the cases where $\CQNext$ is NP-hard, one might still be able to get a reasonable approximation algorithm. 
This approximation algorithm $\CQNext$, combined with Algorithm~\ref{alg:greedy}, could then lead to 
an approximation of $\CQEval$. 

        \section*{Acknowledgements}
        
        The work of Merkl and Pichler was supported  by the Vienna Science and Technology Fund (WWTF) [10.47379/ICT2201, 10.47379/VRG18013, 10.47379/NXT22018]. The work of Arenas and Riveros was supported by ANID – Millennium Science Initiative Program – Code ICN17\_002. Riveros was also supported by ANID Fondecyt Regular project 1230935.

	\bibliographystyle{abbrv}
	\bibliography{extras/biblio}
	
	\newpage
	\appendix

	\section{Additional Details for Section \ref{sec:volume}}
    \label{app:volume}
In Proposition~\ref{prop:mon-subm}, we have claimed that, 
for arbitrary volume assignment  $\cV =(\cS, \mu, \beta)$ over an arbitrary universe $\U$,
the distance function  $\delta_\cV$ is monotone and submodular.
Monotonicity follows trivially from the definition of $\delta_\cV(S)$ 
by applying a monotone measure $\mu$ to the union of balls $\beta(a)$ with $a \in S$. 
That is, the bigger $S$, the bigger the union of balls, the bigger $\delta_\cV(S)$.
In the sequel, we thus concentrate on submodularity.

\begin{proof}[Proof of submodularity.]
Consider an arbitrary volume assignment $\cV =(\cS, \mu, \beta)$ over 
some universe $\U$. 
Let $S_1, S_2 \subseteq \U$ and $a \in \U$ such that $S_1 \subseteq S_2$.
We have to show that the following condition holds:
\begin{align}\label{eq-sub-mod}
\delta_\cV(S_1 \cup \{a\}) - \delta_\cV(S_1) \ \geq \
\delta_\cV(S_2 \cup \{a\}) - \delta_\cV(S_2).
\end{align}
W.l.o.g., let us assume that $a \not\in S_2$ since, otherwise, 
$S_2 \cup \{a\} = S_2$ and the inequality 
 \eqref{eq-sub-mod}  holds trivially. 

Given that $\delta_\cV$ is a volume diversity function over $\cV$, we have that:
\begin{align*}
    \delta_\cV(S_1 \cup \{a\}) \ &= \ \mu(\bigcup_{x \in S_1 \cup \{a\}} \beta(x))\\
    &= \ \mu(\bigcup_{x \in S_1} \beta(x) \cup \beta(a))\\
    &= \ \mu(\bigcup_{x \in S_1} \beta(x) \cup (\beta(a) \smallsetminus \bigcup_{x \in S_1} \beta(x)))\\
    &= \ \mu(\bigcup_{x \in S_1} \beta(x)) + \mu(\beta(a) \smallsetminus \bigcup_{x \in S_1} \beta(x)).
\end{align*}
Then, given that $\delta_\cV(S_1) = \mu(\bigcup_{x \in S_1} \beta(x))$, we conclude that:
\begin{align}\label{eq-X-sm}
    \delta_\cV(S_1 \cup \{a\}) - \delta_\cV(S_1)\ = \ \mu(\beta(a) \smallsetminus \bigcup_{x \in S_1} \beta(x)).
\end{align}
In the same way, we conclude that:
\begin{align}\label{eq-Y-sm}
    \delta_\cV(S_2 \cup \{a\}) - \delta_\cV(S_2)\ = \ \mu(\beta(a) \smallsetminus \bigcup_{y \in S_2} \beta(y)).
\end{align}
We are assuming $S_1 \subseteq S_2$. Hence, we have $\bigcup_{x \in S_1} \beta(x) \subseteq \bigcup_{y \in S_2} \beta(y)$, from which we conclude that:
\begin{align*}
\beta(a) \smallsetminus \bigcup_{y \in S_2} \beta(y)
\ \subseteq \ 
\beta(a) \smallsetminus \bigcup_{x \in S_1} \beta(x).
\end{align*}
Given that $\mu$ is a measure, we conclude that:
\begin{align*}
\mu(\beta(a) \smallsetminus \bigcup_{y \in S_2} \beta(y))
\ \leq \ 
\mu(\beta(a) \smallsetminus \bigcup_{x \in S_1} \beta(x)).
\end{align*}
Then, combining this property with \eqref{eq-X-sm} and \eqref{eq-Y-sm}, we conclude that \eqref{eq-sub-mod} holds.
\end{proof}

	\section{Additional Details for Section \ref{sec:MultiAttribute}}
    \label{app:sec:MultiAttribute}
Recall Theorem \ref{thm:volume:vs:multiattribute} from Section~\ref{sec:MultiAttribute}:

\thmvolumevsmultiattribute*

\begin{proof}
First, consider a diversity function $v_\lambda$ over some finite set
$X$ according to the multi-attribute approach
of \cite{nehring2002diversitytheory}, i.e., $\lambda$ is a
non-negative measure on $2^{2^X}$.
We define a volume-based diversity function $\delta_{\cV}$ with the
following volume assignment $\cV =(\cS, \mu, \beta)$:
\begin{itemize}
    \item $   \cS =  2^{2^X}$; \quad\quad  i.e., sets of subsets of $X$. 
    \item $   \beta \colon X \rightarrow \mathbb{R}_{\geq 0}$ 
     \quad with $\beta (x) = \{A \mid x \in A\} $
\item for $\mathcal{B} \subseteq \cS$, we set  
$\mu(\mathcal{B})  = \sum_{A \in \mathcal{B}} \lambda_A
$.
\end{itemize}
We verify that the resulting volume-based diversity measure
$\delta_\cV$ coincides with $v_\lambda$. To this end,
let $S \subseteq X$. Then
$v_\lambda(S) = \sum_{A \subseteq X: A\cap
S \neq \emptyset} \lambda_A$.  On the other hand, $\delta_\cV (S) =
\mu (\bigcup_{x \in S} \{A \mid x \in A\})$ 
and $ \bigcup_{x \in S} \{A \mid x \in A\} =
\{A \subseteq X \mid \exists x \in S \mbox{ with } x \in A\} 
= \{A \subseteq X \mid A \cap S \neq \emptyset\}$.
By our definition of $\mu$, we thus get
$\delta_\cV (S) = \mu(\{A \subseteq X \mid A \cap S \neq \emptyset\}) =
\sum_{A \subseteq X: A\cap S \neq \emptyset} \lambda_A = v_\lambda(S)$.

For the other direction, suppose that we are given a volume-based
diversity function $\delta_{\cV}$ with $\cV =(\cS, \mu, \beta)$.  From
this, we define $\lambda$ as follows:

$$\lambda_A = \mu\big(\bigcap_{a\in A} \beta(a) \setminus 
\bigcup_{x \in X \setminus A} \beta(x) \big)$$

Intuitively, we define the ``contribution'' of $A$ to the overall
diversity as the volume shared by all balls contributing to $A$ minus
those parts of the $\sigma$-algebra which are covered by the balls
corresponding to the complement of $A$.

It remains to show that $v_\lambda(S) = \delta_\cV(S)$ holds for every
$S \subseteq X$. 
Consider an arbitrary subset $S\subseteq X$. Then the following 
chain of equalities holds: 
\begin{align}
    \delta_\cV(S) &= \mu(\bigcup_{s\in S} \beta(s) )\\
    & = \mu\bigg(\bigcup_{\substack{A\cup B = X\\ A\cap B = \emptyset}}\Big(\big((\bigcup_{s\in S} \beta(s))\cap (\bigcap_{a\in A} \beta (a))\big)\setminus (\bigcup_{b\in B}\beta (b))\Big) \bigg)\\
    & = \sum_{\substack{A\cup B = X\\ A\cap B = \emptyset}}\mu\bigg(\Big((\bigcup_{s\in S} \beta(s))\cap (\bigcap_{a\in A} \beta (a))\Big)\setminus (\bigcup_{b\in B}\beta (b))\bigg)\\
    & = \sum_{\substack{A\cup B = X\\ A\cap B = \emptyset}}\mu\bigg(\Big((\bigcup_{s\in S\setminus B} \beta(s))\cap (\bigcap_{a\in A} \beta (a))\Big)\setminus (\bigcup_{b\in B}\beta (b))\bigg)\\
    & = \sum_{\substack{A\cup B = X\\ A\cap B = \emptyset}}\begin{cases}
        \mu\bigg(\Big((\bigcap_{a\in A} \beta (a))\Big)\setminus (\bigcup_{b\in B}\beta (b))\bigg) & S\setminus B \neq \emptyset\\
        \mu\bigg(\Big((\emptyset)\cap (\bigcap_{a\in A} \beta (a))\Big)\setminus (\bigcup_{b\in B}\beta (b))\bigg) & S\setminus B = \emptyset
    \end{cases}\\
    & = \sum_{\substack{A\cup B = X\\ A\cap B = \emptyset\\ A\cap S \neq \emptyset}}\mu\bigg(\Big((\bigcap_{a\in A} \beta (a))\Big)\setminus (\bigcup_{b\in B}\beta (b))\bigg)\\
    & = \sum_{\substack{A\cup B = X\\ A\cap B = \emptyset\\ A\cap S \neq \emptyset}}\lambda_A\\
    & = \sum_{A\subseteq X\colon A\cap S \neq \emptyset}\lambda_A\\
    & = v_\lambda(S)
\end{align}

\noindent
The correctness of the above equalities is seen as follows:
First, to see that 
$$(\bigcup_{s\in S} \beta(s)) = \bigcup_{\substack{A\cup B = X\\ A\cap B = \emptyset}}
\Big(\big((\bigcup_{s\in S} \beta(s))\cap (\bigcap_{a\in A} \beta (a))\big)\setminus (\bigcup_{b\in B}\beta (b))\Big)$$
holds, consider an arbitrary element $e\in (\bigcup_{s\in S} \beta(s))$ in the set on the left-hand side.
To show that it is also contained in the set on the right-hand side of the equality, 
we define the set $A_e\subseteq X$ as $A_e = \{ a \in X \mid e \in \beta(a)\}$.
By this definition, we clearly have $e\in  (\bigcap_{a\in A_e} \beta (a)) $.
Moreover, for $B_e = X \setminus A_e$, we have that, for all $b \in B_e$, 
$e \not\in \beta(b)$ holds. Hence, $e\not \in (\bigcup_{b\in B_e}\beta (b))$.
Since we started with the assumption that $e\in (\bigcup_{s\in S} \beta(s))$ holds,
we indeed have that $e$ is contained  in the set on the right hand side, i.e., we have proven ``$\subseteq$''.
The other inclusion, i.e., ``$\supseteq$'' trivially holds as the right hand side 
is a union of subsets of $(\bigcup_{s\in S} \beta(s))$.
This proves the  Equation  (3) to (4).

Next, we show the disjointness of the union on the right-hand side.
That is, we have to show that every element $e$ contained in the set of the right-hand side
is contained in 
$$\Big(\big((\bigcup_{s\in S} \beta(s))\cap (\bigcap_{a\in A} \beta (a))\big)\setminus (\bigcup_{b\in X \setminus A}\beta (b))\Big)$$
for exactly one $A$. We now show that this unique $A$ is the set $A_e$ defined before. 
Assume to the contrary that $e$ is contained in 
$$\Big(\big((\bigcup_{s\in S} \beta(s))\cap (\bigcap_{a\in A} \beta (a))\big)\setminus (\bigcup_{b\in X \setminus A}\beta (b))\Big)$$
for some $A \subseteq X$ with $A \neq A_e$. Then either $A\setminus A_e \neq \emptyset$ or
$A_e\setminus A \neq\emptyset$. In the first case, consider an $a\in A\setminus A_e$.
Then $e\not \in \beta(a)$ and, therefore, $e\not \in (\bigcap_{a\in A} \beta (a))$,
which contradicts 
$$e\in \big((\bigcup_{s\in S} \beta(s))\cap (\bigcap_{a\in A} \beta (a)) \setminus (\bigcup_{b\in B}\beta (b)).$$
In the second case, consider an element $b\in A_e\setminus A$.
Then, $b\in B_e$ and $e\in \beta(b)$ as well as $e\in(\bigcup_{b\in B}\beta (b))$,
which again contradicts 
$$e\in \big((\bigcup_{s\in S} \beta(s))\cap (\bigcap_{a\in A} \beta (a)) \setminus (\bigcup_{b\in B}\beta (b)).$$
This proves the  Equation  (4) to (5).

In Equation (5) to (6), we remove the sets $\beta(b)$ for $b\in S\cap B$ from $(\bigcup_{s\in S} \beta(s))$ as these are subtracted at the end by $(\bigcup_{b\in B}\beta (b))$ anyway.

In Equation (6) to (7), notice that $S\setminus B\subseteq A$.
Thus, as long as there is an $a\in S\setminus B\subseteq A$, we have $(\bigcap_{a\in A} \beta (a))\subseteq \beta (a) \subseteq (\bigcup_{s\in S\setminus B} \beta(s))$
Thus, we simply replace $\Big((\bigcup_{s\in S\setminus B} \beta(s))\cap (\bigcap_{a\in A} \beta (a))\Big)$ by $\emptyset$ or $ (\bigcap_{a\in A} \beta (a))$ depending on whether $S\setminus B =\emptyset$.

In Equation (7) to (8), we simply ignore the 0's (i.e., the terms $\mu(\emptyset)$) in the sum.
The remaining equalities are straightforward applications of definitions.
\end{proof}

	\section{Additional Details for Section \ref{sec:DiversityDistance}}
    \label{app:sec:distance}
\subsection{Proof of Proposition \ref{prop:distance-no-properties}}

\propdistancenoproperties*

\begin{proof}
    Simply consider the tuples 
    $$t_1 = R(a,b,c,d,e), t_2 = R(a,b,f,g,h),t_3=R(x,y,i,j,k),t_4=(x,y,l,m,n)$$
    with Hamming distances $d(t_1,t_2) = d(t_3,t_4) = 3$ and $d(t_1,t_3)=d(t_1,t_4) = d(t_2,t_3)=d(t_2,t_4)=5$.
    Then, 
    \begin{align*}
        \deltasum&(\{t_1,t_2,t_3,t_4\})-\deltasum(\{t_2,t_3,t_4\})= 2 (d(t_1,t_2) + d(t_1,t_3) + d(t_1,t_4)) =26 \\
        &> \deltasum(\{t_1,t_2,t_3\})-\deltasum(\{t_2,t_3\})= 2(d(t_1,t_2) + d(t_1,t_3)) =16, 
    \end{align*}
    and 
    \begin{align*}
        \deltamin&(\{t_1,t_2,t_3,t_4\})-\deltamin(\{t_2,t_3,t_4\})= 3-3 = 0 \\
        &> \deltamin(\{t_1,t_2,t_3\})-\deltamin(\{t_2,t_3\})= 3-5 = -2. 
    \end{align*}
    Thus, both violate submodularity.
    Furthermore, 
    \begin{align*}
        \deltamin(\{t_2,t_3\}) = 5 > \deltamin(\{t_1,t_2,t_3,t_4\}) = 3. 
    \end{align*}
    Thus, $\deltamin$ vioates monotonicity.
\end{proof}

\subsection{Proof of Theorem \ref{theo:no-distance}}

\thmnodistance*

\begin{proof}
	We can show this using $\Velem$. To that end, note that $\Velem$ is oblivious to constant names.
	Then, consider the sets of triples 
	\begin{align*}
		S_1&=\{t_1 = R(a,b,x), \quad t_2 = R(a,y,c), \quad t_3 = R(z,b,c)\}\\
		S_2&=\{t_4 = R(a,b,x), \quad t_5 = R(a,c,y), \quad t_6 = R(a,d,z)\}\\
		S_3&=\{t_7 = R(b,a,x), \quad t_8 = R(c,a,y), \quad t_9 = R(d,a,z)\}\\
		S_4&=\{t_{10} = R(b,x,a), \quad t_{11} = R(c,y,a), \quad t_{12} = R(d,z,a)\}.
	\end{align*}
	Then, $\delta_{\Velem}(S_1)=6$ while $\delta_{\Velem}(S_2) = \delta_{\Velem}(S_3) = \delta_{\Velem}(S_4) = 7$.
	Now let $d, \mathit{agg}$ be as required and assume $\delta_{\mathit{agg},d}(S_2) = \delta_{\mathit{agg},d}(S_3) = \delta_{\mathit{agg},d}(S_4)$.
	Then, observe that due to $d$ being oblivious to the names of constants. 
	\begin{align*}
		d(t_{1},t_{2})&=d(t_4,t_5) = d(t_4,t_6) = d(t_5,t_6),\\
		d(t_1,t_3)&=d(t_7,t_8)=d(t_7,t_9)=d(t_8,t_9),\\
		d(t_2,t_3)&=d(t_{10},t_{11})=d(t_{10},t_{12})=d(t_{11},t_{12})
	\end{align*}
	W.l.o.g., $$d(t_{1},t_{2})\leq d(t_1,t_3) \leq d(t_2,t_3)$$
	Thus, due to monotonicity
	$$ \delta_{\mathit{agg},d}(S_2)\leq  \delta_{\mathit{agg},d}(S_1)\leq  \delta_{\mathit{agg},d}(S_4)$$
	Thus, $ \delta_{\mathit{agg},d}(S_1)= \delta_{\mathit{agg},d}(S_2)$ and, e.g.,  $\mathit{agg},d$ cannot distinguish $S_1$ from $S_2$
	
\end{proof}

\subsection{Proof of Proposition \ref{thm:weitzman:submodular}}
\label{app:TheoremWeitzmanSubmodularity}

\thmweitzmansubmodular*

\begin{proof}
Consider the universe $\U = \{a,b,c,d\}$ with the distance function $d$ 
shown in Table~\ref{tab:distance}.

\begin{table}[h!]
\centering
\begin{tabular}{|c|c|c|c|c|}
\hline
   & $a$ & $b$ & $c$ & $d$ \\
\hline
$a$ & 0 & 2 & 2 & 2 \\
\hline
$b$ & 2 & 0 & 1 & 1 \\
\hline
$c$ & 2 & 1 & 0 & 2 \\
\hline
$d$ & 2 & 1 & 2 & 0 \\
\hline
\end{tabular}
\caption{Distance function $d$ in the proof of Theorem \ref{thm:weitzman:submodular}}
\label{tab:distance}
\end{table}

\noindent
It is easy to verify that $d$ is actually a metric, i.e., 
it satisfies the following properties: 

\begin{enumerate}[label=(\arabic*)]
    \item non-negativity: $d(x,y) \geq 0 $ for every pair $x,y \in \U$;
    \item symmetry: $d(x,y) = d(y,x)$  for any two elements
    $x,y \in \U$; 
    \item $d(x,x) = 0$ for every $x \in \U$;
    \item $d$ satisfies the triangle inequality, i.e., 
    $d(x,z) \leq d(x,y) + d(y,z)$ for any three (pairwise distinct) elements 
    $x,y,z \in \U$. This clearly holds, since we have $d(x,y) + d(y,z) \geq 2$ and 
    $d(x,z) \leq 2$ for any three pairwise distinct elements 
    $x,y,z \in \U$.   
    \item $d(x,y) = 0$ implies $x = y$ for all $x,y \in \U$. 
\end{enumerate}
 
\smallskip
\noindent
If Weitzman's diversity measure $\deltaW$ is defined via this distance function $d$,
it is easy to veriy that the following equalities hold: 

\begin{center}   
$\deltaW(\{a,b\}) = 2$, 
$\deltaW(\{a,b,c\}) = 3$, 
$\deltaW(\{a,b,d\}) = 3$, and
$\deltaW(\{a,b,c,d\}) = 5$.
\end{center}

We can now can show that there exist 
sets $S_1,S_2 \subseteq \U$, such that 
$\deltaW(S_1 \cup S_2) + \deltaW(S_1 \cap S_2)
\geq \deltaW(S_1) + \deltaW(S_2).$
by setting
$S_1 = \{a,b,c\}$ and 
$S_2 = \{a,b,d\}$. Then we have: 

\begin{center}
$\deltaW(\{a,b,c,d\}) + \deltaW(\{a,b\})=  5 + 2 = 7$
and
$\deltaW(\{a,b,c\}) + \deltaW(\{a,b,d\})=  3+3 = 6$ \\  
\end{center}

\noindent That is, this particular diversity measure $\deltaW$ violates submodularity.
\end{proof}

\subsection{Proof of Theorem \ref{thm:volume:vs:weitzman}}
\label{app:TheoremWeitzman}

\thmvolumevsweitzman*

\begin{proof}
First, recall that an ultrametric is a distance function 
satisfying the conditions (1), (2), (3), and (5) of a metric 
recalled in Section \ref{app:TheoremWeitzmanSubmodularity}
plus (4') the strong triangle inequality: 
$d(a,c) \leq \max \big( d(a,b), d(a,c) \big)$.
In case of an ultrametric on $\U$, the elements of 
$\U$ can be arranged at the leaf nodes of a 
hierarchical (= ``taxonometric'') tree $T$, where $T$ is obtained as follows: With each inner node
of $T$, we can associate the radius of the ball that contains all elements at descendants of this node.
In particular, with the root node of $T$, we associate the radius $r$ of 
(the ball containing all of) 
$\U$, i.e., $r = \max \{ d(a,b) | a,b \in \U \}$. 
Moreover, the length of the edges can be chosen in such a way that $d(a,b)$ for two distinct elements $a,b \in \U$ 
is equal to the path length from each of these nodes to their nearest common ancestor. Note that then 
the path from the root to any leaf node has length $r$.
Now, if the distance function is an ultrametric, then 
the Weitzman diversity $\delta_W(S)$ of a subset 
$S \subseteq \U$ can be defined as follows: let $T'$ be the smallest subtree of the hierarchical tree $T$ 
that contains the root of $T$ and all leaf nodes corresponding to the 
elements in $S$. Moreover, let $r$ denote the radius of $\U$. Then $\delta_W(S)$ is equal to the 
sum of the edge-lengths in $T'$ minus $r$. 

We now define 
a volume assignment $\cV = (\cS, \mu, B)$ by defining $\cS$ as the 
set of edges in the hierarchical tree $T$, 
$\beta$ maps every element $j$ of $\U$ to the set of edges of the path from the root to the node corresponding to $j$ in $T$, 
and $\mu(S)$ is defined as the sum of the lengths of the edges in $S$. 
Then, for every non-empty subset $S \subseteq \U$, the resulting volume-based diversity
function 
$\delta_\cV$ satisfies $\delta_\cV(S) = \delta_W(S) + r$, where $r$ is the radius of $\U$.
\end{proof}

\section{Additional Details for Section~\ref{sec:Exact}} 

\subsection{Proof of Theorem~\ref{theo:CQ-hardness}}

\theoCQhardness*

\begin{proof}

We provide a reduction for the problems individually.
However, we always use the $\IndependentSet$  problem as the basis for the reductions.
To that end, recall that in ~\cite{DBLP:conf/ciac/AlimontiK97}, it was shown that the problem remains $\NP$-hard when restricted to $3$-regular graph. 
That is, we consider instances $(V,E,k)$ where $G=(V,E)$ is a graph where every vertex $v\in V$ has exactly 3 incident edges and $k\in \mathbb{N}$.
Then, $(V,E,k)$ is a yes-instance iff there exists a set of pairwise non-adjacent vertices $I\subseteq V$ such that $|I|=k$.
We acknowledge that the provided reductions are similar to those given in \cite{MerklPS23}.

\textbf{Reduction for $\Velem$:}
For the reduction, let us consider a database $D$ encoding the incident relation of the undirected graph $G=(V,E)$.
That is, for every $u\in V$ and its 3 incident edges $uv_1,uv_2,uv_3\in E$ ($v_i\neq v_j$ for $i\neq j$), the database contains the tuple $Inc(u, uv_1,uv_2,uv_3)$.
However, to only include one atom per vertex, let us consider an arbitrary order $\preceq$ over $V$.
Then, we define
\begin{align*}
    D := \{Inc(u,uv_1,uv_2,uv_3)\mid uv_1,uv_2,uv_3\in E \text{ and } v_1 \prec v_2 \prec v_3\}
\end{align*}
Furthermore, consider the query $Q$ defined independent of $G$:
\begin{align*}
    Q(x,y_1,y_2,y_3) \leftarrow Inc(x,y_1,y_2,y_3)
\end{align*}
Consequently, $\sem{Q}(D)$ consists exactly of tuple $Q(u, uv_1,uv_2,uv_3)$ where $u\in V$ and $uv_1,uv_2,uv_3$ are its 3 incident edges.
Then, for a subset $S\subset \sem{Q}(D)$ of size $k$, the value $\delta_{\Velem}(S)$ equals the number of different vertices $v\in V$ used in $S$ plus the number of unique incident edges over all $v$ used.
Thus, $\delta_{\Velem}(S)$ is at most $4k$ and it is $4k$ exactly when $S$ uses $k$ distinct vertices $u_1,\dots,u_k\in V$ such that for no two $i,j=1,\dots,k, i\neq j$ the vertices $u_i$ and $u_j$ have a common incident edge.
Consequently, $\{u_1,\dots,u_k\}$ is an independent set of $G$.

Conversely, when $\{u_1,\dots,u_k\}$ is an independent set of $G$ of size $k$, then for no two $i,j=1,\dots,k, i\neq j$ the vertices $v_i$ and $v_j$ have a common incident edge.
Consequently, 
$$\delta_{\Velem}(\{Q(u_1,u_1v_{1,1},u_1v_{1,2},u_1v_{1,3}), \dots, Q(u_k,u_kv_{k,1},u_kv_{k,2},u_kv_{k,3})\})=4k.$$
This completes the reduction for $\Velem$.

\textbf{Reduction for $\Vpos$:}
For $\Vpos$, we have to adapt the previous reduction.
The problem with using the reduction as is, is that $\Vpos$ counts the same edge twice when they appear at different positions of the tuples.
Thus, essentially, we have to make sure that edges $uv$ appear at the same position for both incident vertices $u$ and $v$.
To do so, we simply increase the arity of $Inc$ to 6.
Then, the database $D$ encoding $G$ will be defined via a iterative process.
To that end, we start with the initial database 
$D_0:=\{Inc(u,u,u,u,u,u) \mid u\in V\}$
Then, we iterate through the edges $\{e_1,\dots,e_{|E|}\}=E$.
Ad each step, we consider a $u_iv_i=e_i$ and define the database $D_i$ as follows:
\begin{itemize}
    \item Let $Inc(u_i,c_1,c_2,c_3,c_4,c_5) Inc(v_i,c'_1,c'_2,c'_3,c'_4,c'_5)\in D_{i-1}$ be the tuple associated to $u_i$ and $v_i$, respectively.
    \item Then, by the pigeonhole principle and since $u_i$ and $v_i$ both have only 3 incident edges, there must be a $j\in \{1,\dots,5\}$ such that $c_j,c'_j$ are as initialized, i.e., $c_j=u_i$ and $c'_j=v_i$.
    \item We replace both $c_j,c'_j$ with $e_i$. E.g., for $j=5$,
    $$D_{i}:= D_{i-1}\setminus \{Inc(u_i,c_1,\dots,c_5) Inc(v_i,c'_1,\dots,c'_5)\} \cup \{Inc(u_i,c_1,c_2,c_3,c_4,e_i) Inc(v_i,c'_1,c'_2,c'_3,c'_4,e_i)\}$$
\end{itemize}

Furthermore, consider the query $Q$ defined independent of $G$:
\begin{align*}
    Q(x,y_1,y_2,y_3,y_4,y_5) \leftarrow Inc(x,y_1,y_2,y_3,y_4,y_5)
\end{align*}

Consequently, $\sem{Q}(D)$ consists exactly of tuple $Q(u, e_1,e_2,e_3,e_4,e_5)$ where $u\in V$ and $\{e_1,\dots,e_5\}\cap E$ are the 3 edges incident to $u$ and $\{e_1,\dots,e_5\}\setminus E = \{u\}$.
Furthermore, for two tuples 
$$Q(u, e_1,e_2,e_3,e_4,e_5), Q(u', e'_1,e'_2,e'_3,e'_4,e'_5)\in \sem{Q}(D)$$
with a common constant $c$ on a common position, it must be on one of last 5 positions and equal to $uu'\in E$.
The converse is also true.
Then, for a subset $S\subset \sem{Q}(D)$ of size $k$, the value $\delta_{\Vpos}(S)$ equals 3 times the number of different vertices $v\in V$ used in $S$ plus the number of unique incident edges over all $v$ used.
Thus, $\delta_{\Vpos}(S)$ is at most $6k$ and it is $6k$ exactly when $S$ uses $k$ distinct vertices $u_1,\dots,u_k\in V$ such that for no two $i,j=1,\dots,k, i\neq j$ the vertices $u_i$ and $u_j$ have a common incident edge.
Consequently, $\{u_1,\dots,u_k\}$ is an independent set of $G$.

Conversely, when $\{u_1,\dots,u_k\}$ is an independent set of $G$ of size $k$, then for no two $i,j=1,\dots,k, i\neq j$ the vertices $v_i$ and $v_j$ have a common incident edge.
Consequently, 
$$\delta_{\Vpos}(\{Q(u_1,e_{1,1},e_{1,2},e_{1,3},e_{1,4},e_{1,5}, \dots, Q(u_k,e_{k,1},e_{k,2},e_{k,3},e_{k,4},e_{k,5}\})=6k.$$
This completes the reduction for $\Vpos$.

\textbf{Reduction for $\Velemw$ and $\Vposw$:}
Actually, in both cases, we can simply reuse the reduction for $\Velem$ and $\Vpos$, respectively, by simply assigning all constants the weight $1$.

\textbf{Reduction for $\VQD$:}
For $\VQD$, we have to adapt the reduction given for $\Velem$.
To that end, we simply extend the database $D$ considered there by the atoms $E(uv)$ for edges $uv\in E$.
Then, consider the query $Q$ defined independent of $G$:
\begin{align*}
    Q(x,y_1,y_2,y_3) \leftarrow Inc(x,y_1,y_2,y_3), E(y_1), E(y_2), E(y_3)
\end{align*}
Consequently, $\sem{Q}(D)$ consists exactly of tuple $Q(u, uv_1,uv_2,uv_3)$ where $u\in V$ and $uv_1,uv_2,uv_3$ are its 3 incident edges such that $v_1\prec v_2\prec v_3$ and $$\betaQD(Q(u, uv_1,uv_2,uv_3)) = \{Inc(v,uv_1,uv_2,uv_3), E(uv_1), E(uv_2), E(uv_3)\}.$$
Of the tuples $\betaQD(Q(u, uv_1,uv_2,uv_3))$, the first is unique while the remaining ones are shared with the tuple corresponding to the other incident vertex $v_i,i=1,2,3$.
Thus, for a subset $S\subset \sem{Q}(D)$ of size $k$, the value $\delta_{\VQD}(S)$ equals the number of different vertices $v\in V$ used in $S$ plus the number of unique incident edges over all $v$ used.
Thus, $\delta_{\Velem}(S)$ is at most $4k$ and it is $4k$ exactly when $S$ uses $k$ distinct vertices $u_1,\dots,u_k\in V$ such that for no two $i,j=1,\dots,k, i\neq j$ the vertices $u_i$ and $u_j$ have a common incident edge.
Consequently, $\{u_1,\dots,u_k\}$ is an independent set of $G$.

Conversely, when $\{u_1,\dots,u_k\}$ is an independent set of $G$ of size $k$, then for no two $i,j=1,\dots,k, i\neq j$ the vertices $v_i$ and $v_j$ have a common incident edge.
Consequently, 
$$\delta_{\Velem}(\{Q(u_1,u_1v_{1,1},u_1v_{1,2},u_1v_{1,3}), \dots, Q(u_k,u_kv_{k,1},u_kv_{k,2},u_kv_{k,3})\})=4k.$$
This completes the reduction for $\Velem$.
\end{proof}

\subsection{Proof of Theorem~\ref{theo:CQ-approximation-hardness}}

\theoCQapproximationhardness*

\begin{proof}

We will prove this by encoding the $\maxCoverage$ problem.
To that end, recall the definition of $\maxCoverage$ (where $\Omega$ is simply a countable infinite universe):
\begin{center}
		\begin{tabular}{rl}
			\hline\\[-2ex]
			\textbf{Problem:} & $\maxCoverage$\\
			\textbf{Input:} & A set of finite subsets $\mathcal{C} =\{S_1,\dots,S_m\} \subseteq 2^{\Omega}$ and a $k\geq 0$. \\
			\textbf{Output:} & $\argmax_{\mathcal{C}'\subseteq \mathcal{C}, |\mathcal{C}'|=k} |\bigcup_{S\in \mathcal{C}'}S|$\\\\[-2.2ex]
			\hline
		\end{tabular}
\end{center}

Further, recall that due to \cite{DBLP:journals/jacm/Feige98}, it is know that unless $\PTIME = \NP$, there is no algorithm to approximate $\maxCoverage$ in polynomial time.

We can construct very naturally a volume assignment $\cV=(\cS,\mu,\beta)$ such that for a simply CQ $Q$ the problem $\CQEval[\Sigma,\cV,Q]$ encodes $\maxCoverage$.
To do so, we only use one relation symbol $\Sigma=\{Q,R_{\mathcal{C}}\}$ and data values $\mathbb{D}:= \{S\subseteq \Omega \mid S \text{ is finite}\}$.
Then, the universe of tuples $\TUPLES$ is 
$\TUPLES = \{R_{\mathcal{C}}(S)\mid S\subseteq \Omega, S \text{ is finite}\}$.
We then define the volume assignment via
\begin{align*}
    \cS := \{S\subseteq \Omega \mid S \text{ is finite}\},  \quad \mu=\mucount, \quad \beta\colon R_{\mathcal{C}}(S)\mapsto S, Q(S)\mapsto S
\end{align*}
That is, the data values are finite subsets of the universe and intuitively, we measure the diversity of a set of subsets by the number of distinct elements covered.
Further, we use the single atom query 
$$Q(x)\leftarrow R_{\mathcal{C}}(x)$$
and encode an instance $(\mathcal{C},k)$ in the database 
$$D:=\{R_{\mathcal{C}}(S)\mid S\in \mathcal{C}\}$$
Then, clearly, 
$$\delta_{\cV}(\{Q(S_1), \dots, Q(S_k)\}) = \delta_{\cV}(\{R_{\mathcal{C}}(S_1), \dots, R_{\mathcal{C}}(S_k)\}) = |\bigcup_{i=1,\dots,k}S_i|$$
and $\{Q(S_1), \dots, Q(S_k)\}\subseteq \sem{Q}(D)$ iff $\{S_1,\dots, S_k\}\subseteq \mathcal{C}$.
Hence, finding maximal diverse solutions of $Q$ over these databases is equivalent to finding subsets of $\mathcal{C}$ with maximal coverage.    
\end{proof}

        \section{Additional Details for Section \ref{sec:ACQs}} \label{app:acq}

\subsection{Proof of Theorem \ref{thm:impapprox}}

\thmimpapprox*

\begin{proof}%
    This theorem can essentially be proven in the same way as Theorem \ref{theo:CQ-approximation}.
    That is, it immediately follows from the results of~\cite{nemhauser1978analysis} as the considered diversity functions $\delta_{\cV}$ are submodular.
    We simply have to notice that the runtime of Algorithm~\ref{alg:greedy} with the additional input $Q$ is as claimed as $T$, in particular, accounts for the time spent in line \ref{line:greedy}.
\end{proof}

\subsection{Proof of Theorem \ref{theo:hardnessCombinedComplexity}}

\theoHardnessCompbinedComplexity*

\begin{proof}
We provide a reduction for the problems individually.
However, we always use the $\HamiltonianPath$  problem as the basis for the reductions.
Recall that a Hamiltonian path of a directed graph $G=(V,E)$ is a sequence of distinct vertices $(v_1,\dots,v_n)$ such that $v_iv_{i+1}\in E$ for $i=1,\dots,|V|-1$
Then, $(V,E)$ is a yes-instance iff there exists a Hamiltonian path of $(V,E)$.

\textbf{Reduction for $\Velem$:}
Given an instance $G = (V(G), E(G))$ of $\HamiltonianPath$, we define an instance $(D,Q,S)$ of $\CQNext[\Sigma, \cV]$
as follows: the database $D$ consists of a single binary relation $E$ storing the edges of $G$.
That is,
$$D=\{E(u,v)\mid uv\in E(G)\}$$
Further, we set $ S= \emptyset$,
and, for $n = |V(G)|$, we define the ACQ $Q$ as follows:
        \[
        Q(x_1,\dots,x_{n}) \ \leftarrow \  E(x_1,x_2),\dots,E(x_{n-1},x_{n}).
        \]
        Notice that a solution $Q(h(\overline{x}))\in \sem{Q}(D)$ corresponds to the walk $h(\bx)$ (a sequence of not necessarily distinct vertices) in $G$.
        Applying $\delta_{\Velem}$ to the singleton set $\{Q(h(\overline{x}))\}$, i.e., $\delta_{\Velem}(\{Q(h(\overline{x}))\})$, counts the number of distinct vertices used in the corresponding walk. 
        Hence, $G$ is a positive instance of Hamiltonian path, if and only if 
        the solution to this instance of $\CQNext[\Sigma, \Velem]$ yields an answer $Q(h(\overline{x}))$ with 
        $\delta_{\Velem}(\{Q(h(\overline{x}))\}) = n$.

\textbf{Reduction for $\Velemw$:}
We can simply reuse the reduction for $\Velem$ by simply assigning all constants the weight $1$.

\textbf{Reduction for $\VQD$:}
Given an instance $G = (V, E)$ of $\HamiltonianPath$, we define an instance $(D,Q,S)$ of $\CQNext[\Sigma, \cV]$
as follows: the database $D$ consists of the binary relation $E$ storing the edges of $G$ as well as a relation $V$ storing the vertices.
That is, 
$$D=\{E(u,v)\mid uv\in E\}\cup\{V(v)\mid v\in V(G)\}$$
Further, we set $ S= \emptyset$,
and, for $n = |V(G)|$, we define the ACQ $Q$ as follows:
        \[
        Q(x_1,\dots,x_{n}) \ \leftarrow \  E(x_1,x_2),\dots,E(x_{n-1},x_{n}), V(x_1),\dots, V(x_n).
        \]
        Notice that a solution $Q(h(\overline{x}))\in \sem{Q}(D)$ corresponds to the walk $h(\bx)$ (a sequence of not necessarily distinct vertices) in $G$.
        Applying $\delta_{\VQD}$ to the singleton set $\{Q(h(\overline{x}))\}$, i.e., $\delta_{\VQD}(\{Q(h(\overline{x}))\})$, counts the number of distinct vertices and edges used in the corresponding walk. 
        Hence, $G$ is a positive instance of Hamiltonian path, if and only if 
        the solution to this instance of $\CQNext[\Sigma, \Velem]$ yields an answer $Q(h(\overline{x}))$ with 
        $\delta_{\Velem}(\{Q(h(\overline{x}))\}) = 2n-1$.
\end{proof}

\subsection{Proof of Theorem \ref{theo:tractableCombinedComplexity}}

\theoTractableCompbinedComplexity*

\begin{proof}
Since $\Vposw$ is strictly more general than $\Vpos$ we simply provide the proof for $\Vposw$.
That is, we show that $\CQNext[\Sigma, \Vposw]$ is tractable for ACQs.
To that end, 
let
$$Q(\bx) \ \leftarrow \ R_1(\bx_1), \ldots R_{m}(\bx_{m})$$
be a ACQ 
over some schema $\Sigma$, 
$D$ a database, %
and $h_1,\dots,h_k$ homomorphisms from $Q$ to $D$, i.e., $S=\{Q(h_1(\bx)),\dots, Q(h_k(\bx))\}\subseteq \sem{Q}(D)$ a $k$-set of solutions.
Further, let $\bx=(x_1,\dots,x_{|\bx|})$.
Consider the \textit{marginal} diversity for a new solution $Q(h(\bx))\in \sem{Q}(D)$:
\[
\delta_{\Vpos}(S\cup \{Q(h(\bx))\})-\delta_{\Vpos}(S) \ = \  \sum_{x_l\in \bx} \alpha_{x_l}, \quad \text{where }\alpha_{x_l}=\begin{cases}
w(h(x_l),l) & \text{if } \forall i \colon h(x_l)\neq h_i(x_l)\\
0 & \text{if } \exists i \colon h(x_l) = h_i(x_l)\\
\end{cases}
\]
namely, the marginal diversity counts the number of \textit{new} values (at new positions).
We can cast this then to a sum-product query over the tropical semi-ring $\mathbb{R}_{\max}:=(\mathbb{R}\cup\{+\infty\}, +,\max)$.
Doing so shows that we can find the element that maximizes the marginal diversity in linear time.
To do so, for every $x_l\in \bx$ let us choose a covering relation $R^{x_l}:=R_i$ where $x_l\in \bx_i$ and $1,\dots,m$.
Then, we can define the $\mathbb{R}_{\max}$-annotated relations $R^*_1,\dots,R^*_m$.
That is, for $h\colon \mathcal{X}\rightarrow \D$ such that $R_j(h(\bx_j))\in D$ we add to $D$ the annotated tuple
$$R_j^*(h(\bx_j))\mapsto  \sum_{{x_l}\in \bx\colon R^{x_l}=R_j} \alpha_{x_l}, \quad \text{where }\alpha_{x_l}=\begin{cases}
w(h(x_l),l) & \text{if } \forall i=1,\dots,k \colon h(x_l)\neq h_i(x_l),\\
0 & \text{if } \exists i=1,\dots,k \colon h(x_l) = h_i(x_l). \\
\end{cases}$$
Note that we only use information from $h(\bx_j)$ and the relation name $R_j$.
Then, as every variable $x\in \bx$ is covered by exactly one relation, we have: 
\begin{align}
    \delta_{\Vpos}(S\cup \{Q(h(\bx))\})-\delta_{\Vpos}(S) = \sum_{j=1}^{m} R_m^*(h(\bx_m)), \label{eq:app:spq}
\end{align}
for homomorphism $h$ form $Q$ to $D$ where we add up the annotations on the right hand side.
Note that we can see minimizing the right hand side of Eq. \eqref{eq:app:spq} as a sum-product query of the following form:
\begin{align*}
    Q^*()\leftarrow \bigotimes_{j=1}^m R_m^*(\bx_m)
\end{align*}
With the semantics 
$$\sem{Q^*()}(D)=\bigoplus_{h\colon \text{ hommomorphism from } Q \text{ to } D}\bigotimes_{j=1}^m R_m^*(h(\bx_m))$$
Note that over the tropical semi-ring $\oplus = \max$ while $\otimes = +$.
Thus, actually,
\begin{align*}
    \sem{Q^*()}(D) = \max_h \sum_{j=1}^m R_m^*(h(\bx_m)).
\end{align*}
Then, due to results on sum-product queries~\cite{DBLP:conf/pods/KhamisNR16,DBLP:journals/jcss/PichlerS13}
we can determine the value of $\sem{Q^*()}$ as well as a $h$ witnessing this maximum in time $O(|Q|\cdot |D|)$.
Note that $Q(h(\bx))\in \sem{Q}(D)$ finding $h$ is possible by tracing how $\sem{Q^*()}$ is computed as it is always only necessary to trace one argument of $\oplus=max$.
Consequently, solving $\CQNext[\Sigma, \Velem]$ for ACQs is possible in time $O(|Q|\cdot |D|)$.
Then, due to Theorem \ref{thm:impapprox}, we can compute a $(1-\nicefrac{1}{e})$-approximation of $\CQEval[\Sigma, \Velem]$ in time $O(k\cdot|Q|\cdot |D|)$.
\end{proof}

\subsection{Proof of Theorem \ref{thm:appeval}}

To prove Theorem \ref{thm:appeval}, we first recall the notions used in \cite{deep2025ranked}:

\newcommand{\rank}{\texttt{rank}}

\begin{definition}[\cite{deep2025ranked}]
\label{def:rankOrder}
Let $\rank$ be a ranking function ($\mathbb{T}_{\{R\}}\rightarrow \mathbb{R}$) over $R(\bx)$-tuples and $\by \subseteq \bx$. 
In the following, $g$ is a arbitrary homomorphism over $\bx\setminus\by$.
We say that $\rank$ is {\em $\by$-decomposable} 
if there exists a total order $\succeq$ for all homomorphisms over $\by$, such that for any two homomorphisms $h,h'$ over $\by$ we have:

$$ h \succeq h' \Rightarrow 
\begin{cases}
    \forall g, \rank(R((h\cup g)(\bx))) =  \rank(R((h'\cup g)(\bx))) \\
    \text{or} \\
    \forall g, \rank(R((h\cup g)(\bx))) >  \rank(R((h'\cup g)(\bx)))
\end{cases}
$$
\end{definition}

\begin{definition}[\cite{deep2025ranked}]
Let $\rank$ be a ranking function over $R(\bx)$-tuples, and $\by,\bz \subseteq \bx$ such that $\by\cap\bz = \emptyset$.
We say that $rank$ is {\em $\by$-decomposable conditioned on $\bz$} 
if for every homomorphism $f$ over $\bz$, the ranking function
$rank_{f}(R_{\bx\setminus \bz}(\hat{h}(\bx\setminus \bz))) := rank(R(({f \cup \hat{h}})(\bx)))$ defined over $R_{\bx\setminus \bz}(\bx\setminus \bz)$-tuples is $\by$-decomposable.
\end{definition}

\begin{definition}[\cite{deep2025ranked}]
Let $(T,\chi)$ be a tree decomposition of a full CQ $Q$. We say that a ranking function is {\em compatible
with $(T,\chi)$} if for every node $t$ it is $(\chi(T_t) \setminus \texttt{key}(t))$-decomposable conditioned on $\texttt{key}(t)$\footnote{Recall that $\chi(T_t)$ are all the variables that are in the bag of $v$ or a decedent of $v$ and $\texttt{key}(t)$ are the variables in the bag of $t$ that it shares with its parent (or $\emptyset$ if $t$ is the root).}.
\end{definition}

We connect the definitions of \cite{deep2025ranked} with the ones we give in Section \ref{sec:ACQs} next.

\begin{lemma}
\label{lem:ydec}
    Let $\beta$ be a ball function this is $\overline{y}$-decomposable conditioned on $\overline{z}$ (w.r.t.\ $R(\bx)$).
    Then, $\beta_{S}(R(\overline{x})):=\beta(R(\overline{x}))\setminus S$ is $\overline{y}$-decomposable conditioned on $\overline{z}$ (w.r.t.\ $R(\bx)$) for every $S\in \cS$. %
\end{lemma}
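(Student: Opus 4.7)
The plan is to derive the claim from a single elementary set-theoretic identity, without re-examining the internal structure of $\beta$ at all. The key observation is that for any sets $X,Y,S$,
$$(X\setminus S)\setminus(Y\setminus S)\;=\;(X\setminus Y)\setminus S,$$
as can be verified by a direct element chase (an element lies in the left side iff it is in $X$, not in $S$, and not in $Y$).

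First I would verify that $\beta_S$ is a valid ball function into $\cS$: since $\cS$ is a $\sigma$-algebra and $S, \beta(R(\overline{x})) \in \cS$, their difference $\beta(R(\overline{x}))\setminus S$ is also in $\cS$. Next, I would reduce the conditioned case to the unconditioned one. Fix an arbitrary homomorphism $f$ over $\overline{z}$, and let $\gamma$ denote the extension of $\beta$ to $R_{\overline{x}\setminus \overline{z}}$-tuples prescribed by Definition~\ref{def:decomp}. By construction, the analogous extension obtained from $\beta_S$ coincides with the pointwise difference $\gamma_S := \gamma(\cdot)\setminus S$. Thus it suffices to prove: if $\gamma$ is $\overline{y}$-decomposable w.r.t.\ $R_{\overline{x}\setminus \overline{z}}$, then so is $\gamma_S$.

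For this remaining claim, I would apply the key identity directly to Equation~\eqref{eq:decom}. For any homomorphisms $h, h'$ over $\overline{y}$ and $g, g'$ over the remaining variables,
\begin{align*}
\gamma_S(R((h\cup g)(\overline{x})))\setminus \gamma_S(R((h'\cup g)(\overline{x})))
&= \bigl(\gamma(R((h\cup g)(\overline{x})))\setminus \gamma(R((h'\cup g)(\overline{x})))\bigr)\setminus S,
\end{align*}
and analogously for $g'$ in place of $g$. Since $\gamma$ is $\overline{y}$-decomposable, the two inner differences are equal, and subtracting the same fixed set $S$ from both sides preserves equality. I anticipate no real obstacle here: the lemma is essentially the observation that subtracting a fixed $S\in \cS$ commutes with forming differences of balls, while the conditioning on $\overline{z}$ passes through without interaction.
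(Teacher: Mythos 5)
Your proposal is correct and follows essentially the same route as the paper: the paper's proof is exactly the chain of equalities you describe, passing through $(\beta(\cdot)\setminus S)\setminus(\beta(\cdot)\setminus S) = (\beta(\cdot)\setminus\beta(\cdot))\setminus S$ after fixing $f$ over $\overline{z}$ and observing that the extension of $\beta_S$ to $R_{\overline{x}\setminus\overline{z}}$-tuples is the pointwise difference of the extension of $\beta$. Your explicit statement of the set identity and the check that $\beta_S$ lands in $\cS$ are minor additions the paper leaves implicit.
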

\begin{proof}
    Let $f$ be a homomorphism over $\overline{z}$.
    Then, let us extend $\beta_{S},\beta$ to $R_{\bx\setminus \bz}(\bx\setminus \bz)$-tuples as 
    $$\beta_{S}(R_{\bx\setminus \bz}(\hat{h}(\bx\setminus \bz))) = \beta_{S}(R((f \cup \hat{h})(\bx))), \quad \beta(R_{\bx\setminus \bz}(\hat{h}(\bx\setminus \bz))) = \beta(R((f \cup \hat{h})(\bx))).$$
    Notice, $\beta_{S} =\beta \setminus S$ also for $R_{\bx\setminus \bz}$-tuples.
    Then, let $h,h'$ be homomorphisms over $\overline{y}$, and $g,g'$ homomorphisms over $\overline{x}\setminus (\overline{y}\cup \overline{z})$.
    Then,
    \begin{align*}
        \beta_{S}&(R_{\bx\setminus \bz}((h\cup g)(\overline{x}\setminus\overline{z}))) \setminus \beta_{S}(R_{\bx\setminus \bz}((h'\cup g)(\overline{x}\setminus\overline{z}))) \\
        &=\big(\beta(R_{\bx\setminus \bz}((h\cup g)(\overline{x}\setminus\overline{z}))) \setminus S\big)\setminus \big(\beta(R_{\bx\setminus \bz}((h'\cup g)(\overline{x}\setminus\overline{z}))  \setminus S\big)\\
        &=\big(\beta(R_{\bx\setminus \bz}((h\cup g)(\overline{x}\setminus\overline{z}))) \setminus \beta(R_{\bx\setminus \bz}((h'\cup g)(\overline{x} \setminus\overline{z})))\big) \setminus S \\
        &=\big(\beta(R_{\bx\setminus \bz}((h\cup g')(\overline{x}\setminus\overline{z}))) \setminus \beta(R_{\bx\setminus \bz}((h'\cup g')(\overline{x} \setminus\overline{z})))\big) \setminus S\\
        &=\big(\beta(R_{\bx\setminus \bz}((h\cup g')(\overline{x}\setminus\overline{z}))) \setminus S\big)\setminus \big(\beta(R_{\bx\setminus \bz}((h'\cup g')(\overline{x}\setminus\overline{z}))  \setminus S\big)\\
        &=\beta_S(R_{\bx\setminus \bz}((h\cup g')(\overline{x}\setminus\overline{z}))) \setminus \beta_S(R_{\bx\setminus \bz}(h'\cup g')(\overline{x}\setminus\overline{z})))
    \end{align*}
    This completes the proof
\end{proof}

\begin{lemma}
    \label{lem:mudec}
    Let $\cV=(\cS, \mu, \beta)$ be a volume assignment over $\TUPLES$ such that $\beta$ is $\overline{y}$-decomposable conditioned on $\overline{z}$ (w.r.t.\ $R(\bx)$).
    Then, then ranking function $\rank\colon \mathbb{T}_{\{R\}}\rightarrow \mathbb{R}, R(\hat{f}(\bx))\mapsto \mu(\beta(R(\hat{f}(\bx))))$ is $\overline{y}$-decomposable conditioned on $\overline{z}$.
\end{lemma}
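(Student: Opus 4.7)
The plan is to exhibit, for each fixed homomorphism $f$ over $\bz$, a total order on homomorphisms $h$ over $\by$ that witnesses $\by$-decomposability of $\rank_f$ in the sense of Definition~\ref{def:rankOrder}. The reduction to the unconditioned case is immediate: with $\beta$ and $\rank$ extended to $R_{\bx\setminus\bz}$-tuples in the standard way, one has $\rank_f(R_{\bx\setminus\bz}(\hat{h}(\bx\setminus\bz))) = \mu(\beta(R_{\bx\setminus\bz}(\hat{h}(\bx\setminus\bz))))$, and by hypothesis the extended $\beta$ is $\by$-decomposable w.r.t.\ $R_{\bx\setminus\bz}$. Hence it suffices to prove that whenever $\beta$ is $\by$-decomposable w.r.t.\ $R$, the induced $\rank = \mu\circ\beta$ is $\by$-decomposable w.r.t.\ $R$ as well.

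For this core step, I would fix homomorphisms $h,h'$ over $\by$ and $g$ over $\bx\setminus\by$, and write $\beta_h := \beta(R((h\cup g)(\bx)))$ and $\beta_{h'} := \beta(R((h'\cup g)(\bx)))$. Splitting both into disjoint parts $\beta_h = (\beta_h\cap\beta_{h'})\sqcup(\beta_h\setminus\beta_{h'})$ and analogously for $\beta_{h'}$, and using additivity of $\mu$, I get
\[
\rank(R((h\cup g)(\bx))) - \rank(R((h'\cup g)(\bx))) = \mu(\beta_h\setminus\beta_{h'}) - \mu(\beta_{h'}\setminus\beta_h).
\]
By $\by$-decomposability of $\beta$, the two set differences on the right are exactly the sets $\beta(h,h')$ and $\beta(h',h)$ introduced after Definition~\ref{def:decomp}, and in particular do not depend on $g$. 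Hence the rank difference equals a constant $c(h,h') := \mu(\beta(h,h')) - \mu(\beta(h',h))$ depending only on $h,h'$.

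Finally I would define $h\succeq h'$ iff $c(h,h')\geq 0$. Since $c(h,h') = -c(h',h)$, any two $h,h'$ are comparable; reflexivity is immediate; and transitivity follows from the equivalent characterisation $h\succeq h' \Longleftrightarrow \forall g\colon \rank(R((h\cup g)(\bx))) \geq \rank(R((h'\cup g)(\bx)))$, which the previous paragraph gives. Extending this total preorder to a total order by breaking ties arbitrarily preserves the implication required by Definition~\ref{def:rankOrder}, because a tie $c(h,h')=0$ entails $\rank(R((h\cup g)(\bx))) = \rank(R((h'\cup g)(\bx)))$ for every $g$, hitting the ``all-equal'' branch of that implication. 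The only obstacle I foresee is bookkeeping: tracking the two layers of extension (the conditioning extension to $R_{\bx\setminus\bz}$ and the disjoint-union rewriting of $\beta_h,\beta_{h'}$) while observing that all sets to which $\mu$ is applied lie in $\cS$, which holds since $\cS$ is a $\sigma$-algebra and hence closed under finite intersections and set differences; no new measure-theoretic ingredient beyond finite additivity is needed.
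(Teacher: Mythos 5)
Your proof is correct and follows essentially the same route as the paper's: both reduce to the unconditioned case via the extension to $R_{\bx\setminus\bz}$-tuples, use the identity $\mu(A)-\mu(B)=\mu(A\setminus B)-\mu(B\setminus A)$ together with the $\by$-decomposability of $\beta$ to conclude that the rank difference is independent of the completion $g$, and read off the required total order from the sign of that difference. Your write-up is, if anything, slightly more explicit about transitivity and tie-breaking than the paper's, but the underlying argument is identical.
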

\begin{proof}
    Let $f$ be a homomorphism over $\overline{z}$, let $h,h'$ be homomorphisms over $\overline{y}$ and let $g$ be a homomorphism over $\overline{x}\setminus (\overline{y}\cup \overline{z})$.
    We then define 
    \begin{align*}
        h\prec h' &\text{ iff } \mu(\beta(R((h\cup g \cup f)(\overline{x})))) < \mu(\beta(R((h'\cup g \cup f)(\overline{x}))))\\
        h \equiv h' &\text{ iff } \mu(\beta(R((h\cup g \cup f)(\overline{x})))) = \mu(\beta(R((h'\cup g \cup f)(\overline{x}))))\\
        h\succ h' &\text{ iff } \mu(\beta(R((h\cup g \cup f)(\overline{x})))) > \mu(\beta(R((h'\cup g \cup f)(\overline{x}))))
    \end{align*}
    Let $g'$ be a further homomorphism over $\overline{x}\setminus (\overline{y}\cup \overline{z})$.
    Then ($\beta$ extended to $R_{\bx\setminus\bz}$ as above), 
    \begin{align*}
        \mu\big(\beta(R((h\cup g \cup f)&(\overline{x})))\big) - \mu\big(\beta(R((h'\cup g \cup f)(\overline{x})))\big)\\
        &= \mu\big(\beta(R((h\cup g \cup f)(\overline{x})))\setminus \beta(R((h'\cup g \cup f)(\overline{x})))\big) \\
        & \phantom{a} \quad\quad\quad- \mu\big(\beta(R((h\cup g \cup f)(\overline{x})))\setminus \beta(R((h'\cup g \cup f)(\overline{x})))\big)\\
        &= \mu\big(\beta(R_{\bx\setminus\bz}((h\cup g)(\overline{x}\setminus\overline{z})))\setminus \beta(R_{\bx\setminus\bz}((h'\cup g)(\overline{x}\setminus\overline{z})))\big) \\
        & \phantom{a} \quad\quad\quad - \mu\big(\beta(R_{\bx\setminus\bz}((h\cup g)(\overline{x}\setminus\overline{z})))\setminus \beta(R_{\bx\setminus\bz}((h'\cup g)(\overline{x}\setminus\overline{z})))\big)\\
        &= \mu\big(\beta(R_{\bx\setminus\bz}((h\cup g')(\overline{x}\setminus\overline{z}))\setminus \beta(R_{\bx\setminus\bz}((h'\cup g')(\overline{x}\setminus\overline{z})))\big) \\
        & \phantom{a} \quad\quad\quad - \mu\big(\beta(R_{\bx\setminus\bz}((h\cup g')(\overline{x}\setminus\overline{z})))\setminus \beta(R_{\bx\setminus\bz}((h'\cup g')(\overline{x}\setminus\overline{z})))\big) \\
        &= \mu\big(\beta(R((h\cup g' \cup f)(\overline{x})))\setminus \beta(R((h'\cup g' \cup f)(\overline{x})))\big) \\
        & \phantom{a} \quad\quad\quad - \mu\big(\beta(R((h\cup g' \cup f)(\overline{x})))\setminus \beta(R((h'\cup g' \cup f)(\overline{x})))\big) \\
        & = \mu\big(\beta(R((h\cup g' \cup f)(\overline{x})))\big) - \mu\big(\beta(R((h'\cup g' \cup f)(\overline{x})))\big)
    \end{align*}
    Thus, for every homomorphism $g'$ over $\overline{x}\setminus (\overline{y}\cup \overline{z})$
    \begin{align*}
        \rank_f(R_{\bx\setminus\bz}((h\cup g')(\overline{x}\setminus\overline{z})))&=\mu(\beta(R((h\cup g \cup f)(\overline{x})))) \\
        &= \mu(\beta(R((h'\cup g \cup f)(\overline{x})))) \\
        &= \rank_f(R_{\bx\setminus\bz}((h'\cup g')(\overline{x}\setminus\overline{z})))
    \end{align*}
    holds if $h \equiv h'$, while 
    \begin{align*}
        \rank_f(R_{\bx\setminus\bz}((h\cup g')(\overline{x}\setminus\overline{z})))&=\mu(\beta(R((h\cup g \cup f)(\overline{x})))) \\
        &> \mu(\beta(R((h'\cup g \cup f)(\overline{x})))) \\
        &= \rank_f(R_{\bx\setminus\bz}((h'\cup g')(\overline{x}\setminus\overline{z})))
    \end{align*}
    holds if $h \succ h'$.
\end{proof}

\begin{lemma}
\label{lem:tdec}
    Let $\rank$ be a ranking function over $R$-tuples that is $\by$-decomposable conditioned on $\bz$.
    Then, for any constant $c$, $\rank + c$ is also $\by$-decomposable conditioned on $\bz$.
\end{lemma}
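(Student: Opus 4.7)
The plan is to first reduce the statement to the unconditioned case, and then observe that adding a constant preserves $\by$-decomposability essentially by inspection of Definition~\ref{def:rankOrder}.

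First, I would unfold the conditioning. By definition, $\rank + c$ is $\by$-decomposable conditioned on $\bz$ iff for every homomorphism $f$ over $\bz$, the ranking function
$(\rank + c)_f(R_{\bx\setminus\bz}(\hat{h}(\bx\setminus\bz))) := (\rank + c)(R((f\cup\hat{h})(\bx)))$
is $\by$-decomposable (w.r.t.\ $R_{\bx\setminus\bz}$). Since $(\rank + c)(R(\hat{f}(\bx))) = \rank(R(\hat{f}(\bx))) + c$ by definition of the pointwise sum, we immediately get $(\rank + c)_f = \rank_f + c$ (where the constant function $c$ is added pointwise). Thus the statement of the lemma reduces to showing: if a ranking function $\rank'$ (here $\rank_f$) is $\by$-decomposable, then so is $\rank' + c$.

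For the unconditioned case, let $\succeq$ be the total order on homomorphisms over $\by$ witnessing the $\by$-decomposability of $\rank'$. I claim the very same order $\succeq$ witnesses $\by$-decomposability of $\rank' + c$. Indeed, for any homomorphisms $h, h'$ over $\by$ and any homomorphism $g$ over $\bx \setminus \by$, we have
$(\rank' + c)(R((h\cup g)(\bx))) = \rank'(R((h\cup g)(\bx))) + c$, and similarly for $h'$. Therefore, the equality and the strict inequality
\[
\rank'(R((h\cup g)(\bx))) = \rank'(R((h'\cup g)(\bx))), \qquad \rank'(R((h\cup g)(\bx))) > \rank'(R((h'\cup g)(\bx))),
\]
are preserved verbatim after adding $c$ to both sides. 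Hence the disjunction required in Definition~\ref{def:rankOrder} for $h \succeq h'$ under $\rank'$ transfers to $\rank' + c$ without change.

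Applying this observation to $\rank' = \rank_f$ for each choice of $f$ yields that $(\rank + c)_f = \rank_f + c$ is $\by$-decomposable for every $f$, which by definition gives that $\rank + c$ is $\by$-decomposable conditioned on $\bz$. I do not foresee any real obstacle here: the lemma is essentially the trivial remark that translating all values of a function by the same constant preserves both the equality pattern and the strict order among function values, and the conditioning is handled uniformly in $f$.
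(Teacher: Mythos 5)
Your proof is correct and fills in exactly the routine argument the paper dismisses with ``Proof is immediate'': unfold the conditioning on $f$, observe that adding a constant preserves both equalities and strict inequalities of function values, and reuse the same total order $\succeq$. No differences in approach to report.
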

\begin{proof}
    Proof is immediate.
\end{proof}

\begin{lemma}
\label{lem:torank}
    Let $Q(\bx)$ be a full CQ and let $\cV=(\cS, \mu, \beta)$ be a volume assignment over $\TUPLES$ such that $\beta$ is $\overline{y}$-decomposable conditioned on $\overline{z}$ (w.r.t.\ $Q$).
    Then, the ranking function for $Q$-tuples $\rank_{\cV,S} \colon \sem{Q}(D)\rightarrow \mathbb{R}, Q(h(\overline{x}))\mapsto \delta_{\cV}(S \cup \{Q(h(\overline{x}))\}))$ is $\overline{y}$-decomposable conditioned on $\overline{z}$ for every $S\subseteq \sem{Q}(D)$.
\end{lemma}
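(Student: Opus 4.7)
The plan is to express $\rank_{\cV,S}$ as a constant plus a function that is already known to be $\overline{y}$-decomposable conditioned on $\overline{z}$, and then apply Lemmas~\ref{lem:ydec}, \ref{lem:mudec}, and \ref{lem:tdec} in sequence.

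First, I would set $U := \bigcup_{s \in S} \beta(s) \in \cS$ and observe that for every $Q(h(\bx)) \in \sem{Q}(D)$, the sets $U$ and $\beta(Q(h(\bx))) \setminus U$ are disjoint with union $U \cup \beta(Q(h(\bx)))$, so finite additivity of $\mu$ gives
\[
\rank_{\cV,S}(Q(h(\bx))) \;=\; \mu\!\left(U \cup \beta(Q(h(\bx)))\right) \;=\; \mu(U) \;+\; \mu\!\left(\beta(Q(h(\bx))) \setminus U\right).
\]
In the notation of Lemma~\ref{lem:ydec}, the second summand is $\mu(\beta_U(Q(h(\bx))))$, while $\mu(U) = \delta_\cV(S)$ is a constant independent of $h$.

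Second, I would apply Lemma~\ref{lem:ydec} with the element $U \in \cS$ to conclude that $\beta_U$ is $\overline{y}$-decomposable conditioned on $\overline{z}$ (w.r.t.\ $Q$). Then, applying Lemma~\ref{lem:mudec} to the volume assignment using ball function $\beta_U$ yields that the ranking function $Q(h(\bx)) \mapsto \mu(\beta_U(Q(h(\bx))))$ is $\overline{y}$-decomposable conditioned on $\overline{z}$. Finally, Lemma~\ref{lem:tdec} (applied inside the definition of $\overline{y}$-decomposability conditioned on $\overline{z}$, i.e.\ for every fixed $f$ over $\overline{z}$) shows that adding the constant $\mu(U)$ preserves this property, giving the claim about $\rank_{\cV,S}$.

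The only delicate step is the first one: one has to verify that $U \in \cS$ (this follows from $S$ being finite and $\cS$ being closed under countable unions), and that $\mu(U)$ is genuinely a constant with respect to the conditioning structure, i.e.\ that it does not depend on $f$, $\by$, or $g$. Since $S \subseteq \sem{Q}(D)$ is fixed before the ranking function is defined and its balls $\beta(s)$ do not depend on the homomorphism $h$ being evaluated, this is immediate, and the proof is little more than bookkeeping on top of the three preceding lemmas.
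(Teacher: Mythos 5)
Your proposal is correct and follows essentially the same route as the paper's proof: both split $\rank_{\cV,S}$ as $\delta_\cV(S)+\mu\bigl(\beta(Q(h(\bx)))\setminus\bigcup_{s\in S}\beta(s)\bigr)$ and then chain Lemmas~\ref{lem:ydec}, \ref{lem:mudec}, and \ref{lem:tdec} in the same order. Your additional remark that $U\in\cS$ (since $S$ is finite and $\cS$ is closed under countable unions) is a small but welcome piece of bookkeeping that the paper leaves implicit.
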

\begin{proof}
    Let $S \subseteq \sem{Q}(D)$.
    Then, we need to show that 
    $$\rank_{\cV,S}(Q(h(\overline{x})))=\delta_{\cV}(S \cup \{Q(h(\overline{x}))\}) = \delta_{\cV}(S)+\mu(\beta(Q(h(\overline{x})))\setminus \bigcup_{s\in S}\beta(s))$$
    is $\overline{y}$-decomposable conditioned on $\overline{z}$.
    Due to Lemma \ref{lem:ydec}, 
    $$\beta_{\bigcup_{s\in S}\beta(s)}\colon Q(h(\overline{x})) \mapsto \beta(Q(h(\overline{x})))\setminus \bigcup_{s\in S}\beta(s)$$
    is $\overline{y}$-decomposable conditioned on $\overline{z}$.
    Then, due to Lemma \ref{lem:mudec}, 
    $$\rank\colon Q(h(\overline{x}))\mapsto \mu(\beta_{\bigcup_{s\in S}\beta(s)}(Q(h(\overline{x}))))$$
    is $\overline{y}$-decomposable conditioned on $\overline{z}$.
    Lastly, due to Lemma \ref{lem:tdec}, 
    $$\rank_{\cV,S} \colon Q(h(\overline{x})) \mapsto \delta_{\cV}(S \cup \{Q(h(\overline{x}))\}) =\delta_{\cV}(S)+\rank(Q(h(\overline{x})))$$
    is $\overline{y}$-decomposable conditioned on $\overline{z}$ as required.
\end{proof}

\begin{lemma}
\label{lem:final}
    Let $(T,\chi)$ be a tree decomposition of the full CQ $Q(\bx)$ and let $\cV=(\cS, \mu, \beta)$ be a volume assignment over $\TUPLES$ such that $\beta$ is compatible with $(T,\chi)$. 
    Then, the ranking function $\rank_{\cV,S} \colon \sem{Q}(D)\rightarrow \mathbb{R}, Q(h(\overline{x}))\mapsto \delta_{\cV}(S \cup \{Q(h(\overline{x}))\}))$ is compatible with $(T,\chi)$ for every $S\subseteq \sem{Q}(D)$.
\end{lemma}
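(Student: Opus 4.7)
The plan is to reduce Lemma \ref{lem:final} to a straightforward, node-by-node application of Lemma \ref{lem:torank}. Recall that compatibility of $\beta$ with $(T,\chi)$ unfolds, by definition, to the statement that for every node $t \in V(T)$, the ball function $\beta$ is $(\chi(T_t) \setminus \texttt{key}(t))$-decomposable conditioned on $\texttt{key}(t)$ with respect to $Q$. Likewise, compatibility of $\rank_{\cV, S}$ with $(T,\chi)$ is the node-wise statement that, for every $t \in V(T)$, the function $\rank_{\cV, S}$ is $(\chi(T_t) \setminus \texttt{key}(t))$-decomposable conditioned on $\texttt{key}(t)$. So to prove the lemma it suffices to establish this node-wise condition for each fixed $t$ and each fixed $S \subseteq \sem{Q}(D)$.

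First, I would fix an arbitrary subset $S \subseteq \sem{Q}(D)$ and an arbitrary node $t \in V(T)$, and introduce the shorthand $\by := \chi(T_t) \setminus \texttt{key}(t)$ and $\bz := \texttt{key}(t)$. By the hypothesis that $\beta$ is compatible with $(T,\chi)$, I get exactly that $\beta$ is $\by$-decomposable conditioned on $\bz$ with respect to $Q$. Lemma \ref{lem:torank} now applies verbatim (taking $Q$ in the role of the full CQ in the hypothesis and $\cV$ in the role of the volume assignment), yielding that $\rank_{\cV, S}$ is $\by$-decomposable conditioned on $\bz$. Since $t$ was arbitrary, this is precisely the definition of compatibility of $\rank_{\cV, S}$ with $(T,\chi)$.

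Since $S$ was also chosen arbitrarily, the conclusion holds for every $S \subseteq \sem{Q}(D)$, as claimed. The proof is therefore nothing more than unfolding two definitions that mirror one another and invoking Lemma \ref{lem:torank} pointwise along the nodes of $T$. The substantive content (the fact that subtracting $\bigcup_{s \in S} \beta(s)$ preserves $\by$-decomposability conditioned on $\bz$, that applying $\mu$ preserves it, and that adding the constant $\delta_\cV(S)$ preserves it) has already been carried out in Lemmas \ref{lem:ydec}, \ref{lem:mudec}, and \ref{lem:tdec}, and packaged into Lemma \ref{lem:torank}.

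The only conceivable obstacle is verifying that the notion of ``w.r.t.\ $Q$'' used in compatibility of $\beta$ matches the hypothesis of Lemma \ref{lem:torank}, which is formulated for a single atom / a full CQ $Q(\bx)$. Since Lemma \ref{lem:torank} is stated directly for full CQs and compatibility with a tree decomposition is by definition applied atom-wise to $Q(\bx)$ viewed as a full CQ, no additional work is required: the two notions are aligned by construction and the lemma follows.
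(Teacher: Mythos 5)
Your proof is correct and follows essentially the same route as the paper's: unfold compatibility node-wise, invoke Lemma~\ref{lem:torank} at each node $t$ with $\by = \chi(T_t)\setminus\texttt{key}(t)$ and $\bz = \texttt{key}(t)$, and conclude. The paper's own proof is exactly this three-line argument, so no further comparison is needed.
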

\begin{proof}
    Notice that for every $t\in V(T)$ the ball function $\beta$ is $(\chi(T_t)\setminus\texttt{key}(t))$-decomposable conditioned on $\texttt{key}(t)$.
    Then, due to Lemma \ref{lem:torank}, for any $S\subseteq \sem{Q}(D)$ the rank function $\rank_{\cV,S}$ is $(\chi(T_t)\setminus\texttt{key}(t))$-decomposable conditioned on $\texttt{key}(t)$.
    Thus, in general, for any $S\subseteq \sem{Q}(D)$ the rank function $\rank_{\cV,S}$ is compatible with $(T,\chi)$.
\end{proof}

Next, we recall the theorem proven in \cite{deep2025ranked} (a simplified version)
Note that they assume $\rank$ can be computed in constant time.

\begin{theorem}[\cite{deep2025ranked}]
\label{thm:ranked}
    Let $(T,\chi)$ be a tree decomposition of the full CQ $Q(\bx)$ and $\rank$ a ranking function for $Q$-tuples that is compatible with $(T,\chi)$.
    Then, enumerating $\sem{Q}(D)$ in the order induced by $\rank$ is possible with $O(|D|^{fhw(T,\chi)})$ preprocessing time and $O(\log|D|)$ delay in data complexity.
    In particular, a solution maximizing $\rank$ can be found in time $O(|D|^{fhw(T,\chi)})$.
\end{theorem}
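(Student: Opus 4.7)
The plan is to combine a tree-decomposition-based preprocessing phase (in the style of Yannakakis) with a Lawler-style priority queue enumeration in the spirit of the ``any-$k$'' framework. First, for every bag $v\in V(T)$ I would materialize a relation $B_v$ over the variables in $\chi(v)$ consisting of all assignments that satisfy the atoms of $Q$ whose variables are contained in $\chi(v)$. By the AGM bound and worst-case optimal join algorithms (e.g.\ generic join / LeapFrog TrieJoin), the size and computation cost of each $B_v$ are bounded by $O(|D|^{\fhw(T,\chi)})$. A standard bottom-up and top-down full-reduce pass then removes dangling tuples so that every surviving tuple in every $B_v$ extends to at least one full answer $Q(h(\bx))\in\sem{Q}(D)$; the total preprocessing cost remains $O(|D|^{\fhw(T,\chi)})$.

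The next step exploits compatibility of $\rank$ with $(T,\chi)$. At every node $t$ and every fixed assignment $f$ of $\texttt{key}(t)$, $\overline{y}$-decomposability conditioned on $\texttt{key}(t)$ means that the extensions of $f$ over $\chi(T_t)\setminus \texttt{key}(t)$ admit a total order by their contribution to $\rank$, and this local order is independent of how the solution is completed outside $T_t$. I would therefore, in a bottom-up sweep, precompute for each $(t,f)$ a sorted list (implemented as a pointer-linked heap or balanced BST) of the ``next-best'' extensions of the subtree rooted at $t$, so that the global rank of a complete answer factors as an aggregation of the independently sorted local choices. Taking the top choice at every bag from the root downwards then constructs a globally maximum answer in $O(|D|^{\fhw(T,\chi)})$ time, yielding the ``furthermore'' clause of the theorem.

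For the full enumeration, I would maintain a global priority queue of candidate answers. The first pop is the top answer constructed above. When an answer is extracted, I push each of its canonical deviations, obtained by replacing the chosen extension at exactly one node by the next one down in the associated sorted list, under a canonical ordering that prevents double-insertion. Standard Lawler-style arguments then guarantee that every answer is pushed exactly once and in globally decreasing order of $\rank$; since each extraction spawns at most $O(|T|)=O(1)$ (in data complexity) new deviations and each heap operation costs $O(\log |D|^{\fhw(T,\chi)})=O(\log|D|)$, the claimed $O(\log|D|)$ delay follows.

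The main obstacle is the correctness of the deviation scheme: one must show that repeatedly expanding canonical deviations visits every answer exactly once, in globally ranked order. This requires defining, for every non-top answer, a unique ``parent'' answer differing from it at a single canonically chosen deviation bag, and then using decomposability conditioned on $\texttt{key}(t)$ to argue that this local swap monotonically decreases $\rank$ independently of the rest of the tree. This canonical-factorization argument is the combinatorial heart of ranked enumeration over tree decompositions and is the part that truly uses compatibility; once it is in place, the complexity bounds follow by straightforward accounting against the preprocessing sizes and the logarithmic heap operations.
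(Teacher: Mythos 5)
The paper does not actually prove this statement: Theorem~\ref{thm:ranked} is imported verbatim (in simplified form) from \cite{deep2025ranked}, so there is no in-paper argument to compare yours against line by line. That said, your outline is a faithful reconstruction of the standard proof strategy behind that result: materializing the bag relations via worst-case optimal joins within $O(|D|^{\fhw(T,\chi)})$, semi-join reducing so every surviving bag tuple extends to a full answer, using compatibility to obtain, for each node $t$ and each assignment of $\texttt{key}(t)$, a completion-independent total order on the subtree extensions, and then running a Lawler-style priority-queue enumeration over canonical deviations with $O(\log|D|)$ per heap operation. The one substantive gap is the one you name yourself: the correctness of the deviation scheme --- that every answer has a unique canonical parent, is pushed exactly once, and that swapping to the next local extension decreases $\texttt{rank}$ \emph{globally}, which requires the conditional decomposability at every node along the path from the deviation bag to the root, not just at the deviation bag itself. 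That argument is the actual content of the cited theorem and is asserted rather than established in your sketch; as a proof of the theorem from first principles it is therefore incomplete, but as an account of why the citation delivers what the paper needs it is accurate.
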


Note that from the proof given in~\cite{deep2025ranked}, it suffices to add the factor $|Q|$ to account for combined complexity.

\thmappeval*

\begin{proof}
Due to Lemma \ref{lem:final}, we can use Theorem \ref{thm:ranked} to solve $\CQNext[\Sigma, \cV]$ in time $O(|Q|\cdot|D|^{fhw(T,\chi)}\cdot T_{\cV})$ (note that we add the factor $T_{\cV}$ as we cannot assume $\rank_{\cV,S}$ as defined in Lemma \ref{lem:final} to be computable in constant time).
Then, due to Theorem \ref{thm:impapprox}, the claim follows.
\end{proof}

\subsection{Proof of Theorem \ref{thm:provenance}}
\label{app:thmprovenance}

\thmprovenance*

\begin{proof}
    Let $Q$ be of the form 
    $$Q(\bx)\leftarrow R_1(\bx_1),\dots,R_m(\bx_m).$$
    Then, for every atom $R_i(\bx_i)$ of $Q$ let us add to the schema the projection onto $\bx$ denoted as $R^{\bx}_i(\bx_i\cap \bx)$ as well as new relations $R^{\bx}_v(\chi(v))$ for nodes $v\in\{v_1,\dots,v_{|V(T_{\bx})|}\}= V(T_{\bx})$.
    Then, let us consider the query
    $$Q^{\bx}(\bx)\leftarrow R^{\bx}_1(\bx_1\cap \bx),\dots,R^{\bx}_m(\bx_m\cap \bx), R^{\bx}_{v_1}(\chi(v_1)), \dots, R^{\bx}_{v_{|V(T_{\bx})|}}(\chi(v_{|V(T_{\bx})|})).$$
    Then, $(T_{\bx}\chi|_{V(T_{\bx})})$ is a tree decomposition of the full CQ $Q^{\bx}$.
    Further, for every $v\in V(T_{\bx})$ let $T_{v}\subseteq T\setminus (T_{\bx}\setminus \{v\})$ be the subtree of $T$ rooted in $v$ that contains $v$ and all its descendants that are not connected to $v$ via $T_{\bx}$.
    Then, we can project all atoms on $\bx_v:=\bigcup_{u\in V(T_v)}\chi(u)$.
    That is, let us add to the schema the relation symbols $R^v_i(\bx_i\cap \bx_v)$ for nodes $v\in\{v_1,\dots,v_{|V(T_{\bx})|}\}= V(T_{\bx})$.
    Then, $(T_v,\chi|_{V(T_v)})$ is a tree decomposition of the (non-full) CQ
    $$Q^v(\chi(v))\leftarrow R^v_1(\bx_1\cap \bx_v), \dots,  R^v_m(\bx_m\cap \bx_v).$$

    Now, our goal is to add data to the database $D$ such that $\sem{Q}(D)$ is apart from the different relations symbol the same as $\sem{Q^{\bx}}(D^{\bx})$.
    To that end, we will intuitively simply add tuples using $R^{\bx}_i$ and $R^v_i$ by projecting the tuples using $R_i$ and tuples $R^{\bx}_v$ that are the results $\sem{Q^v}(D)$.
    That is for every $R_i(h(\bx_i))\in D$ we add to $D^{\bx}$ the tuples
    $$R^{\bx}_i(h(\bx_i\cap \bx)) $$
    and for every $v\in V(T_{\bx})$ the tuple
    $$ R^v_i(h(\bx_i\cap \bx_v)).$$
    Then, further, for every $v\in V(T_{\bx})$ and $Q^v(h(\chi(v)))\in \sem{Q^v}(D^{\bx})$ we add to $D^{\bx}$ the tuple
    $$R^{\bx}_v(h(\chi(v))).$$
    Then, observe that as desired $\sem{Q^{\bx}}(D^{\bx})= \{Q^{\bx}(h(\bx)) \mid Q(h(\bx))\in \sem{Q}(D)\}$.
    Note that this construction can be done in time $O(|D|^{fhw(T,\chi)})$.
    Actually, we would like to annotated the tuple in $D^{\bx}$ such that the annotation of $Q^{\bx}(h(\bx))\in \sem{Q^{\bx}}(D^{\bx})$ is $\betaQD(Q(h(\bx)))$.
    To do so, we annotate $R^{\bx}_i$- and $R^{\bx}_v$-tuples.
    That is, we consider the which-provenance semi-ring \cite{DBLP:journals/vldb/CuiW03} consisting of monomials with no constant and exponents $1$ or $0$.
    That is over variables $\mathcal{Y}$ these are of the form 
    $$\prod_{Y\in \mathcal{Y}}Y^{\alpha_{Y}}$$
    where $\alpha_Y=0$ or $\alpha_Y=1$.
    The addition $\oplus$ and multiplication $\otimes$ are for two non-0 monomials the same, i.e.,
    $$(\prod_{Y\in \mathcal{Y}}Y^{\alpha_{Y}})\oplus (\prod_{Y\in \mathcal{Y}}Y^{\alpha'_{Y}}) = (\prod_{Y\in \mathcal{Y}}Y^{\alpha_{Y}})\otimes (\prod_{Y\in \mathcal{Y}}Y^{\alpha'_{Y}}) = \prod_{Y\in \mathcal{Y}}Y^{\max(\alpha_{Y},\alpha'_Y)}$$
    while $0$ is the additive identity and $1$ is the multiplicative identity.    
    Therefore, 
    $$(\prod_{Y\in \mathcal{Y}}Y^{\alpha_{Y}})\oplus 1 = \prod_{Y\in \mathcal{Y}}Y^{\alpha_{Y}}, \quad (\prod_{Y\in \mathcal{Y}}Y^{\alpha_{Y}})\otimes 0 = 0.$$
    One can also view the elements $\prod_{Y\in \mathcal{Y}}Y^{\alpha_{Y}}$ as subsets of $\mathcal{Y}$ where the operations $\otimes, \oplus$ are then both the union $\cup$.
    Then, we can see $1=\emptyset$ while $0=\bot$ and $\otimes, \oplus$ interact differently with $\emptyset$ and $\bot$.

    We see the tuples $D$ as the variables $\mathcal{Y}$.
    Thus, the domain of the semi-ring consist of sets of tuples, i.e., $2^D\cup\{\bot\}$.
    Then, we annotate the tuples $R^{\bx}_i(h(\bx_i \cap \bx))$ by
    $$\begin{cases}
        \{R_i(h(\bx_i))\} & \text{if } \bx_i \cap \bx = \bx_i,\\
        \emptyset & \text{if } \bx_i \cap \bx \neq \bx_i.
    \end{cases}$$
    and $R^v_i(h(\bx_i\cap \bx_v))$ likewise by 
    $$\begin{cases}
        \{R_i(h(\bx_i))\} & \text{if } \bx_i \cap \bx_v = \bx_i \text{ and } \bx_i\cap \bx \neq \bx_i,\\
        \emptyset & \text{if } \bx_i \cap \bx_v \neq \bx_i \text{ or } \bx_i\cap \bx = \bx_i.
    \end{cases}$$
    We note that by construction for every tuple $R_i(h(\bx_i))\in D$ such that $\bx_i\subseteq \bx$ the tuple $R^{\bx}_i(h(\bx_i))$ is the only tuple annotated by $R_i(h(\bx_i))$.
    On the other hand, for tuples $R_i(h(\bx_i))\in D$ such that $\bx_i\not\subseteq \bx$ there is exactly one $v\in V(T_{\bx})$ such that $\bx_i\subseteq \bx_v$.
    Consequently always exactly one tuple is annotated by $R_i(h(\bx_i))$, i.e., the tuple $R^{\bx}_i(h(\bx_i))$ or $R^v_i(h(\bx_i))$.
    Then, we can compute the annotations of the tuples $Q^v(h(\chi(v)))\in\sem{Q^v}(D^{\bx})$ by evaluating the queries $Q^v$ interpreted as sum-product queries~\cite{DBLP:conf/pods/KhamisNR16, DBLP:journals/jcss/PichlerS13}.
    As operations in the semi-ring might take up to $O(|D|)$ time, this can be done in 
    $$O(|Q|\cdot|D|^{fhw(T_{v},\chi|_{V(T_v)})+1}) = O(|Q|\cdot|D|^{fhw(T,\chi)+1})$$ 
    time for every $v\in V(T_{\bx})$.

    Now, note that tuples $R_i(h(\bx_i))$ where $\bx_i\not\subseteq \bx$ are part of the annotation of a $R^{\bx}_{v}$-tuple when it contributed to that tuple being in $D^{\bx}$, i.e., when it is part of its provenance.
    However, still, every (now not only for $\bx_i\not\subseteq \bx$) tuples $R_i(h(\bx_i))$ is part of the annotations of at most one tuple type used in query $Q^{\bx}$, i.e., either of a $R^{\bx}_i$- or a $R^{\bx}_v$-tuple type.
    We say that we can uniquely assign $R_i$ to an atom of $Q^{\bx}$

    Further, we could interpret $Q^{\bx}$ as a sum-product query and annotate $Q^{\bx}(h(\bx))\in \sem{Q^{\bx}}(D^{\bx})$ analogously.
    Then, note that with this construction and interpreting the monomials as sets of tuples, we get $\betaQD(Q(h(\bx)))$ for $Q(h(\bx))\in \sem{Q}(D)$ is equal to the annotation of $Q^{\bx}(h(\bx))\in\sem{Q^{\bx}}(D^{\bx})$.
    We extend $\betaQD$ to $Q^{\bx}$-tuples and define $\betaQD(Q^{\bx}(h(\bx))) = \betaQD(Q(h(\bx)))$.
    Then, for a fixed set of answers $\{Q(h_i(\bx))\mid i\}\subseteq \sem{Q}(D)$ and arbitrary single $Q(h(\bx))\in \sem{Q}(D)$, computing the marginal diversity is very easy: 
    We simply compute once $\betaQD(\{Q^{\bx}(h_i(\bx))\mid i\}):=\bigcup_{i}\betaQD(Q^{\bx}(h_i(\bx)))$, i.e., we take the union of the annotations in $\sem{Q^{\bx}}(D^{\bx})$,
    and the marginal diversity is then
    $$|\betaQD(Q^{\bx}(h(\bx)))\setminus \betaQD(\{Q^{\bx}(h_i(\bx))\mid i\})|.$$
    Thus, this is doable in $T_{\VQD}=O(|D|)$ time.

    Now, we aim to prove that $\betaQD$ is compatible with $(T_{\bx},\chi|_{V(T_{\bx})})$.
    If we manage to do that, we can apply Theorem \ref{thm:appeval} to complete the proof.
    To that end, denote $T_{v,\bx}$ the subtree of $T_{\bx}$ rooted in $v\in V(T_{\bx})$.
    Thus, we have to show for every $v\in V(T_{\bx})$ that $\betaQD$ is $(\chi(T_{v,\bx})\setminus \texttt{key}(v))$-decomposable conditioned on $\texttt{key}(v)$.

    Note that for any atom of $Q^{\bx}$, the atom is either covered by $\texttt{key}(t)$, by $T_{v,\bx}$ or by $T_{\bx}-T_{v,\bx}$.
    With this we mean that for atoms $R^{\bx}_i(\bx_i\cap \bx)$ either:
    \begin{enumerate}
        \item $\bx_i\cap \bx\subseteq \texttt{key}(v)$,
        \item $\bx_i\cap \bx\subseteq \bigcup_{u\in V(T_{v,\bx})}\chi(u)$ and $\bx_i\cap \bx\not \subseteq \texttt{key}(v)$, or \label{app:type}
        \item $\bx_i\cap \bx\subseteq \bigcup_{u\in V(T_{\bx})\setminus V(T_{v,\bx})}\chi(u)$ and $\bx_i\cap \bx\not \subseteq \texttt{key}(v)$.
    \end{enumerate}
    We call this the type of $R^{\bx}_i$ below.
    The same holds for $R^{\bx}_u(\chi(u))$ with $\chi(u)$ used instead of $\bx_i\cap \bx$.
    Then, let $f$ be a homomorphism over $\texttt{key}(v)$, $h,h'$ be homomorphisms over $\chi(T_{v,\bx})\setminus \texttt{key}(v)$, and $g$ be a homomorphism over $\bx \setminus \chi(T_{v,\bx})$.
    Then, 
    \begin{align}
        \betaQD(Q^{\bx}((f\cup h \cup g)(\bx)))\setminus \betaQD(Q^{\bx}((f\cup h' \cup g)(\bx))) =: \betaQD(h,h')     \label{app:betahh}       
    \end{align}
    are the $R_i$-tuples that contribute to $Q((f\cup h \cup g)(\bx))$ being in $\sem{Q}(D)$ and where $\bx_i\cap \bx\subseteq \chi(T_{v,\bx})$ but also $\bx_i\cap \bx\not \subseteq \texttt{key}(v)$.
    Put differently, these stem from atoms $R^{\bx}_i(\bx_i\cap \bx), R^{\bx}_u(\chi(u))$ of type $\eqref{app:type}$.
    Then, as we can uniquely assign every $R_i$ to an atom of $Q^{\bx}$, this value $\betaQD(h,h')$ given in Equation \eqref{app:betahh} is independent of the choice of $g$.
    Consequently, $\betaQD$ is compatible with $(T_{\bx},\chi|_{V(T_{\bx})})$.

    Now, applying Theorem \ref{thm:appeval}, we can $(1-\nicefrac{1}{e})$-approximate a $k$-diversity set $\{Q^{\bx}(h_i(\bx))\mid i\}\subseteq \sem{Q^{\bx}}(D^{\bx})$ in time $$O(|Q|\cdot |D|^{fhw(T_{\bx},\chi|_{V(T_{\bx})})+1}\cdot k)=O(|Q|\cdot |D|^{fhw(T,\chi)+1}\cdot k).$$
    Then, as $\delta_{\VQD}(\{Q^{\bx}(h_i(\bx))\mid i\})=\delta_{\VQD}(\{Q(h_i(\bx))\mid i\})$ for any subsets of $\sem{Q^{\bx}}(D^{\bx})$ and $\sem{Q}(D)$, respectively.
    Thus, $S:=\{Q(h_i(\bx))\mid i\}\subseteq \sem{Q}(D)$ is a $(1-\nicefrac{1}{e})$-approximate $k$-diversity set.
    The total time used is $O(|Q|\cdot |D|^{fhw(T,\chi)+1}\cdot k)$ as claimed.
\end{proof}

\end{document}